\newcommand{\secspace}{\vspace{-2mm}}
\newcommand{\subsecspacea}{\vspace{-3mm}}
\newcommand{\subsecspaceb}{\vspace{-1mm}}
\newcommand{\new}{}
\newcommand{\CGAL}{\textsc{Cgal}\xspace}
\newcommand{\CC}{C\raise.08ex\hbox{\tt ++}\xspace}
\newcommand{\NN}{nearest-neighbor\xspace}
\newcommand{\PL}{point location\xspace}
\newcommand{\PPL}{planar point location\xspace}
\newcommand{\lqpl}{$\cal{L}$\xspace} 
\newcommand{\depth}{$\cal{D}$\xspace} 
\newcommand{\NAME}{ENNRIC\xspace}
\newcommand{\N}{\ensuremath{\mathbb{N}}}
  \renewenvironment{thebibliography}[1]{%
    \begin{oldthebibliography}{#1}%
      \footnotesize
      \setlength{\parskip}{0ex}%
      \setlength{\itemsep}{0ex}%
  }%
  {%
    \end{oldthebibliography}%
  }
\newcommand{\ignore}[1]{}
\newcommand{\first}[2]{#1}
\newcommand{\second}[2]{#2}
\def\marrow{\marginpar[\hfill$\longrightarrow$]{$\longleftarrow$}}
\def\dan#1{\textcolor{red}{\textsc{Danny says: }{\marrow\sf #1}}}
\def\mike#1{\textcolor{blue}{\textsc{Mike says: }{\marrow\sf #1}}}
\def\michal#1{\textcolor{cyan}{\textsc{Michal says: }{\marrow\sf #1}}}
\newcommand{\dan}[1]{}
\newcommand{\mike}[1]{}
\newcommand{\michal}[1]{}
\begin{document}


\mainmatter

\title{Improved  Implementation of Point Location in General Two-Dimensional Subdivisions%
\thanks{This work has been supported in part by the 7th Framework
Programme for Research of the European Commission, under
FET-Open grant number 255827 (CGL---Computational Geometry
Learning), by the Israel Science Foundation (grant no.
1102/11), and by the Hermann Minkowski--Minerva Center for
Geometry at Tel Aviv University.}}
\author{Michael Hemmer \and Michal Kleinbort \and Dan Halperin}

\tocauthor{%
Michael Hemmer (Tel-Aviv University) ,
Michal Kleinbort (Tel-Aviv University),
Dan Halperin (Tel-Aviv University)}

\institute{
Tel-Aviv University, Israel \\
}

\maketitle

\begin{abstract}
We present a major revamp of the point-location data structure for
general two-dimensional subdivisions via randomized incremental
construction,
implemented in  \CGAL, the Computational Geometry Algorithms Library.
We can now guarantee that the constructed
directed acyclic graph~$\cal G$ is of linear size and
provides logarithmic query time.
Via the construction of the Voronoi diagram for a given point set S of size n,
this also enables nearest-neighbor queries in guaranteed $O(\log n)$ time.
%
%
Another major innovation is the support of general unbounded
subdivisions as well as subdivisions of two-dimensional
parametric surfaces such as spheres, tori, cylinders.
The implementation is exact, complete, and
general, i.e., it can also handle non-linear subdivisions.
Like the previous version, the data structure supports
modifications of the subdivision,
such as insertions and deletions of edges,
after the initial preprocessing.
A major challenge is to retain the expected~$O(n \log n)$
preprocessing time while providing the above (deterministic)
space and query-time guarantees.
We describe an efficient preprocessing algorithm, which explicitly 
verifies the length~\lqpl of the longest query path in $O(n \log n)$ time.
However, instead of using~\lqpl, our implementation is based
on the depth~\depth of~$\cal G$.
Although we prove that the worst case ratio
of~\depth and~\lqpl is~$\Theta(n/\log n)$,
we conjecture, based on our experimental results,
that this solution achieves expected~$O(n \log n)$ preprocessing time.
%
\end{abstract}


\mike{Find a name to address the RIC. RIC is too general as there are many more algorithms that are randomized incremental constructions. }
\mike{Find a name for our NN implementation via Voronoi Diagram. Danny proposed ENNRICH. However, the H mean hierarchy and is not intuitive to me. RIC is debatable anyway, see above. ideas? }

\secspace\section{Introduction} \secspace
\label{sec:introduction}

Birn et al.~\cite{SANDERS2010_simp_fast_nn}  presented
a structure for planar \NN queries, based on Delaunay triangulations,
named Full Delaunay Hierarchies (FDH).
The FDH is a very simple, and thus light, data structure that is also very
easy to construct. It outperforms many other methods in several scenarios,
but it does not have a worst-case optimal behavior.
However, it is claimed~\cite{SANDERS2010_simp_fast_nn} that methods that do have this behavior are too
cumbersome to implement and thus not available. We got challenged by this claim.

In this article we present an improved version of \CGAL's planar point location that
implements the famous incremental construction (RIC) algorithm
as introduced by
Mulmuley~\cite{Mulmuley1990_fast_planar_part_alg} and
Seidel~\cite{Seidel1991_simp_fast_inc_rand_td}.
The algorithm constructs 
a linear size data structure that guarantees a logarithmic query time.
It enables \NN queries in guaranteed $O(\log n)$ time via
\PPL in the Voronoi Diagram of the input points.
In Section~\ref{sec:nn}  we compare our revised implementation for point location,
applied to nearest neighbor search, against the FDH.
Naturally, this is only a byproduct of our efforts
as \PPL is a very fundamental problem in
Computational Geometry.
It has numerous applications in a variety of domains including
computer graphics, motion planning, computer aided design (CAD) and
geographic information systems (GIS).

\textbf{Previous Work:}
Most solutions can only provide an expected query time of $O(\log{n})$
but cannot guarantee it, in particular, those that only require $O(n)$ space.
Some may be restricted to static scenes that do not change, while others
can only support linear geometry.


Triangulation-based \PL methods, such as the approaches by
Kirkpatrick~\cite{Kirkpatrick1983_opt_stch_subdivisions} and
Devillers~\cite{Devillers2002_del_hierarchy} combine a logarithmic
hierarchy with some walk strategy.
Both
require only linear space and Kirkpatrick can even
guarantee logarithmic query time.
However, both are restricted to linear geometry, since they build on
a triangulation of the actual input.

Many methods can be summarized under the model of the trapezoidal
search graph as pointed out by Seidel and Adamy~\cite{SeidelA2000_wc_query_comp}.
Conceptually, the initial subdivision is further subdivided into trapezoids
by emitting vertical rays (in both directions) at every endpoint of the input,
which is the fundamental search structure.
In principal, all these solutions can be generalized to support input curves that are
decomposable into a finite number of $x$-monotone pieces.

The \emph{slabs method} of Dobkin and
Lipton~\cite{DobkinL1976_multidim_srch_prblm} is one of the earliest examples.
%
%
Every endpoint induces a vertical wall giving rise to $2n+1$ vertical slabs.
A point location is performed by a binary search to locate the correct slab
and another search within the slab in $O(\log n)$ time.
Preparata~\cite{Preparata1981_trpz_graph_pl} introduced a method that
avoids the decomposition into $n+1$ slabs reducing the required space from
$O(n^2)$ to $O(n\log{n})$.
Sarnak and Tarjan~\cite{SarnakT1986_pl_perst_trees} went back to the
slabs of Dobkin and Lipton and added the idea of persistent data structures,
which reduced the space consumption to $O(n)$.
Another example for this model is the separating chains method of
Lee and Preparata~\cite{LeeP1976_sep_chains_pl}.
Combining it with fractional cascading,
Edelsbrunner et~al.~\cite{EdelsbrunnerGS1986_opt_pl_mon_subdiv},
achieved $O(\log{n})$ query time as well.
For other methods and variants the reader is referred to
a comprehensive overview given in~\cite{Snoeyink2004_handbook_disc_comp_geom_pl}.



An asymptotically optimal solution is the randomized incremental
construction (RIC), which was introduced by
Mulmuley~\cite{Mulmuley1990_fast_planar_part_alg}
and Seidel~\cite{Seidel1991_simp_fast_inc_rand_td}.
In the static setting, it achieves $O(n\log{n})$ preprocessing time,
$O(\log{n})$ query time and $O(n)$ space, all in expectancy.
As pointed out in~\cite{CG-alg-app}, 
the latter two can even be worst-case guaranteed.
It is also claimed there that one can achieve these worst-case bounds 
in an expected preprocessing time of
$O(n\log^2{n})$, but no concrete proof is given.
The approach is able to handle dynamic scenes;
that is, it is possible to add or delete edges later on.
This method is discussed in more detail in Section~\ref{sec:RIC}.


\ignore{ 
A variant by Arya et al.~\cite{AryaMM2001_simp_entrpy_alg}
of the RIC adds weights and thus gives expected query timer
satisfying entropy bounds.
They also state that entropy preserving cuttings can be
used to give a method whose query time approaches the optimal entropy bound,
at the cost of increased space and programming complexity~\cite{AryaMM2001_entrpy_cutt}.
These methods guarantee a logarithmic query time, however maintaining the search structures
require a large amount of memory and a major increase in the preprocessing time.
Also the overall complexity of implementing these solutions is rather high.
Thus, we  not consider these solutions in our implementation.
}

{\bf Contribution:}
We present here a major revision of the trapezoidal-map random incremental
construction algorithm for \PPL in \CGAL. 
As the previous implementation, it provides a linear size data structure
for non-linear subdivisions that can handle static as well as dynamic scenes.
The new version is now able to guarantee $O(\log n)$ query time and $O(n)$ space.
Following recent changes in the ``2D Arrangements" package~\cite{bfhmw-scmtd-07},
the implementation now also supports unbounded subdivisions as
well as ones that are embedded on two-dimensional parametric surfaces.
After a review of the RIC in Section~\ref{sec:RIC}, we discuss,
in Section~\ref{sec:depth_vs_longest_query}, the difference
between the length~\lqpl of the longest search path and the depth~\depth of the DAG.
We prove that the worst-case ratio of~\depth and~\lqpl is~$\Theta(n/\log n)$.
Moreover, we describe two algorithms for the preprocessing stage
that achieve guaranteed $O(n)$ size and $O(\log n)$ query time.
Both are based on a verification of \lqpl after
the DAG has been constructed:
An implemented one that runs in expected $O(n \log^2 n)$ time,
and a more efficient one that runs in expected
$O(n\log n)$ time. The latter is a very recent addition
that was not included in the reviewed submission.
However, the solution that is integrated into CGAL is based on a
verification of~\depth.
Based on our experimental results, we conjecture that it
also achieves expected $O(n \log n)$ preprocessing time.
%
Section~\ref{sec:nn} demonstrates the performance of the
new implementation by comparing
our point location in a Voronoi Diagram with the nearest neighbor
implementation of the FDH and others.
Section~\ref{sec:impl_details} presents more details on the new
implementation. 
To the best of our knowledge, this is the only available
implementation for guaranteed logarithmic query time point location
in general two-dimensional subdivisions.  
\secspace\section{Review of the RIC for Point Location}\secspace
\label{sec:RIC}

We review here the random incremental construction (RIC) of an efficient \PL structure,
as introduced by~\cite{Mulmuley1990_fast_planar_part_alg,Seidel1991_simp_fast_inc_rand_td}
and described in~\cite{CG-alg-app,CG-intro-rand-alg}.
For ease of reading we discuss the algorithm in case the input is in general position.
Given an arrangement of $n$ pairwise interior disjoint $x$-monotone curves,
a random permutation of the curves is inserted incrementally,
constructing the Trapezoidal Map,
which is obtained by extending vertical walls
from each endpoint upward and downward
until an input curve is reached
or the wall extends to infinity.
During the incremental construction,
an auxiliary search structure, a directed acyclic graph (DAG), is maintained.
It has one root and many leaves, one for every trapezoid in the trapezoidal map.
Every internal node is a binary decision node, representing either
an endpoint $p$, deciding whether a query lies
to the left or to the right of the vertical line through $p$,
or a curve, deciding if a query is above or below it.
When we reach a curve-node, we are guaranteed that
the query point lies in the $x$-range of the curve.
The trapezoids in the leaves are interconnected, such that each trapezoid knows
its (at most) four neighboring trapezoids, two to the left and two to the right.
In particular, there are no common $x$-coordinates for two distinct
endpoints\footnote{In the general case all endpoints are lexicographically compared;
first by the $x$-coordinate and then by the $y$-coordinate.
This implies that two covertical points produce
a virtual trapezoid, which has a zero width.}.

When a new $x$-monotone curve is inserted,
the trapezoid containing its left endpoint is located
by a search from root to leaf.
Then, using the connectivity mechanism described above, the trapezoids intersected by the curve are
gradually revealed and updated.
Merging new trapezoids, if needed,
takes time that is linear in the number of intersected trapezoids.
The merge makes the data structure become a DAG (as illustrated in Figure~\ref{fig:basic_RIC_example})
with expected $O(n)$ size,
instead of an $\Omega(n\log{n})$ size binary tree~\cite{SeidelA2000_wc_query_comp}.
For an unlucky insertion order the size of the resulting data structure
may be quadratic, and the longest search path may be linear.
However, due to the randomization one can expect $O(n)$ space,
$O(\log{n})$ query time, and $O(n\log{n})$ preprocessing time.

\ignore{
As shown in~\cite{CG-alg-app}, this can even
be guaranteed by the following modification. While the structure is built,
one observes its size and the length of the longest search path~\lqpl.
As soon as one of the values becomes too large the
structure is rebuilt using a different random insertion order.

It is easy to show that only a small constant number of
rebuilds is expected~\cite{CG-alg-app}.
All proofs,~\cite{CG-alg-app,CG-intro-rand-alg},
are with respect to~\lqpl.
Thus, in order to retain the expected construction time of $O(n\log n)$
this value must be efficiently accessible. At present, it is not clear
how this can be achieved.
It is trivial to observe the
depth~\depth of the DAG, which is only an upper bound on~\lqpl.
Indeed, the original implementation of \CGAL's
RIC point location keeps track of~\depth to decide when to rebuild.
A detailed discussion of this problem can be found
below in Section~\ref{sec:depth_vs_longest_query}.
}

\begin{figure}[t]
  \vspace{-20pt}
  \begin{center}
    \begin{tabular}{cc}
      \includegraphics[width=0.48\textwidth]{./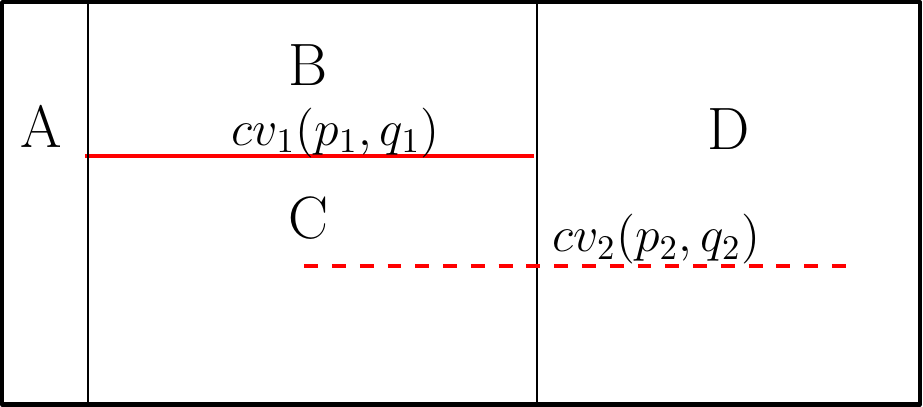} &
      \includegraphics[width=0.48\textwidth]{./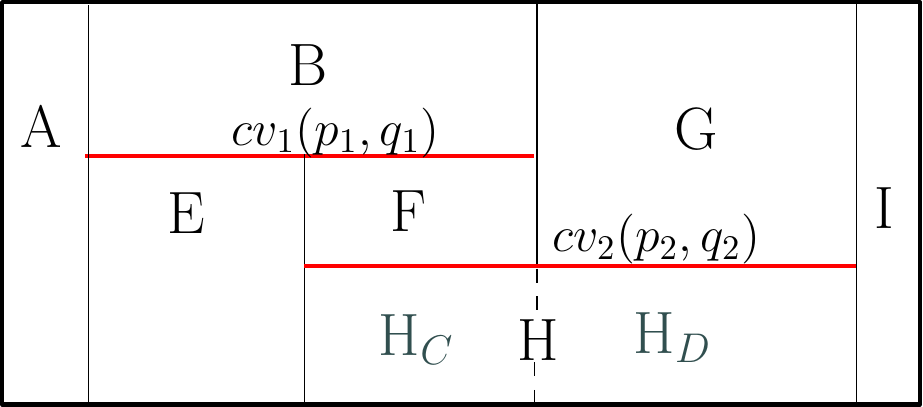} \\
      \includegraphics[width=0.48\textwidth]{./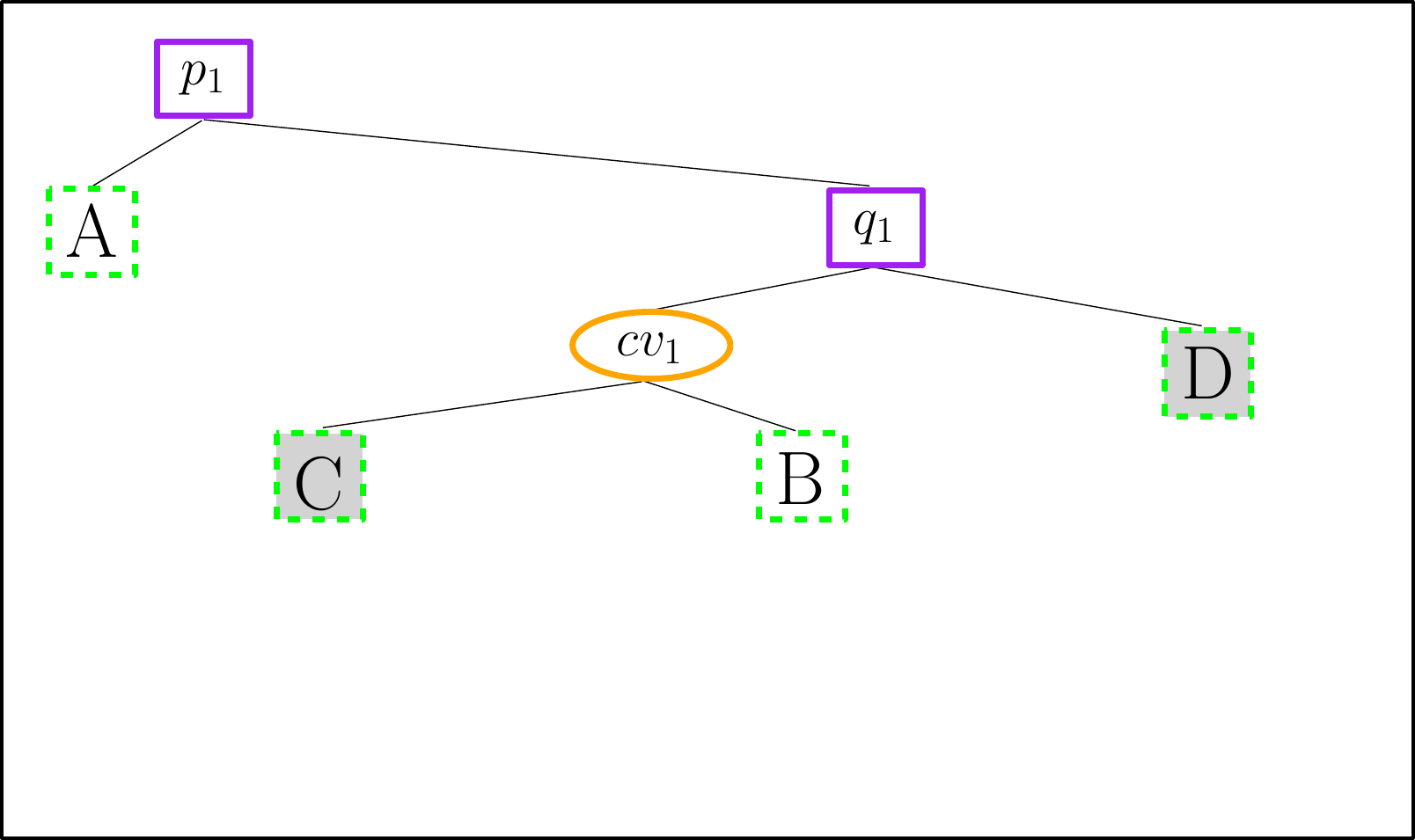} &
      \includegraphics[width=0.48\textwidth]{./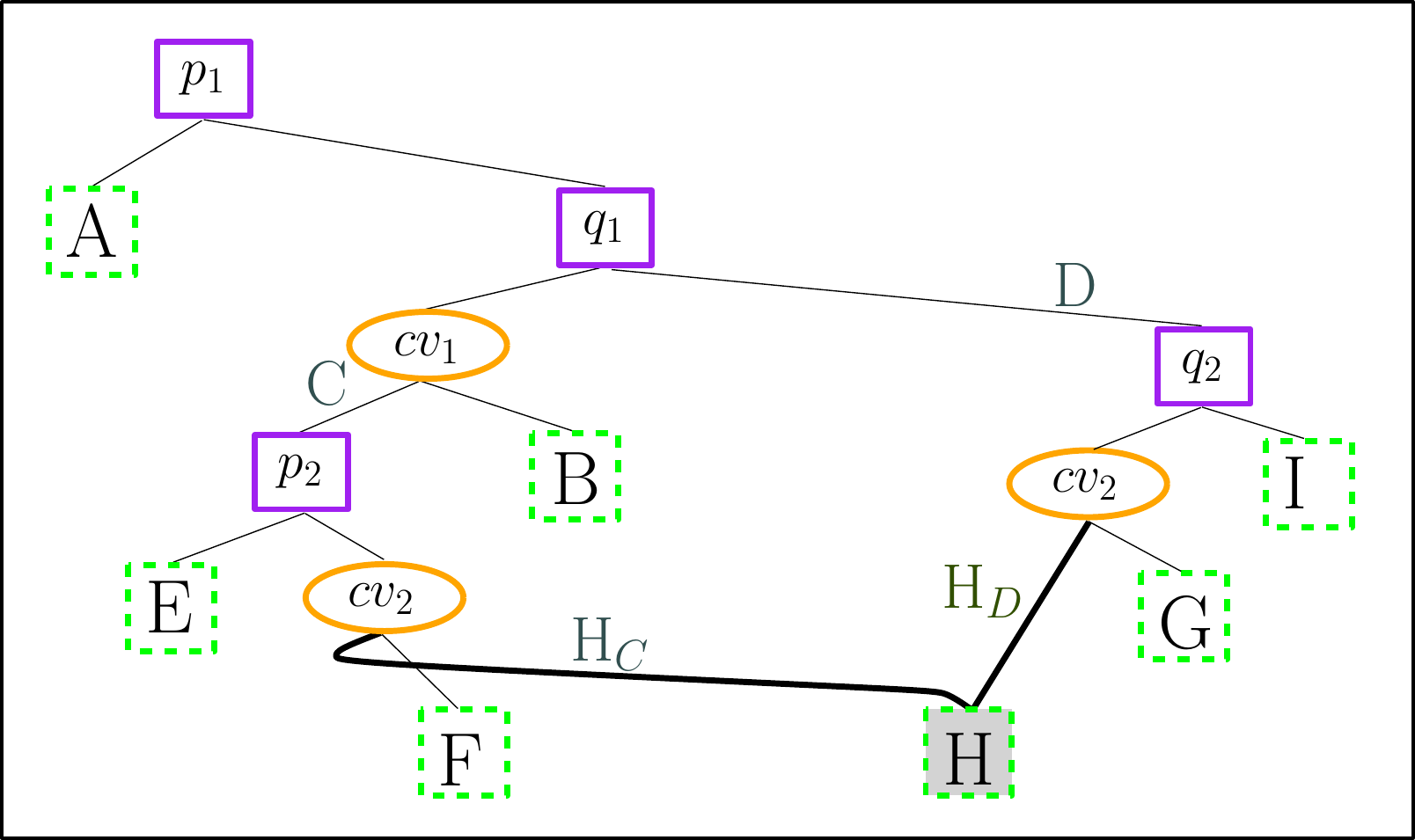} \\
      (a)&
      (b)
    \end{tabular}
  \end{center}
  \vspace{-10pt}
  \caption{
    Trapezoidal decomposition and the constructed DAG for two segments
    $cv_1$ and $cv_2$: 
    (a) before and (b) after the insertion of $cv_2$.
    The insertion of $cv_2$ splits the trapezoids $C,D$ into $E,F,H_C$ and $G,I,H_D$, respectively.
    $H_C$ and $H_D$ are merged into $H$, as they share the same top (and bottom) curves.
  }
\ignore{
  \caption{The trapezoidal decomposition and the constructed DAG
    for the segments $cv_1$ and $cv_2$ ($cv_i$ has endpoints $p_i$ and $q_i$).
    (a) After the insertion of the first curve $cv_1$.
    (b) After the insertion of the second curve $cv_2$.
    First, the trapezoid $C$ containing $p_2$, the left endpoint of $cv_2$, is found.
    Then its right neighbor $D$, intersected by $cv_2$, is reached.
    Trapezoids $A,B$ remain unchanged.
    Trapezoids $C,D$ are split into $E,F,H_C$ and $G,I,H_D$, respectively.
    $H_C$ and $H_D$ share the same top and bottom curves,
  therefore are merged into one trapezoid $H$.}
}
\vspace{-16pt}
\label{fig:basic_RIC_example}
\end{figure} 
\secspace\section{On the Difference between Paths and Search Paths}\secspace

\label{sec:depth_vs_longest_query}

As shown in~\cite{CG-alg-app}, one can build a data structure,
which guarantees $O(\log{n})$ query time and $O(n)$ size,
by monitoring the size and the length of the longest search path~\lqpl
during the construction.
\new The idea is that as soon as one of the values becomes too
large, the structure is rebuilt using a different random insertion order.
It is shown that only a small constant number of
rebuilds is expected.
However, in order to retain the expected construction time of $O(n\log n)$,
both values must be efficiently accessible.
While this is trivial for the size, it is not
clear how to achieve this for~\lqpl.
Hence, we resort to the depth~\depth of the DAG, which
is an upper bound on~\lqpl as the set of all possible
search paths is a subset of all paths in the DAG.
Thus, the resulting data
structure still guarantees a logarithmic query time.

The depth~\depth can be made accessible in constant time
by storing the depth of each leaf in the leaf itself, and maintaining
the maximum depth in a separate variable.
The cost of maintaining the depth can be charged to new nodes,
since existing nodes never change their depth value.
This is not possible for~\lqpl while retaining linear space,
since each leaf would have to store a non-constant number of values, i.e.,
one for each valid search path that reaches it.
In fact the memory consumption would be equivalent to the data structure that
one would obtain without merging trapezoids, namely the trapezoidal search tree,
which for certain scenarios requires $\Omega(n\log{n})$ memory as shown
in~\cite{SeidelA2000_wc_query_comp}.
In particular, it is necessary to merge as (also in practice) the sizes of the
resulting search tree and the resulting DAG considerably
differ.

In Section~\ref{ssub:depth_path_ratio} we show that for a given DAG
its depth~\depth can be linear while~\lqpl is still logarithmic, that is, such a
DAG would trigger an unnecessary rebuild. It is thus questionable whether
one can still expect a constant number of rebuilds when relying on~\depth.
\new Our experiments in Subsection~\ref{ssub:dept_path_ratio_exp} show
that in practice the two values hardly differ,
which indicates that it is sufficient to rely on~\depth.
%
%
However, a theoretical proof to consolidate this is still missing.
Subsection~\ref{ssec:static_sub_sol} provides efficient
preprocessing solutions for the static scenario (where all segments
are given in advance). As such, we see it as a concretization of, and
an improvement over, the claim mentioned in~\cite{CG-alg-app}.


\subsecspacea\subsection{Worst Case Ratio of Depth and Longest Search Path}\subsecspaceb
\label{ssub:depth_path_ratio}

\begin{wrapfigure}{r}{0.547\textwidth}
  \vspace{-24pt}
  \includegraphics[width=0.546\textwidth]{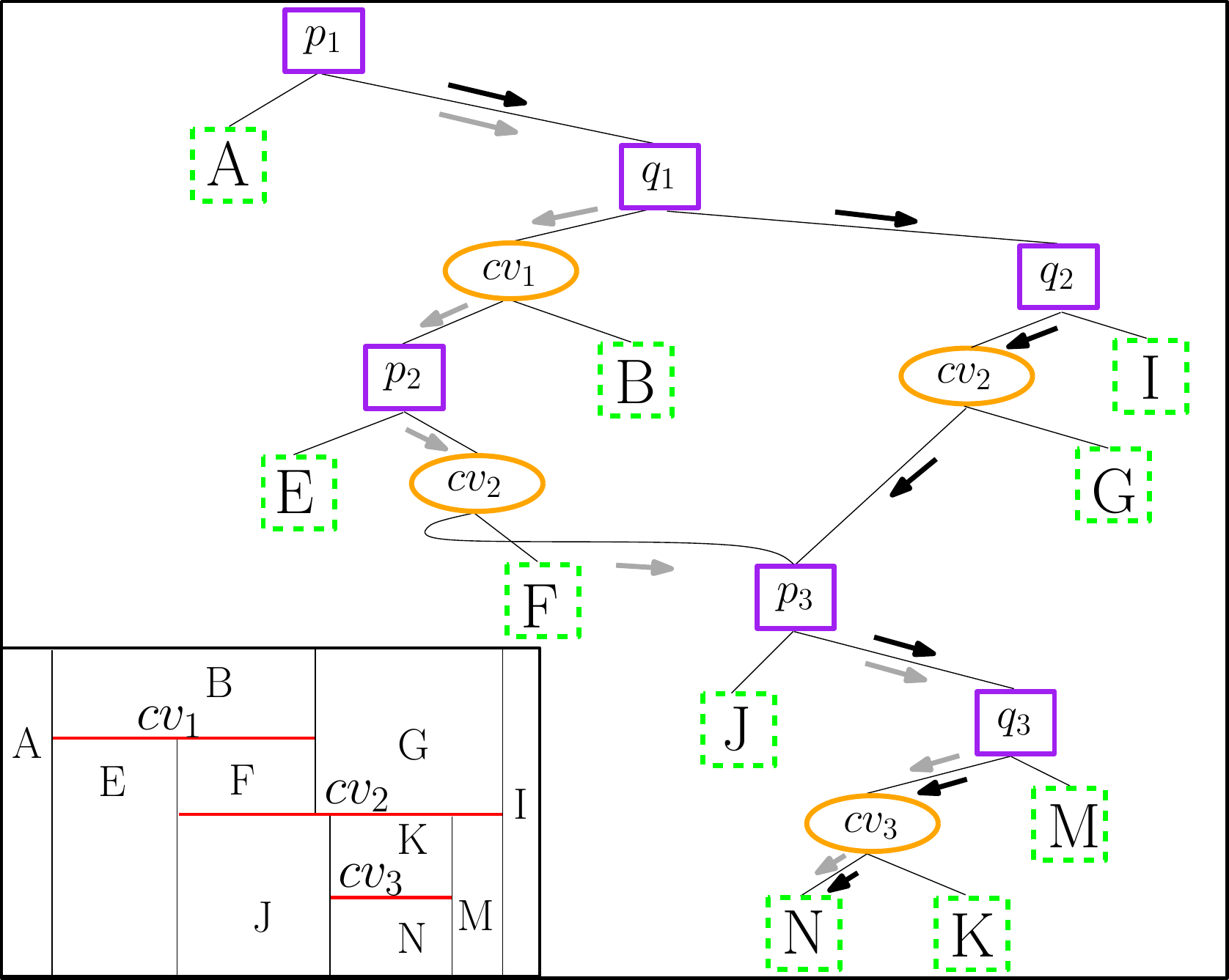}
  \vspace{-30pt}
  \label{fig:dag_paths}
\end{wrapfigure}
The figure to the right shows the DAG of
Figure~\ref{fig:basic_RIC_example} after inserting a
third segment. There are two paths that reach the trapezoid
$N$ (black and gray arrows).
However, the gray path is not a valid search path, since
all points in $N$ are to the right of~$q_1$; that is, such a search
would never visit the left child of $q_1$. It does, however,
determine the depth of~$N$, since it is the longer path of the two.
In the sequel we use this observation to construct an example
that shows that the ratio between~\depth and~\lqpl can be
as large as $\Omega(n/\log n)$.
Moreover, we will show that this bound is tight.

\begin{wrapfigure}{r}{0.4\textwidth}
  \vspace{-22pt}
    \includegraphics[width=0.4\textwidth]{./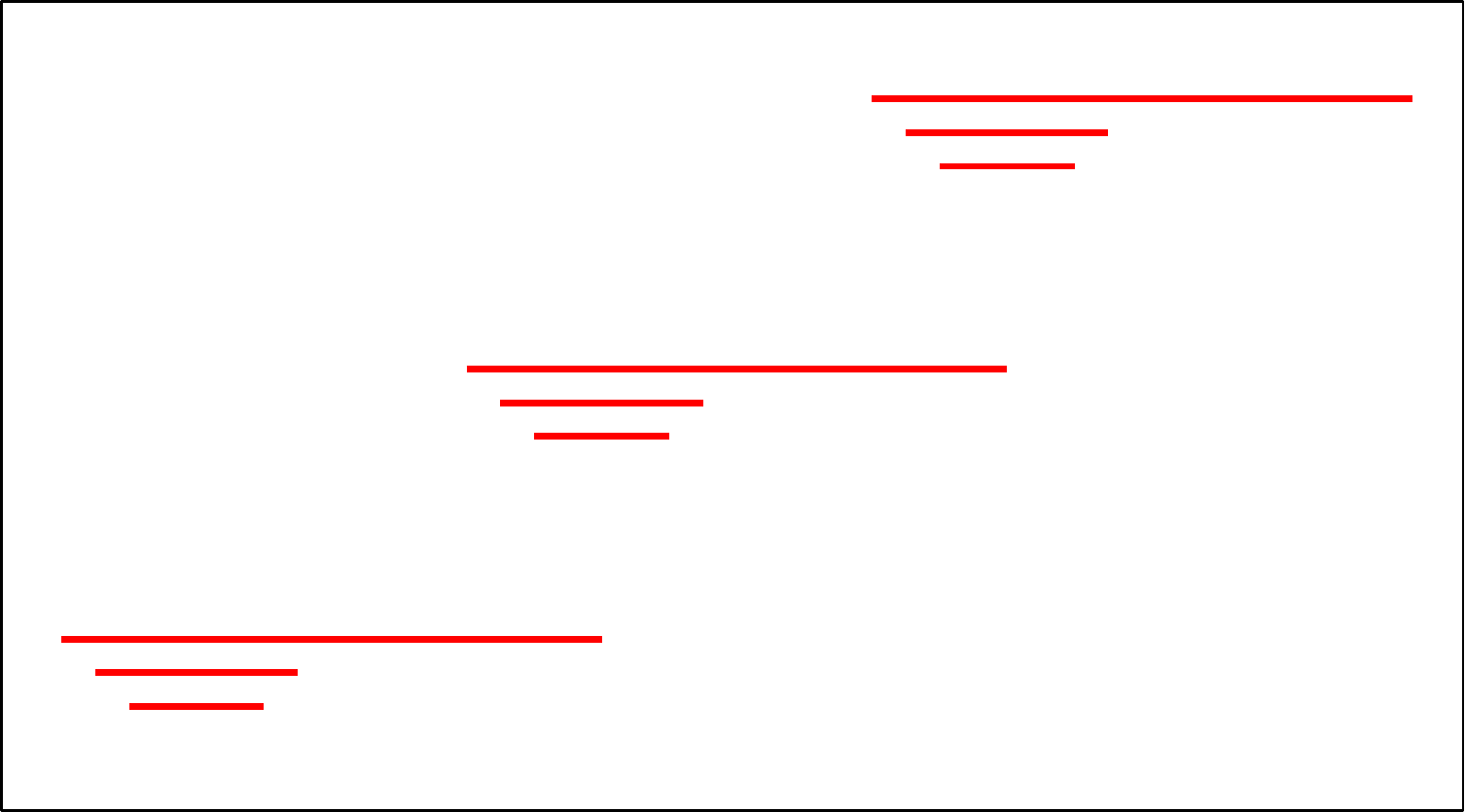}
  \vspace{-32pt}
  \label{fig:sqrt_query_linear_depth}
\end{wrapfigure}
We start by constructing a simple-to-demonstrate lower bound that achieves $\Omega(\sqrt{n})$
ratio between \depth and \lqpl.
Assuming that $n = k^2 \in \N$, the construction consists of $k$ blocks,
each containing $k$ horizontal segments.
The blocks are arranged as depicted in the figure to the right.
Segments are inserted from top to bottom.
A block starts with a large segment at the top, which we call the {\it cover segment},
while the other segments successively shrink in size.
Now the next block is placed to the left and below the previous block.
Only the cover segment of this block extends below the previous block,
which causes a merge as illustrated in Figure~\ref{fig:sqrt_query_linear_depth_before_after}.
All $k=\sqrt{n}$ blocks are placed in this fashion.
This construction ensures that each newly inserted segment intersects the
trapezoid with the largest depth, which increases~\depth.
The largest depth of $\Omega(n)$ is finally achieved in the
trapezoid below the lowest segment.
However, the actual search path into this trapezoid
has only $O(\sqrt{n})$ length, since for each previous block it only passes
through one node in order to skip it and $O(\sqrt{n})$ in the last block.



\begin{figure}[t]
    \centering
    \setlength{\tabcolsep}{3pt}
    \begin{tabular}{c c}
        \includegraphics[height=5.3cm]{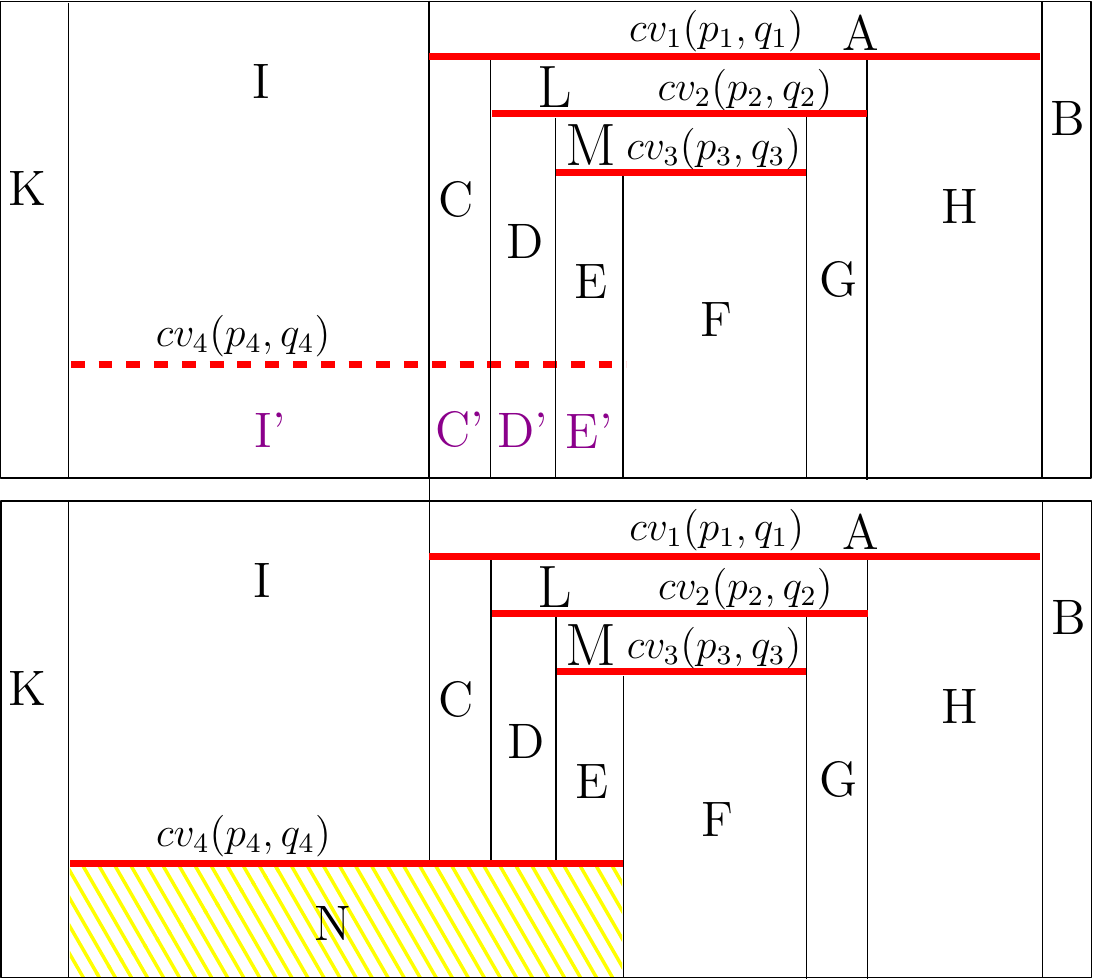} &
        \includegraphics[height=5.3cm]{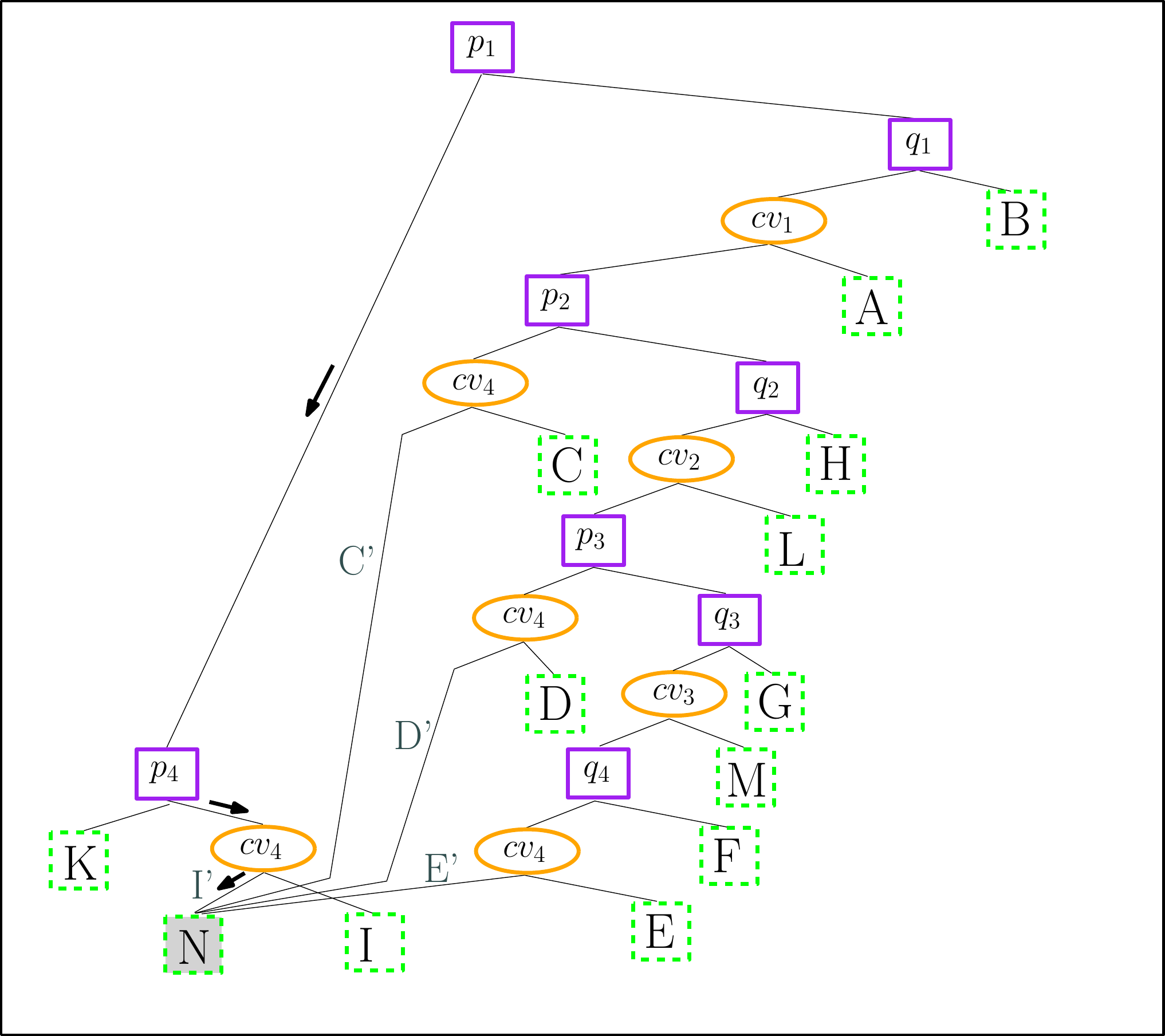}\\
        (a) &
        (b)
    \end{tabular}
    \vspace{-5pt}
    \caption{(a) The trapezoidal-map after inserting $cv_4$.
    The map is displayed before and after the
    merge of $I'$, $C'$, $D'$, and $E'$ into $N$,
    in the top and bottom illustrations, respectively.
    A query path to the region of $I'$ in $N$ will take 3 steps,
    while the depth of $N$ in this example is 11.}
    \label{fig:sqrt_query_linear_depth_before_after}
    \vspace{-10pt}
\end{figure}

\begin{wrapfigure}{r}{0.4\textwidth}
  \vspace{-32pt}
  \begin{center}
    \includegraphics[width=0.4\textwidth]{./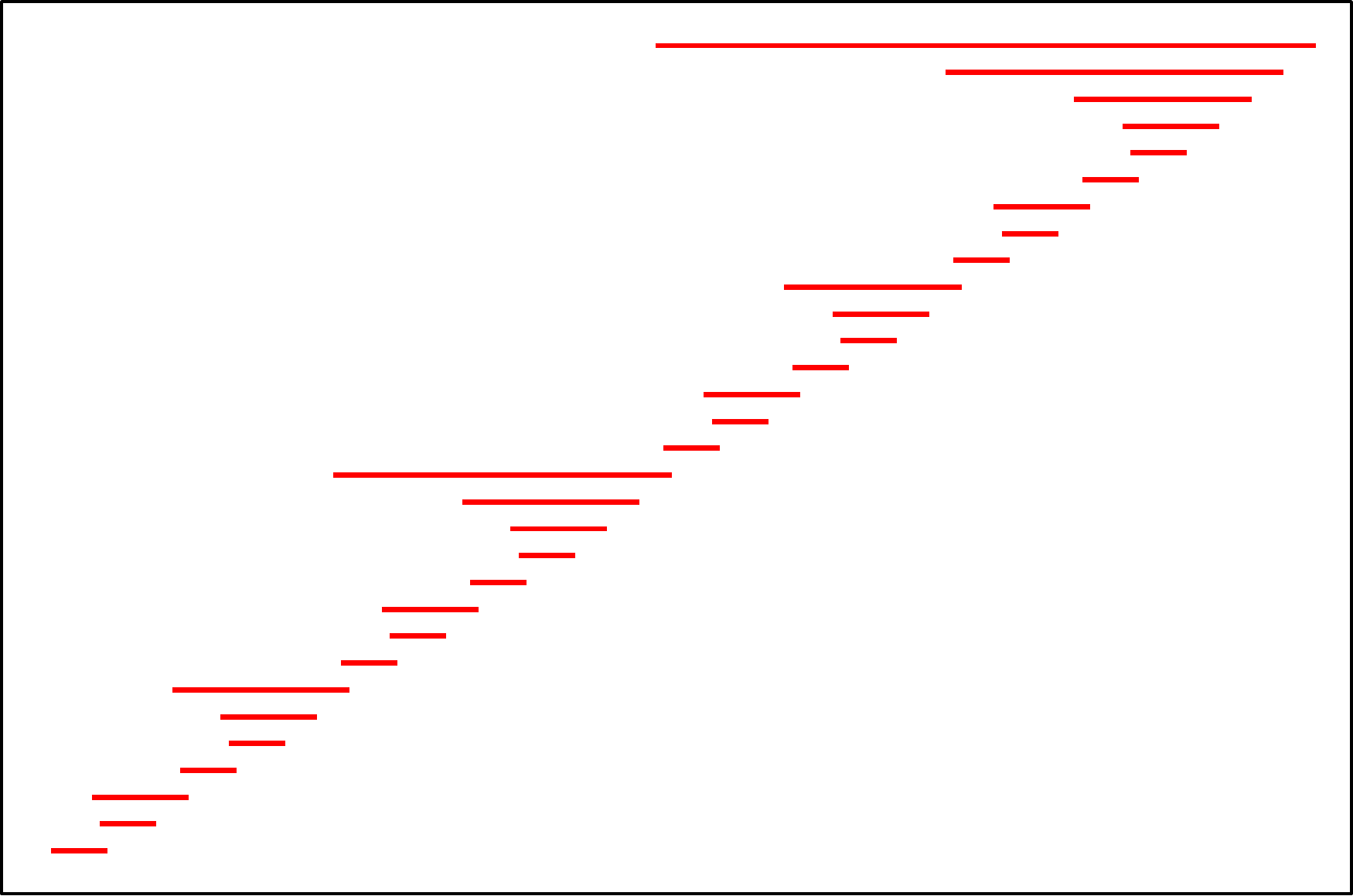}
  \end{center}
  \vspace{-40pt}
  \label{fig:depth_vs_log_qlen}
\end{wrapfigure}
The following construction, which uses a recursive scheme,
establishes the lower bound $\Omega(n/\log{n})$ for~\depth/\lqpl.
Blocks are constructed and arranged in a similar fashion as in the
previous construction. However, this time we have $\log n$ blocks,
where block $i$ contains $n/2^i$ segments. Within each block
we then apply the same scheme recursively as depicted in the
figure to the right.
Again segments are inserted top to bottom such that
the depth of $\Omega(n)$ is achieved in the trapezoid below the
lowest segment.
\second{
The fact that the lengths of
all search paths are logarithmic can be easily proven
by applying a simple induction argument.
}{%
The fact that the lengths of all search paths are logarithmic
can be proven by the following argument.
By induction we assume that the longest path within a block of
size $n/2^i$ is some constant times $(\log_2{n} - i)$.
Obviously this is true for a block containing only one segment.
Now, in order to reach block $i$ with $n/2^i$ segments,
we require $i-1$ comparisons to skip the $i-1st$ preceding blocks.
Thus in total the search path is of logarithmic length.
}

\begin{theorem}
The $\Omega(n/\log{n})$ worst-case lower bound on~\depth/\lqpl is tight.
\end{theorem}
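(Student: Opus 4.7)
The lower bound $\depth/\lqpl = \Omega(n/\log n)$ has just been established by the recursive construction. It remains to prove the matching upper bound, namely that for every DAG~\calG built by the RIC on $n$ input curves one has $\depth/\lqpl = O(n/\log n)$. My plan is to bound the numerator and the denominator separately and then combine the two.

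For the numerator, I would show $\depth = O(n)$ by a straightforward induction on the number of inserted curves. When a new curve $c$ is inserted, every intersected trapezoid $T$ is turned into the root of a subtree of constant size: at most three internal decision nodes (for the endpoints of $c$ and for $c$ itself) and at most four leaves corresponding to the new trapezoids. Whenever two newly created sub-trapezoids are merged into a single leaf, that leaf acquires several parent curve-nodes, but each of these parents sits at the depth previously occupied by some $T$, so the depth of the merged leaf exceeds the previous depth of~\calG by only $O(1)$. Summing this per-insertion increase over $n$ insertions yields $\depth = O(n)$.

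For the denominator, I would establish $\lqpl = \Omega(\log n)$ by an information-theoretic argument. The trapezoidal map of $n$ pairwise interior-disjoint $x$-monotone curves contains $\Theta(n)$ trapezoids, each of which appears as a distinct leaf of~\calG. Even though~\calG is a DAG, for any query point the sequence of binary decisions from the root to the enclosing trapezoid is uniquely determined by the point, so the set of valid search paths forms a prefix code of $\Theta(n)$ codewords over $\{0,1\}$. Kraft's inequality then forces the longest codeword, i.e.,~\lqpl, to have length at least $\log_2\Theta(n) = \Omega(\log n)$.

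Combining the two bounds gives $\depth/\lqpl \leq O(n)/\Omega(\log n) = O(n/\log n)$, matching the lower bound construction. The main conceptual subtlety, and the one I expect to require the most care, is the distinction between arbitrary root-to-leaf paths in the DAG and the valid search paths: Kraft's inequality applies only to the latter, so the denominator bound hinges on the observation that every query routes deterministically to exactly one leaf regardless of how many paths connect the root to that leaf in~\calG.
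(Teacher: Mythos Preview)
Your proposal is correct and matches the paper's argument: the paper obtains $\depth = O(n)$ from the observation that each segment can appear at most once along any root-to-leaf path in the DAG (equivalent to your constant-depth-increase-per-insertion induction), and $\lqpl = \Omega(\log n)$ from a binary-decision-tree count of the $\Theta(n)$ final trapezoids (your Kraft/information-theoretic bound). The only nit is that a single trapezoid may admit several distinct valid search paths, so the prefix code has \emph{at least} $\Theta(n)$ codewords rather than exactly $\Theta(n)$---but this only strengthens your inequality.
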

\begin{proof}
Obviously,~\depth of $O(n)$ is the maximal achievable depth, since
by construction each segment can only appear once
along {\em any} path in the DAG.
It remains to show that for any scenario with $n$ segments
there is no DAG for which~\lqpl
is smaller than $\Omega(\log{n})$.
Since there are $n$ segments, there are at least $n$ different
trapezoids having these segments as their top boundary.
Let $T$ be a decision tree of the optimal search structure.
Each path in the decision tree corresponds to a valid search path
in the DAG and vice versa.
The depth of $T$ must be larger than $\log_2{n}$, since it is only
a binary tree.
We conclude that the worst case ratio of~\depth and~\lqpl is
$\Theta(n/\log{n})$.
\qed
\end{proof}

\subsecspacea\subsection{Efficient Solutions for Static Subdivisions}\subsecspaceb
\label{ssec:static_sub_sol}
We first describe an algorithm for static scenes that runs in
expected $O(n\log^2{n})$ time, constructing a DAG of linear size
in which \lqpl is $O(\log{n})$.
%
The result is based on the following lemma.


\begin{lemma}
\label{lemma:tree_size}
Let $S$ be a planar subdivision induced by $n$ pairwise interior disjoint $x$-monotone curves.
The expected size of the trapezoidal search tree $\cal T$,
which is constructed as the RIC above but without merges,
is $O(n\log{n})$.
\end{lemma}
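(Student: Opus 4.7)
The plan is to bound $|\cal T|$ by the total number of nodes added to it across all insertions. At step $i$, locating the left endpoint of $s_i$ is only a read-only traversal of $\cal T$, and walking along $s_i$ through $k_i$ pieces of the current non-merged subdivision (call it $U_{i-1}$) creates only a constant number of new leaves and decision nodes per visited piece. Hence
\[
  |\cal T| \;=\; O\!\left(1 + \sum_{i=1}^{n} k_i\right),
\]
and it suffices to show $E\bigl[\sum_{i=1}^{n} k_i\bigr] = O(n\log n)$.

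My approach would combine two ingredients. First, a backwards-analysis-style argument applied to the non-merged subdivision: every piece $P$ of $U_i$ is bounded by at most four input elements (top segment, bottom segment, and two endpoint-induced vertical walls), so averaging over the uniformly random choice of the last-inserted segment of $S_i$, the probability that $P$ was created at step $i$ is at most $4/i$. Since each piece of $U_{i-1}$ intersected by $s_i$ contributes a constant number of new pieces to $U_i$, this yields the bound $E[k_i] = O(E[|U_i|]/i)$. Second, for any fixed query point $q$, the probability that $s_j$ splits $q$'s current trapezoid is at most $4/j$---only one of the at most four input elements defining $q$'s final piece can cause such a split---so $q$'s expected search-path length in $\cal T$ is $\sum_j 4/j = O(\log n)$. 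Plugging the first estimate into the identity $E[|U_i|] = E[|U_{i-1}|] + \Theta(E[k_i])$ and using the path-length bound to discipline the resulting self-referential recurrence should then close the argument and give $E[|U_n|] = O(n\log n)$, and hence $E[|\cal T|] = O(n\log n)$.

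The main obstacle is that, unlike the merged trapezoidal map $T_i$, the non-merged subdivision $U_i$ is not determined by $S_i$ alone but depends on the full insertion history, so backwards analysis does not apply as cleanly as in the classical setting. A naive treatment of the recurrence between $E[|U_i|]$ and $E[k_i]$ yields only a polynomial bound on $E[|U_n|]$, and extracting the tight $O(n\log n)$ estimate hinges essentially on the auxiliary path-length analysis that controls how large the non-merged refinement can become in expectation.
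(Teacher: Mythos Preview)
Your proposal has a genuine gap that you yourself flag but do not close. The backwards-analysis step does not apply to the non-merged subdivision $U_i$: because $U_i$ depends on the full insertion order of $S_i$ and not merely on the set, the usual ``a piece $P$ disappears with probability $\le 4/i$ when a uniformly random segment of $S_i$ is removed'' argument is unavailable, and your inequality $E[k_i]=O(E[|U_i|]/i)$ is left unjustified. Even if one grants it, the recurrence $E[|U_i|]\le E[|U_{i-1}|](1+c/i)$ yields only $E[|U_n|]=O(n^{c})$, as you note, and you give no concrete mechanism by which the per-point $O(\log n)$ path-length bound tightens this to $O(n\log n)$. That bound controls the depth of any single leaf of ${\cal T}$, not the number of leaves, and it is the number of leaves---equal to $|U_n|$---that you must bound.

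The paper avoids the history-dependence problem by counting a different quantity. Since ${\cal T}$ is a binary tree, it suffices to bound the expected number of leaves, which in turn is at most twice the number of vertical edges in the final non-merged map plus one. For each endpoint of the $i$th inserted curve, its vertical wall at the moment of insertion is a single interval $I$ on the full vertical line $W$ through that endpoint; the only additional vertical edges arising from this endpoint come from curves inserted after step $i$ that cross $I$. The $i-1$ earlier curves cut $W$ into at most $i$ pieces, so a random one of the remaining $n-i$ curves lands in $I$ with probability $O(1/i)$, giving $O((n-i)/i)$ expected crossings. Summing over all $2n$ endpoints yields $\sum_i O((n-i)/i)=O(n\log n)$. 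This argument never reasons about $U_i$ globally; it localises the randomness to one vertical line at a time, which is exactly the missing idea in your approach.
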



\begin{proof}
Since $\cal T$ is binary tree, it is sufficient to bound the expected number
of leaves in $\cal T$, namely, the number of trapezoids (without merges),
which is bounded by twice the number of vertical edges + 1.
First, focus on a vertical wall $W$ induced by one endpoint of the
$i$th inserted curve.
$W$ is intersected by $n$ curves, in the worst-case.
The $i-1$ already inserted curves partition $W$ into $i$ intervals.
However, we are only interested in the interval $I$ containing the
endpoint of the $i$th curve, as it will
appear in the final structure.
Curves inserted after the $i$th curve may split $I$.
The expected number of intersections in $I$
(including the endpoint of the $i$th curve) is $O((n-i)/i)$.
Summing up over all vertical walls gives a total of $O(n\log{n})$ expected intersections.
Thus, the expected number of vertical edges is $O(n\log{n})$ as well,
and , clearly, this is also the expected size of the tree.
\qed
\end{proof}

The following algorithm \emph{compute\_max\_search\_path\_length} computes~\lqpl in
expected $O(n\log^2{n})$ time.
Starting at the root it descends towards the leaves in a recursive fashion.
Taking the history of the current path into account, each recursion call
maintains the interval of $x$ values that are still possible.
Thus, if an $x$-coordinate of a point node is not contained in the interval
the recursion does not need to split.
This means that the algorithm essentially mimics $\cal T$
\new (as it would have been constructed),
since the recursion only follows possible search paths.
By Lemma~\ref{lemma:tree_size} the expected number of leaves of $\cal T$,
and thus of search paths, is $O(n\log{n})$.
Since the expected length of a query is $O(\log{n})$ this algorithm
takes expected $O(n\log^2{n})$ time.

\mike{The amount of additional memory is  $O(\log{n})$ since the recursion
corresponds to a DFS in the trapezoidal search tree, that is, at any time at
most $O(\log{n})$ functions are on the call stack.
}


\begin{definition} $f(n)$ denotes the time it
takes to verify that, in a linear size DAG constructed over a planar
subdivision of $n$ $x$-monotone curves, \lqpl is bounded by $c \log n$
for a constant $c$.
\end{definition}

\begin{theorem}
\label{thrm:gnrl_preproc_time}
Let $S$ be a planar subdivision with $n$ $x$-monotone curves.
A \PL data structure for $S$, which has
$O(n)$ size and $O(\log{n})$ query time in the worst case,
can be built in $O(n\log{n} + f(n))$ expected time, where
$f(n)$ is as defined above.
\end{theorem}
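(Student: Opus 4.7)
The plan is to follow the standard build-and-verify scheme of~\cite{CG-alg-app}, but charge the verification to the new quantity $f(n)$. Concretely: build the DAG via the RIC of Section~\ref{sec:RIC}; then verify both (i) that the resulting size is at most $c_1 n$ and (ii) that \lqpl\ is at most $c_2 \log n$, for suitable constants $c_1, c_2$. If either test fails, draw a fresh random permutation and restart; otherwise, return the structure. The two guarantees in the conclusion (linear size, logarithmic worst-case query time) then follow directly from the acceptance conditions, so all that remains is to bound the expected running time.

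\textbf{Cost of a single round.} A single RIC pass costs expected $O(n \log n)$ time and produces a DAG whose size can be read off in $O(n)$ time. Verifying size against $c_1 n$ is therefore trivial. Verifying condition (ii) costs $f(n)$ by definition. So a single round costs $O(n \log n + f(n))$ in expectation, and we only need to show that the number of rounds is $O(1)$ in expectation.

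\textbf{Bounding the number of rounds.} First I would recall the two standard tail facts about the RIC over a random permutation: the expected DAG size is $O(n)$ and the expected length of the longest search path is $O(\log n)$ (see~\cite{CG-alg-app,CG-intro-rand-alg}). By Markov's inequality, one can then choose $c_1$ large enough so that $\Pr[\text{size} > c_1 n] \le 1/4$, and choose $c_2$ large enough so that $\Pr[\lqpl > c_2 \log n] \le 1/4$. By a union bound, a single round is accepted with probability at least $1/2$, so the number of rounds is stochastically dominated by a geometric random variable with expectation at most $2$. Since the rounds are independent (each uses a fresh permutation) and their costs are identically distributed, Wald's identity yields total expected time $O(n \log n + f(n))$.

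\textbf{Main obstacle.} The only delicate point is the tail bound on \lqpl: Markov applied directly to the expectation only gives probability $1/4$ when $c_2$ is fixed at four times the leading constant of the expected bound, and one must be sure such a universal constant exists independently of the scene. This is the content of the analyses in~\cite{CG-alg-app,CG-intro-rand-alg}, which give an $O(\log n)$ expected bound on \lqpl\ with a constant depending only on the RIC (not on $S$); I would cite those analyses and plug the constant into Markov's inequality. Everything else is a routine combination of the RIC analysis, the definition of $f(n)$, and standard geometric-trials arguments.
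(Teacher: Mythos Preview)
Your proposal is correct and follows essentially the same build--verify--restart scheme as the paper; the paper simply cites~\cite{CG-alg-app} for the constant expected number of rebuilds, whereas you spell out the Markov/union-bound/geometric-trials argument explicitly. One small point worth tightening: since $f(n)$ is defined only for linear-size DAGs, make explicit that the \lqpl\ verification is invoked only after the size check passes (the paper verifies size on the fly), so that every invocation of the verifier is indeed charged $f(n)$; with that ordering your Wald argument goes through cleanly.
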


\begin{proof}
The construction of a DAG with some random insertion order takes expected
$O(n\log n)$ time. The linear size can be verified trivially on the fly.
After the construction the algorithm \emph{compute\_max\_search\_path\_length}
is used to verify that~\lqpl is logarithmic.
The verification of the size and~\lqpl may trigger rebuilds with a new
random insertion order. However, according to~\cite{CG-alg-app}, one can expect
only a constant number of rebuilds.
Thus, the overall expected runtime remains expected $O(n\log{n} + f(n))$.
\qed
\end{proof}

The verification process described above takes expected $O(n \log^2 n)$
time. However, one can do better as we briefly sketch next. Let $T$
be the collection of {\it all} the trapezoids created during the
construction of the DAG, including intermediate trapezoids that are
later killed by the insertion of later segments. Let ${\cal A}(T)$
denote the arrangement of the trapezoids. The {\it depth} of a point
$p$ in the arrangement is defined as the number of trapezoids in $T$
that cover $p$.
The key to the improved algorithm is the following observation by
Har-Peled.
\\

\noindent
{\bf Observation 1.} {\em The length of a path in the DAG for a query point
$p$ is at most three times the depth of $p$ in ${\cal A}(T)$.}\\

We remark that this depth is established in an
interior of a face of ${\cal A}(T)$ since we consider the boundaries
of the trapezoids as open.
This can be done since the longest path will always end in a leaf of the DAG,
which represents a trapezoid.
For any query point that falls on either a curve or an endpoint 
of the initial subdivision
the search path will end in an internal node of the DAG.
The search path for a query point $q$ on a vertical edge
of a trapezoid will be identical to a path
for a query point in a neighboring trapezoid.

It follows that we need to verify that the maximum depth of a point
in ${\cal A}(T)$ is some constant $c_1 \log n$.
Since the input curves in $S$ are interior
pairwise disjoint, according to the separation property stemming from \cite{GY-TSR-80},
one can define a total order on the curves.
This order allows us to apply a modified version%
\footnote{More details can be found in Appendix~\ref{sec:comp_arrangement_depth}.}
of an algorithm by Alt and Scharf~\cite{as-cdaaa-13}, 
which originally detects the
maximum depth in an arrangement of $n$ 
axis-parallel rectangles in $O(n \log n)$ time.
%
%
Notice that we only apply this verification algorithm on DAGs of
linear size.
Putting everything together we obtain:
\begin{theorem}
\label{thrm:new}
Let $S$
be a planar subdivision with $n$ $x$-monotone curves. A \PL data
structure for $S$, which has $O(n)$ size and $O(\log{n})$ query time
in the worst case, can be built in $O(n\log{n})$ expected time.
\end{theorem}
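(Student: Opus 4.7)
The plan is to combine Theorem~\ref{thrm:gnrl_preproc_time} with a sharper verification procedure: that theorem already gives expected $O(n\log n + f(n))$ preprocessing time, incurring only an expected constant number of rebuilds, so it suffices to exhibit a routine that certifies $\lqpl = O(\log n)$ on a DAG whose linear size has already been verified on the fly, and that runs in $f(n) = O(n\log n)$ time. This is a strict improvement over the $O(n\log^2 n)$ recursive simulation of the trapezoidal search tree described above.

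The first ingredient is Observation~1: for any query point $p$ in the interior of a face of ${\cal A}(T)$, the length of the root-to-leaf search path in the DAG is at most three times the number of trapezoids of $T$ that cover $p$, i.e., at most three times the depth of $p$ in ${\cal A}(T)$. Hence, to certify $\lqpl \le c \log n$ it suffices to certify that the maximum depth in ${\cal A}(T)$ is at most $(c/3)\log n$. Since the final DAG has $O(n)$ size, and since every trapezoid ever created during the incremental construction is recorded in the DAG (either as a current leaf or as an internal node that was once a leaf), we have $|T| = O(n)$; so ${\cal A}(T)$ is an arrangement of $O(n)$ generalized trapezoids.

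The second ingredient is an $O(n\log n)$-time maximum-depth algorithm for ${\cal A}(T)$, obtained by adapting the sweep of Alt and Scharf~\cite{as-cdaaa-13} from axis-parallel rectangles to trapezoids bounded above and below by pieces of input curves. Each trapezoid has two vertical sides, giving $2|T|$ event points for an $x$-sweep; the status structure is a balanced tree storing the currently active top/bottom curve fragments, ordered vertically. The key enabling fact is that the input curves of $S$ are pairwise interior disjoint $x$-monotone curves, so by the separation property of~\cite{GY-TSR-80} their $y$-orders are consistent on their common $x$-range; this order lifts to all trapezoid edges in $T$, making the status structure well defined. Each event performs $O(1)$ insertions or deletions and $O(1)$ depth updates in $O(\log n)$ time, for an overall sweep cost of $O(n\log n)$. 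The main obstacle I anticipate is exactly this extension: one must verify that the Alt--Scharf staircase/stabbing argument still goes through once the elementary comparisons are replaced by curve-above/curve-below tests, and that events introduced where vertical walls meet curve fragments do not break the invariant.

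Putting the pieces together, the overall preprocessing runs a RIC in expected $O(n\log n)$ time, checks the DAG size on the fly, and then runs the $O(n\log n)$ sweep on ${\cal A}(T)$ to certify the depth (and hence, via Observation~1, to certify $\lqpl = O(\log n)$). If either the size or the depth exceeds its threshold, the structure is rebuilt with a fresh random permutation; by~\cite{CG-alg-app}, only an expected constant number of such rebuilds occur. Plugging $f(n) = O(n\log n)$ into Theorem~\ref{thrm:gnrl_preproc_time} yields the claimed $O(n\log n)$ expected preprocessing time, while the linear size and logarithmic query time are guaranteed in the worst case by construction.
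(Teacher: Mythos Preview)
Your proposal is correct and follows essentially the same route as the paper: invoke Theorem~\ref{thrm:gnrl_preproc_time}, use Observation~1 to reduce verifying \lqpl to bounding the maximum depth in ${\cal A}(T)$ (with $|T|=O(n)$ because the DAG was already certified to have linear size), and then compute that depth in $O(n\log n)$ time by adapting Alt--Scharf via the total vertical order on the interior-disjoint $x$-monotone curves guaranteed by~\cite{GY-TSR-80}. The only cosmetic difference is that the paper (in the appendix) makes the adaptation explicit by first \emph{reducing} the trapezoids to axis-parallel rectangles---replacing each top/bottom curve by its integer rank in the total order---and then running Alt--Scharf essentially unchanged, whereas you sketch a direct sweep on the trapezoids; the reduction sidesteps precisely the ``curve-above/curve-below'' verification you flagged as the main obstacle.
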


\ignore{
\paragraph{{\bf Remark:}} The recursion in
\emph{compute\_max\_search\_path\_length}
essentially mimics a DFS traversal of the trapezoidal search tree,
which has only expected $O(n\log{n})$ edges.
Thus, one could assume that the recursion also
takes expected $O(n\log{n})$ time.
However, paths traversed by the algorithm usually have
additional nodes, namely those that represent endpoints,
which are not contained in the $x$-range of the interval.
\new 
}

\ignore{
One may simply observe the length of the path that was taken in order
to find the left endpoint of a newly inserted segment. Together with the fact that one also
ensures that the size of the DAG is linear it would be ensured that each build takes
$O(n\log n)$ time. However, the resulting DAG may even contain a search path of linear
size. Simply, take the recursive construction for the lower bound on~\depth/\lqpl
and extend all segments to the right. Clearly locating the left endpoints of newly inserted
segments takes $O(\log n)$ time. However, since they all extend to the right a search
path in this region is linear. But at least this triggers a solution for a static
subdivision.

The above algorithm ensures that the construction of a DAG takes $O(n\log n)$ time.
As this DAG may still contain paths of linear length we close the construction
by calling \emph{Ensure\_path\_length}. If the predicate returned \emph{false},
we rebuild the DAG.

\begin{lemma}
TODO INSERT LEMMA HERE
\end{lemma}

As we see it, this lemma adds the additional layer
needed for the completeness of the algorithm,
as described in~\cite{CG-alg-app}, chapter X.
}

\ignore{
The depth~\depth can be maintained such that it is  accessible in constant time.
Each leaf stores its depth, making it possible to compute the depth of new leafs.
The maximum depth~\depth is maintained in a separate variable and updated if necessary.
However, it is not possible to do the same for~\lqpl.
The reason is that each leaf would have to store at least one value for each valid
search path that reaches the leaf.
The number of such paths in not bounded by a constant.
In fact the memory consumption would be equivalent to the data structure that one obtains
without merging trapezoids, namely the trapezoidal search tree, which
requires $\Omega(n\log{n})$ memory as shown in~\cite{SeidelA2000_wc_query_comp}.
In particular, it is necessary to merge as the sizes of the resulting search tree and the
resulting DAG considerably differ also in practice,
as demonstrated in Table~\ref{tbl:tree_dag_size}.
\begin{table}[t]
  \vspace{-20pt}
  \caption{
    Number of tree nodes vs. number of DAG nodes for the 
    same input with the same insertion order.}
  \begin{center}
    \begin{tabular}{| c | c | c | c |}
    \hline
    \# Arrangement Edges & \# Tree nodes & \# DAG nodes & ratio\\
    \hline
    138 & 1263 & 681 & 1.85\\
    \hline
    285 & 3167 & 1492 & 2.12\\
    \hline
    1483 & 23511 & 8019 & 2.93\\
    \hline
    2977 & 51551 & 16330 & 3.15\\
    \hline
    14975 & 350629 & 84576 & 4.14\\
    \hline
    29973 & 759075 & 169355 & 4.48\\
    \hline    \end{tabular}
  \end{center}
  \label{tbl:tree_dag_size}
  \vspace{-20pt}
\end{table} 

On the other hand it is actually not necessary to have~\lqpl accessible in constant time.
First, the insertion of an edge takes $O(\log n)$ time since this requires to locate the
left most end point of the edge in the trapezoidal map.
Second, the insertion (or deletion) of one edge can only increase~\lqpl by a
constant, which implies that we only have to check~\lqpl
after $O(\log n)$ insertions (deletions).
Giving us $O(\log^2{n})$ time to compute~\lqpl.
}

\subsecspacea\subsection{Experimental Results}\subsecspaceb
\label{ssub:dept_path_ratio_exp}

\begin{figure}[t]
    \includegraphics[width=0.48\textwidth]{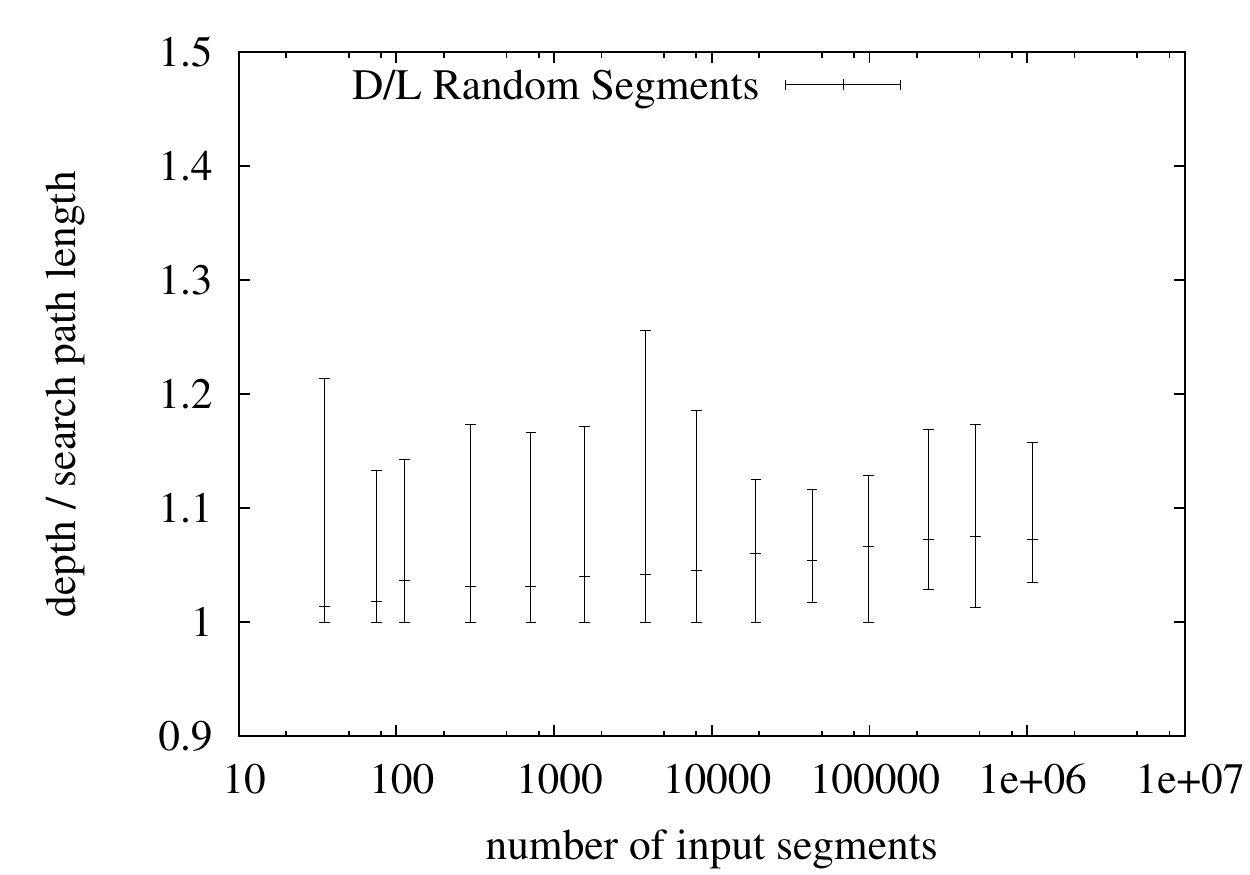}
    \includegraphics[width=0.48\textwidth]{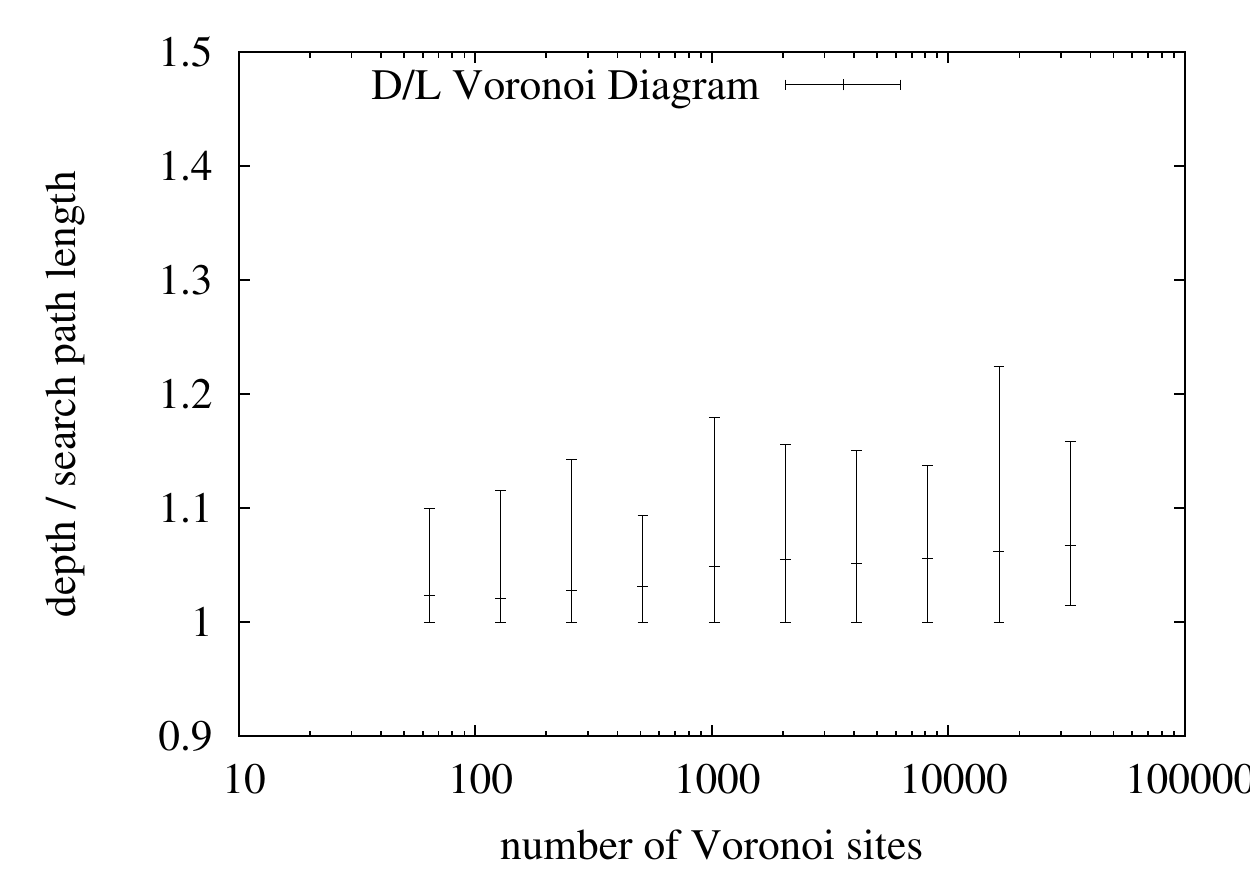}
    \caption{
      Ratio of~\depth and~\lqpl in two scenarios:
      random segments (left), Voronoi diagram of random points (right).
      Plots show average value with error bars.
    }
    \label{fig:depth_vs_length_plot}
\end{figure}

Since \depth is an upper bound on \lqpl and since \depth is accessible
in constant time our implementation explores an alternative that
monitors~\depth instead of~\lqpl.
Though this may cause some additional rebuilds,
the experiments in this section give strong evidence
that one can still expect $O(n\log n)$ preprocessing time.
We compared~\depth and~\lqpl in two different scenarios:
random non-intersecting line segments and
Voronoi diagram for random sites.%
%
\footnote{Appendix~\ref{sec:dl_ratio_results_appndx} contains additional experimental
results that include also the scenarios constructed in Section~\ref{ssub:depth_path_ratio}.}
%
Each scenario was tested with an increasing number of subdivision edges,
with several runs for each input.
Figure~\ref{fig:depth_vs_length_plot} displays
the average~\depth/\lqpl ratio, and also the minimal and maximal ones.
Obviously, the average ratio is close to~1 and never exceeded a value of~$1.3$.

These experimental results indicate that replacing the test for the length of the
longest path~\lqpl by the depth~\depth of the DAG in the randomized
incremental construction essentially does not harm the runtime.
However, the following conjecture remains to be proven.

\begin{conjecture}
\label{con:depth}
There exists a constant $c>0$ such that the runtime of the randomized
incremental algorithm, modified such that it rebuilds in case the depth~\depth
of the DAG  becomes larger than $c\log n$, is expected $O(n\log n)$,
i.e., the number of expected rebuilds is still constant.
\end{conjecture}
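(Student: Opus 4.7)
The plan is to establish, for a random insertion order, a tail bound
$\Pr[\mathcal{D} > c\log n] \le q$ with fixed $c>0$ and $q<1$.
Since the random bits used in different rebuild attempts are independent,
such a bound dominates the number of rebuilds by a geometric random variable
of constant expectation, and as each attempt runs in expected $O(n\log n)$
time by the classical RIC analysis, the total expected preprocessing time is
$O(n\log n)$ as claimed.

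The first step is to turn $\mathcal{D}$ into an object amenable to backward
analysis. A root-to-leaf path of length $d$ in the final DAG corresponds to
a chain $\Delta_0, \Delta_1, \ldots, \Delta_d$ of trapezoids in $T$, where
each $\Delta_{i+1}$ is born when $\Delta_i$ is killed by the insertion of
some later curve (possibly after merging with a sibling).
On a \emph{valid} search path for a point $p$ all $\Delta_i$ contain $p$,
and Har-Peled's Observation~1 bounds the chain length by
$3\cdot\mathrm{depth}_{\mathcal{A}(T)}(p)$.
For an arbitrary chain no single $p$ works, but at each merge step the chain
can jump between only two parents. I would partition the chain at its $k$
merge events into $k+1$ sub-chains and assign to each sub-chain a canonical
witness point $p_j$, so that the $j$-th sub-chain is a valid search path for
$p_j$. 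This yields
$d \le 3\sum_{j=0}^{k}\mathrm{depth}_{\mathcal{A}(T)}(p_j)$.

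The second step is a pointwise tail bound on
$\mathrm{depth}_{\mathcal{A}(T)}(p)$.
Standard backward analysis gives expected depth $O(\log n)$: the trapezoid
of $p$ at step $i$ is determined by $O(1)$ curves, so the conditional
probability that it changes at step $i$ is $O(1/i)$.
The Clarkson--Shor--Mulmuley exponential-decay lemma then yields
$\Pr[\mathrm{depth}_{\mathcal{A}(T)}(p) > c\log n] \le n^{-\Omega(c)}$
for $c$ sufficiently large. As $\mathcal{A}(T)$ has $O(n^2)$ combinatorial
faces, a union bound controls the pointwise depth uniformly.

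The main obstacle is converting pointwise control into a bound on the
maximum chain length, because the decomposition above only suffices if
$k$ itself is $O(1)$ with high probability---if $k$ can be $\Theta(\log n)$,
the sum over witnesses gives only $O(\log^2 n)$.
The direction I would pursue is a merge-concentration lemma: a merge along
the chain requires coincidences of both the top and bottom curves across
neighboring splits, and these coincidence events appear to be negatively
correlated under a uniform random insertion order.
Establishing this carefully, or alternatively obtaining a direct sharper
bound that avoids the union over witness points (for example via a
martingale analysis of the depth process as the insertion unfolds), is
the delicate step on which the conjecture hinges.
The experimental ratio $\mathcal{D}/\mathcal{L}\approx 1$ reported in
Figure~\ref{fig:depth_vs_length_plot} indicates that, in practice, very few
merges occur on the longest chain, which is encouraging evidence that this
route can be pushed through.
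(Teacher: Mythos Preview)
The statement you address is a \emph{conjecture}: the paper does not prove it and explicitly lists it as its open problem. There is thus no proof in the paper to compare against; the paper only establishes the worst-case separation $\mathcal{D}/\mathcal{L}=\Theta(n/\log n)$ and offers experimental evidence that the ratio stays close to~$1$ under random insertion.

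Your outline is a reasonable research plan but, as you yourself acknowledge, it is not a proof. The reduction to a tail bound $\Pr[\mathcal{D}>c\log n]\le q<1$ and the ensuing geometric-variable argument are correct. Your pointwise exponential tail on $\mathrm{depth}_{\mathcal{A}(T)}(p)$ together with a union bound over $O(n^{2})$ faces is exactly the standard route to $\mathcal{L}=O(\log n)$ with high probability, and the decomposition of a longest DAG chain at its merge events into nested sub-chains, each witnessed by a single point, is sound: between two consecutive merges the trapezoids along the chain are nested, so any point of the last one witnesses the whole sub-chain.

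The gap is precisely where you place it: bounding $k$, the number of merges along the longest chain. The paper's own adversarial constructions already force $k=\Theta(n/\log n)$ for a suitable (non-random) insertion order---a chain of length $\Theta(n)$ decomposed into pieces of witness depth $O(\log n)$ needs that many merges---so any argument must exploit the randomness in an essential way. Your ``merge-concentration lemma'' via negative correlation and the martingale alternative are only directions; nothing in the proposal indicates how to carry either through, and the heuristic that top/bottom coincidences are negatively correlated is not substantiated. In short, you have correctly isolated the crux of the conjecture but have not resolved it, which is consistent with the paper leaving it open.
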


\ignore{
This indicates that in practice one can safely replace~\lqpl by~\depth without
harming the expected runtime of the algorithm. In particular, this can easily
be compensated by a slight increase of the threshold.

\begin{proposition}
Replacing the check for the length of the longest path~\lqpl by the
depth~\depth of the DAG in the the randomize incremental construction
does not harm the expected runtime of $O(n\log n)$, i.e.,
one can still expect a constant number of rebuilds.
\end{proposition}
}

\ignore{
\mike{Ok, sorry, that does not work too. The constants make it too precise.
After all it's not a theorem.}
\begin{proposition}
Let $S$ be a set of $n$ non-intersecting curves, let $c$ be a small constant,
and let $\gamma$ be constant s.t. $\gamma < 0.25$.
Then the probability that the depth~\depth of the
DAG, constructed by the RIC for \PL algorithm for $S$,
is more than $c\log(n+1)$ is at most $\gamma$.
\end{proposition}

In other words, we conjecture that the expected
DAG depth~\depth is logarithmic as well.
If this conjecture is proven to be true,
one could replace~\lqpl with~\depth and use a
slightly higher threshold in the validation conditions
 used in the construction.
\michal{The three ratio tables(~\ref{tbl:rand_segs_ratio},
~\ref{tbl:rand_sites_ratio}, ~\ref{tbl:circle_sites_ratio})
should be removed and replaced by the plot in figure~\ref{fig:depth_vs_length_plot}.
This is only a sketch of a plot, the values are real but more values are needed.
Wrong axes names
}
}

\ignore{
\subsecspacea\subsection{Alternative Approaches}\subsecspaceb
\label{ssec:depth_alternatives}

\mike{needs restructure. There are three things to be presented: How to compute L; idea of
only checking insertion paths including counter example; solution for static scene which is
a combination of the previous two. }



The depth~\depth can be maintained such that it is  accessible in constant time.
Each leaf stores its depth, making it possible to compute the depth of new leafs.
The maximum depth~\depth is maintained in a separate variable and updated if necessary.
However, it is not possible to do the same for~\lqpl.
The reason is that each leaf would have to store at least one value for each valid
search path that reaches the leaf.
The number of such paths in not bounded by a constant.
In fact the memory consumption would be equivalent to the data structure that one obtains
without merging trapezoids, namely the trapezoidal search tree, which
requires $\Omega(n\log{n})$ memory as shown in~\cite{SeidelA2000_wc_query_comp}.
In particular, it is necessary to merge as the sizes of the resulting search tree and the
resulting DAG considerably differ also in practice,
as demonstrated in Table~\ref{tbl:tree_dag_size}.

On the other hand it is actually not necessary to have~\lqpl accessible in constant time.
First, the insertion of an edge takes $O(\log n)$ time since this requires to locate the
left most end point of the edge in the trapezoidal map.
Second, the insertion (or deletion) of one edge can only increase~\lqpl by a
constant, which implies that we only have to check~\lqpl
after $O(\log n)$ insertions (deletions).
Giving us $O(\log^2{n})$ time to compute~\lqpl.

However, the algorithm to compute~\lqpl that we propose here requires $O(n\log n)$ time.
Essentially it computes all possible search paths in the DAG.
It does so by discarding those paths that are geometrically impossible.
Starting at the root it descends towards the leaves in a recursive fashion, taking the
history of the current path into account it maintains the interval of $x$ values that
are still possible. This way the recursion only needs to
descend in one direction if the $x$ value of a vertex node is not contained in the
maintained interval as only one side is geometrically possible.
Thus, the algorithm essentially mimics the trapezoidal search tree and takes
$O(n\log n)$ time.
Obviously, it is not an option to call this algorithm after every insertion or even
only after every $O(\log n)$ insertions.

\ignore{
\subsubsection{Avoiding Merges}
This would turn the DAG into a tree and every path in this tree would
also be a possible search path. Thus, applying a similar scheme as described
for the depth above would make~\lqpl available 
in constant time. However, this is not an option since it is
known~\cite{SeidelA2000_wc_query_comp} that the size of the resulting tree
is $\Omega(n\log{n})$ and not linear. This increased memory consumption is
also relevant in practice as illustrated in Table~\ref{tbl:tree_dag_size}.
Thus, the merge steps are required.

We remark that it is also not an option to maintain the depth values of
all preceding nodes in a leaf node as the number of these values is not
constant. In particular, one would have to store the same information as
the tree and the size of the resulting data structure would be again in
$\Omega(n\log n)$.
}

\ignore{
If the data structure of the DAG is not modified the most efficient
algorithm that computes the~\lqpl essentially mimics the trapezoidal search tree.
It does so by computing all possible search paths in the DAG while discarding
those that are geometrically not possible. Starting at the root it descends towards the leaves
in a recursive fashion, taking the history of the current path into account it maintains the
interval of $x$ values that are still possible. In particular, the recursion only needs to
descend in one direction if the $x$ value of a vertex node is not contained in the
maintained interval.

Another algorithm, which as far as we know is the most efficient, would
go over the DAG and mimic the trapezoidal search tree.
It begins at the root, with minus infinity as min-boundary and plus
infinity as max-boundary,and recursively checks the two children, taking
the maximal query length between them and adds 1.
The recursion stops when a leaf is found, or if the min-boundary is not
lexicographically smaller than the max-boundary.
An above-below node on the way continues recursively with the two children,
with the same boundaries.
A left-right node however, continues recursively with updated boundaries.
Let $p$ be the underlying endpoint, the recursion continues with $p$ as the
max-boundary for the left child , and as the max-boundary for the right child.
In fact, the number of recursion paths ending in a certain trapezoid is the
exact number of slabs combining the trapezoid in the matching trapezoid search tree.
The complexity of this algorithm is correlated to the structure of the
traversed DAG and of the matching tree, thus is expected $O(n\log{n})$.
For a worst case structure the running time will obviously take longer.

One may only compute the value in the end of the construction in order to
make sure the DAG does not violate the requirements.
This would not effect the preprocessing time.
However, for a chain-like DAG whose construction time takes $O(n^{2})$, the
computation will be quadratic as well.
In fact, construction of badly behaving structures should be prevented upfront.
\newline It is possible to compute the longest query after every insertion,
causing an extra cost of $O(i\log{i})$ for the $i$th curve, which will cause
an overall deceleration in the preprocessing stage.
\newline The question of when to compute this value remains, since there is
clearly a tradeoff between the cost of computation and the increasing length
of the structure.
If the longest query path length is computed often, it is more likely to
exceed the expected preprocessing bound of $O(n\log{n})$ time to which the
algorithm aims.
If the computation is delayed, then large structures that behave badly will
 be discarded rather late after a costly construction and computation,
exceeding the expected preprocessing bound.
Another possibility is to maintain the longest query path length in the nodes
themselves, however this option increases the memory significantly since every
trapezoid node should keep an order of the number of different search paths
ending at it.
}

Another idea may be to simply observe the length of the path that was taken in order
to find the left endpoint of a newly inserted segment. Together with the fact that one also
ensures that the size of the DAG is linear it would be ensured that one build takes
$O(n\log n)$ time. However, the resulting DAG may even contain a search path of linear
size. Simply, take the recursive construction for the lower bound on~\depth/\lqpl
and extend all segments to the right. Clearly locating the left endpoints of newly inserted
segments takes $O(\log n)$ time. However, since they all extend to the right a search
path in this region is linear. But at least this triggers a solution for a static
subdivision.

\textbf{A solution for static subdivisions:}
The above algorithm ensures that the construction of a DAG takes $O(n\log n)$ time.
As this DAG may still contain paths of linear length we close the construction
by computing the longest path as discussed above. In case the DAG is ok, this
check takes $O(n\log n)$ otherwise we stop the computation and rebuild the DAG.
However, one of the key features of the data structure is that it supports dynamic
scene which includes insertions as well as deletions of segments. Thus, this solution
is obviously not satisfying.

\ignore{
\textbf{A solution for static subdivisions:} We cannot be sure
that computing~\lqpl would take expected $O(n\log{n})$ time\tcnote[Michal]{rephrase},
but we can verify that the expected construction time,
in case of static scenes, would not violate this bound.
We use similar insertion scheme, as described in section~\ref{sec:RIC},
with slightly modified conditions, promising that the expected
construction time would not exceed $O(n\log{n})$.
We ensure that every insertion takes $O(\log{n})$, by counting the length
of the search path of the curve's left endpoint. If at some point it exceeds
the predefined bound, indicating that a long search path exists,
 then a rebuild will be issued.
We should also keep track of the size of the data structure
after every insertion, as described in section~\ref{sec:RIC}.
However, this does not mean that~\lqpl is valid.
A counter example is a scenario based on the recursive scheme
 mentioned earlier, with the property that all segments extend to the right.
For a top to bottom insertion order,
the preprocessing time would be fine. On the other hand,
a query in the rightmost bottom trapezoid would take $O(n)$.
Now we run the method that computes~\lqpl in
a DFS-like manner, described earlier.
However, we stop the process after $O(n\log{n})$ time and
in case no answer was returned during that time,
implying that~\lqpl violates the $O(\log{n})$ bound, we issue a rebuild.
In total, this solution requires expected $O(n\log{n})$
preprocessing time.
It would not work with dynamic subdivisions efficiently, since the
\lqpl computation may be held after every insertion, exceeding the
complexity bounds.
}

\ignore{
\textbf{A solution for static subdivisions:} We describe here an optional
construction algorithm for static subdivisions with an expected $O(n\log{n})$
preprocessing time that creates a data structure whose length of longest
query path is at most $O(\log{n})$.
Given a planar partition of $n$ non-intersecting curves, use the random
incremental insertion in order to build the search structure, as described
in~\cite{Mulmuley1990_fast_planar_part_alg,Seidel1991_simp_fast_inc_rand_td}.
Similarly to the variation elaborated in Section~\ref{sec:impl_details},
defined in~\cite{CG-alg-app}, we will perform the following tests:
\vspace{-2mm}
\begin{itemize}
\item During every insertion the number of nodes in the path to the trapezoid
containing the left endpoint should be counted.
    If at some point it exceeds $c_1\log{n}$ for a suitable constant $c_1$,
the insertion is stopped and a rebuild will be issued, since it implies that
there is a valid search path whose length is too large.
\item After every insertion one must verify that the size of the data structure
does not exceed $c_2n$, for a different constant $c_2$.
\end{itemize}
\vspace{-2mm}
These two conditions promise that the expected construction time would not
exceed $O(n\log{n})$.
However, they do not give any bound on the~\lqpl.
An example is a scenario similar to Figure~\ref{fig:depth_vs_log_qlen} with the
property that every segment extends to the right boundary. For a top to bottom
 insertion order every insertion will take $O(\log{n})$ time. A query in the
right bottom trapezoid will take $O(n)$. Thus the~\lqpl need to be checked as
 well.

It is possible to run the method that computes the length of the longest query
path in a Depth-First-Search like traversal, as described earlier.
We run it for $c_3n\log{n}$ steps, for a certain constant $c_3$.
If it returned with an answer than the length of the longest query path is valid.
Otherwise, the longest query path length violates the allowed bounds, and a
rebuild should be issued.
The expected running time of this method is $O(n\log{n})$, however, since we
stop it after $O(n\log{n})$ steps the overall running time of deciding whether
the longest query is too long is $O(n\log{n})$.
This implies that that total preprocessing time of this construction os a \PL
search structure is expected $O(n\log{n})$.
This solution would not work with dynamic subdivisions efficiently, since the
decision method may be called after every insertion exceeding the insertion
complexity.
}
}

\ignore{
\secspace\section{A Discussion On The DAG Depth vs The Longest Query Path}\secspace

\label{sec:depth_vs_longest_query}
Reviewing the literature~\cite{CG-alg-app} about the construction of a search
structure that will not violate the complexity bounds, raised the discussion
on the choice of DAG depth over longest query path length, in order to keep
track of the maximum-length query in the constructed DAG.
This replacement seems minor, at first glance.
However, certain DAGs can have a significant difference between these two values.
Figures~\ref{fig:sqrt_query_linear_depth},~\ref{fig:depth_vs_log_qlen} illustrate
this.

\begin{wrapfigure}{r}{0.3\textwidth}
  \vspace{-30pt}
  \begin{center}
    \includegraphics[width=0.3\textwidth]{./pics/sqrt_nine_segs_with_border.pdf}
  \end{center}
  \vspace{-15pt}
  \caption{}
  \vspace{-25pt}
  \label{fig:sqrt_query_linear_depth}
\end{wrapfigure}

\begin{figure*}[!htp]
    \centering
    \setlength{\tabcolsep}{3pt}
    \begin{tabular}{c}
        \begin{tabular*}{1\textwidth}{c c}
            \subfloat{\includegraphics[width=0.5\textwidth]{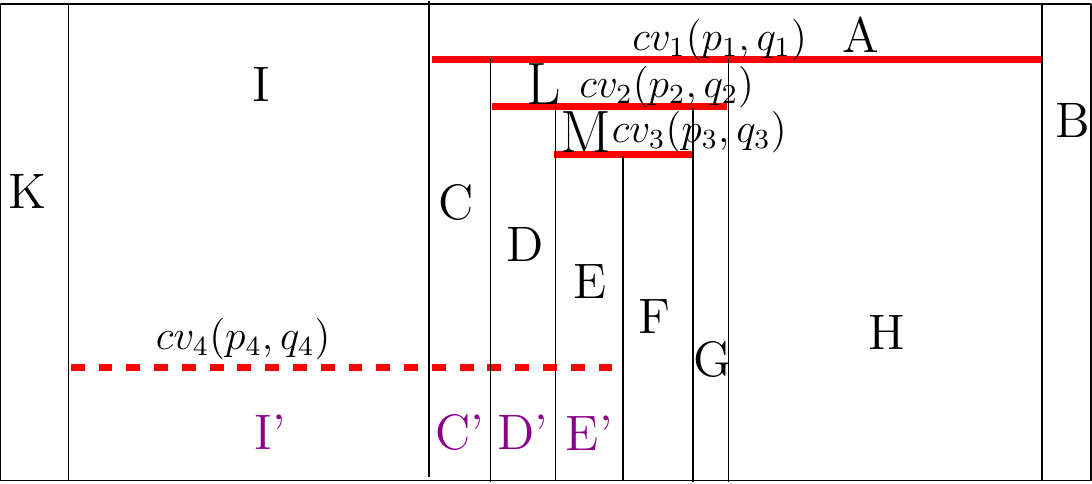}} &
            \subfloat{\includegraphics[width=0.5\textwidth]{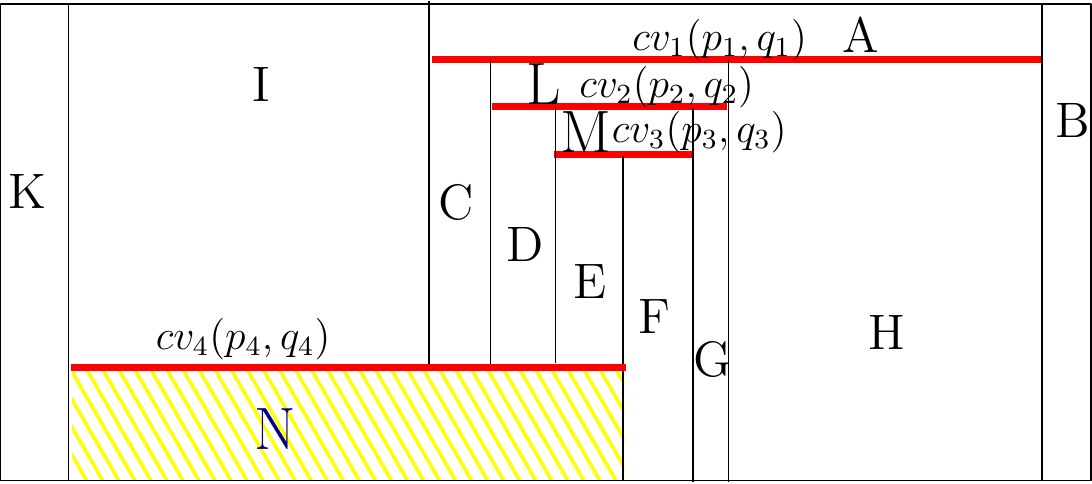}} \\
            \sf{(a) Trapezoidal map before merging} &
            \sf{(c) Trapezoidal map after merging}
        \end{tabular*} \\
        \vspace{2mm}
        \begin{tabular*}{1\textwidth}{c c}
            \subfloat{\includegraphics[width=0.5\textwidth]{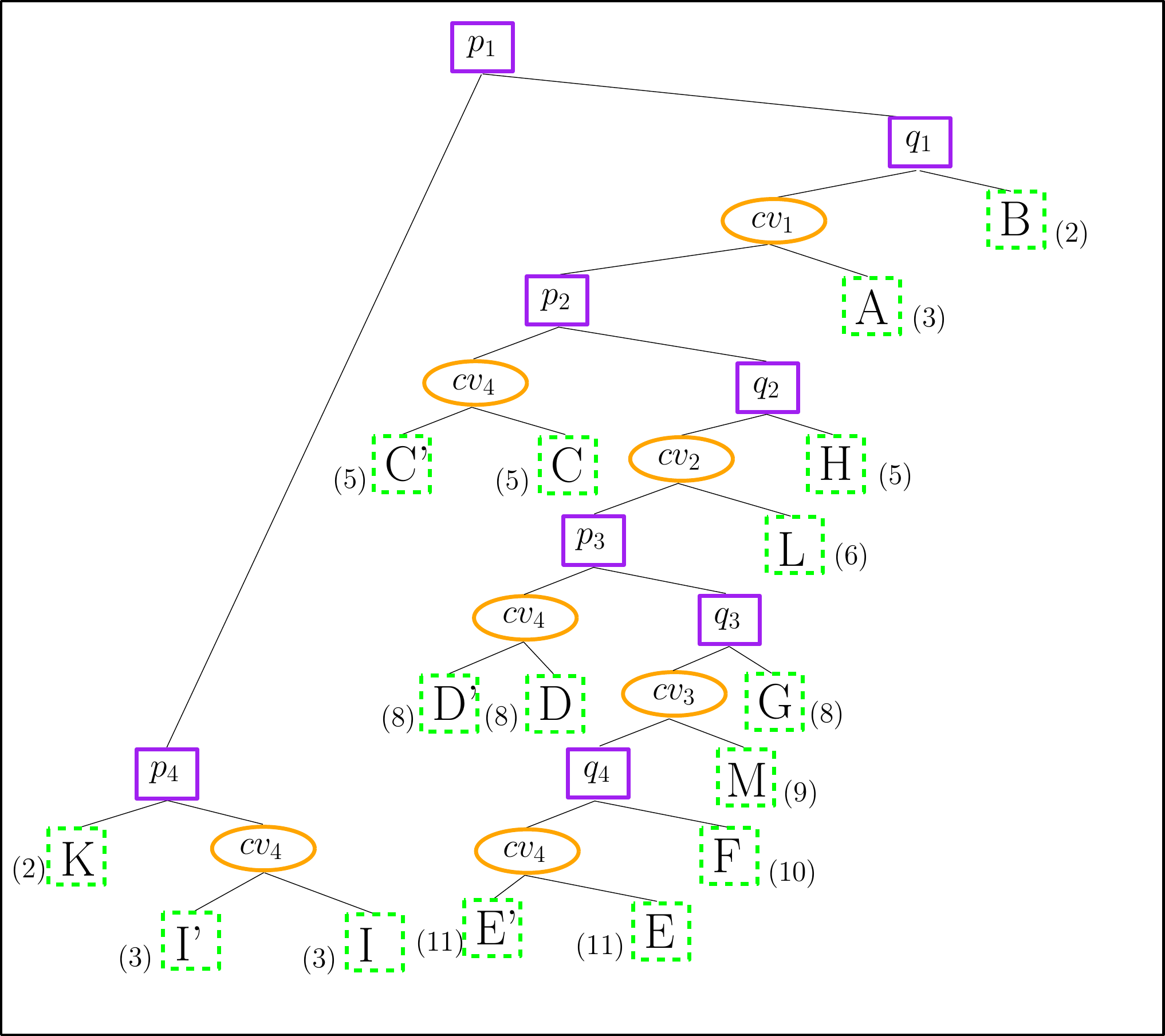}} &
            \subfloat{\includegraphics[width=0.5\textwidth]{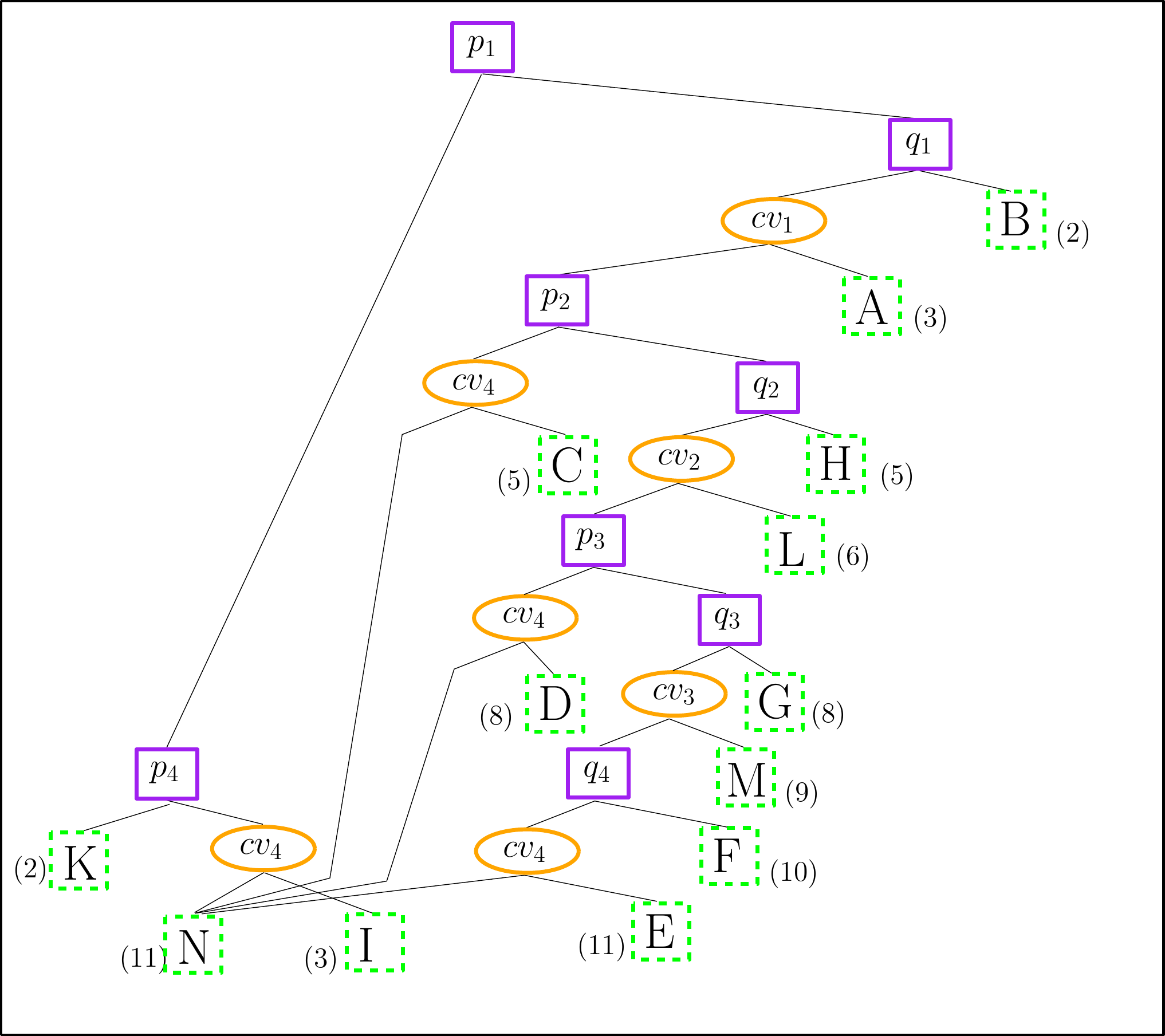}} \\
            \sf{(b) The DAG before merging} &
            \sf{(d) The DAG after merging}
        \end{tabular*}
    \end{tabular}
    \caption{\sf{The insertion of $cv_4$. (a),(b) display the Trapezoidal map and the created DAG, respectively, after the insertion of $cv_4$ but before the merges are performed. (c),(d) display the Trapezoidal map and the created DAG, respectively, after the required merges are performed. In (b),(d) the depth of the generated trapezoids is specified in brackets. Obviously the insertion of $cv_4$ causes merges that enlarge the depth of the DAG. See the change in the depth of trapezoid $N$ in (d) comparing to its matching trapezoid $I'$ in (b), for instance.}}
    \label{fig:sqrt_query_linear_depth_before_after}
    \vspace{-25pt}
\end{figure*}


Figure~\ref{fig:sqrt_query_linear_depth} presents a setting of input segments
with a possible large difference between the depth and longest query values.
Inserting the input from top to bottom generates a DAG with $O(n)$ depth and
$O(\sqrt(n))$ longest query path length.
The input segments are divided into $\sqrt{n}$ blocks of $\sqrt{n}$ segments each.
In each block the longest query is $O(\sqrt{n})$.
Figure~\ref{fig:sqrt_query_linear_depth_before_after} illustrates the
Trapezoidal Decomposition and the matching DAG of a setting of 9 curves,
after the insertion of the $4$th segment ($cv_4$), which is the first
segment of the second block. The segments of the first block
($cv_1$, $cv_2$, $cv_3$) were already inserted. Figure (a) shows the
Trapezoidal map before the required merges and (c) shows the map after
the merges are over.
Figures~\ref{fig:sqrt_query_linear_depth}(b),(d) illustrate the matching
DAGs for (a),(c), respectively.
The top segment of each block ($cv_4$ in
Figure~\ref{fig:sqrt_query_linear_depth_before_after}) splits trapezoids
that were created by the previous block.
For every trapezoid in the top-most block, whose query path does not pass
through $cv_4$, the depth and the longest query path length are the same.
New trapezoids that were created by the split($C'$,$E'$,$D'$), having $cv_4$
as their top curve and the bottom boundary as their bottom curve, are merged
(in (c), trapezoid $N$ is the result the merge of $I'$,$C'$,$E'$,$D'$ in (a)).
All trapezoids to the left of $p_1$ below $cv_4$ will now have a larger depth
 than their query paths.
For example, in (c), For a query in the left part of trapezoid $N$ the depth
 is 11 but the longest query path would go from the root to the left child,
skipping the sub-graph dealing with the first block of segments, having a
length of 3.
Now, after inserting the $O(\sqrt{n})$ blocks, the deepest leaf in the graph
is the trapezoid below $cv_n$. Its depth is $O(n)$, which is in fact the
depth of the DAG.
A path of a query point located in this trapezoid will start from the root
$p_1$ and continue along the leftmost path passing $\sqrt{n}$ left-right
decision nodes, going right in the last node, accessing the sub-graph
representing the last block, reaching the required trapezoid in $O(\sqrt{n})$ steps.
The total length of the query path would be $O(\sqrt{n})$, as $O(\sqrt{n})$
nodes were traversed on the leftmost path and $O(\sqrt{n})$ nodes were
traversed in the last block.
This bound is true for any query under this setting.

A careful computation shows that the above construction for $n$ segments
generates a DAG with $3n-\sqrt{n}+2$ depth and $4\sqrt{n}$ length of the
longest query path.



\begin{wrapfigure}{r}{0.5\textwidth}
  \vspace{-25pt}
  \begin{center}
    \includegraphics[width=0.5\textwidth]{./pics/logn_with_border.pdf}
  \end{center}
  \vspace{-10pt}
  \caption{\sf{For a top-to-bottom insertion order the resulting DAG will have a linear depth, whereas the longest query path will be $O(\log(n))$}}
  \vspace{-15pt}
  \label{fig:depth_vs_log_qlen}
\end{wrapfigure}

One should notice that our implementation, as described in
section~\ref{sec:impl_details}, which uses the DAG depth instead of the
longest query path length, will discard the DAG built for such an insertion order.
However, in such a setting, the structure should also be discarded when using
 the longest query path length, since this value exceed $O(\log{n})$.
A better example would be the recursive scenario presented in
Figure~\ref{fig:depth_vs_log_qlen}. The constructed DAG, for a top-to-bottom
insertion order, will have a linear depth (for the trapezoid below the
leftmost segment). However, any query path length would be $O(\log{n})$.
As opposed to the previous example, the generated DAG will be discarded
by our implementation, even though the length of any query path would
never exceed $O(\log{n})$.

\begin{proposition}
The random incremental construction of a Trapezoidal Decomposition of a
planar partition of $n$ non intersecting curves can generate a \PL search
structure, a DAG,
whose depth is $O(n)$ and longest search path is only $O(\log{n})$.
Such a DAG will be considered invalid when checking the structure conditions,
if the depth is used instead of the length of longest query path.
\end{proposition}

The fact that these two values may differ is due to the nature of a DAG to
have more than one path to a node.
A DAG can contain several paths to a leaf, most of cannot be realized by any
 query path.


The incentive to replace the longest query path length with the DAG depth
is related to the complexity of computing the longest query path.
The DAG depth is an independent external property of the graph, easily
maintained after every structural update at cost proportional to the cost
of the update itself.
The longest query path, however, is a non-trivial property depending on the
underlying geometry, and in its computation geometric comparisons are involved.

Let $T$ be the matching trapezoidal search tree to a given DAG $G$.
$T$ can be produced by a similar insertion order without any merges.
Each inserted curve will add two vertical extensions on its endpoints,
extending until an already inserted curve is met.
Clearly, all DAGs represent a single decomposition of the plane, while
different insertion orders will generate different trees whose underlying
partitions are not the same.
A leaf in $G$ will be a leaf in $T$ as well, if none of the endpoints
of curves inserted after the top and bottom curves (if exist) of the
underlying trapezoid fall inside the region defined by it.
Otherwise, it will be a combination of leafs in $T$, each defining a
vertical slab of the underlying trapezoid of the leaf in $G$.
\tcnote[Michal]{Verify that this is correct!!!!!}

One can choose to use a trapezoidal search tree, as defined above.
This search structure will have a single path from root to any leaf,
which is the exact query path to the underlying trapezoid.
In other words, in such a tree the longest query path will set the depth,
hence the depth can be easily accessed.
However, as showed in~\cite{SeidelA2000_wc_query_comp}, the trapezoidal
search tree uses $\Omega(n\log(n))$ space, but by using cuttings in order
to create a DAG, the space can be reduced to linear.
\tcnote[Michal]{Verify that their trapezoidal search tree is the tree that
we describe here. do we know that the expected size is $O(n\log(n))$?,
and understand how much influence has the log factor}.
Table~\ref{tbl:tree_dag_size} displays the difference in the size of the
two structures.


If a DAG is indeed chosen as a search structure, how costly would computing
the longest query path length be?
If the vertical lines are extended to infinity, then they create $O(n^{2})$
different regions.
A na\"{i}ve algorithm would compute the length of a query in each region and
return the maximal one.
Another algorithm, which as far as we know is the most efficient, would go
over the DAG and mimic the trapezoidal search tree.
It begins at the root, and recursively checks the two children, taking the
(maximal + 1).
It goes on recursively until the slabs are found.\tcnote[Michal]{I think it
is unclear} Each node on the way contributes to the accumulated length.
The complexity of this algorithm is expected $O(n\log(n))$, and is correlated
to the structure of the traversed DAG.
One may only compute the value in the end of the construction in order to
make sure the DAG does not violate the requirements.
This would not effect the preprocessing time. However, for a chain-like DAG
the computation will take $O(n^{2})$. This DAG is considered bad and might
have been discarded much earlier.
In fact one should prevent the construction of wrong DAGs upfront.
The longest query can be computed after every insertion, causing an extra cost
of $O(i\log{i})$ for the $i$th curve, which will cause an overall deceleration
in the preprocessing stage.
The question of when to compute this value remains, since there is clearly
a trade off between the cost of computation and the increasing length of the
structure.
If the longest query path length is computed often, it is more likely to
exceed the expected preprocessing bound of $O(n\log{n})$ time to which the
algorithm aims.
If the computation is delayed, then large structures that behave badly will
be discarded rather late after a costly computation, exceeding the expected
preprocessing bound.

Another possibility is to maintain the longest query path length in the nodes
themselves, however this option increases the memory significantly since every
trapezoid node should keep an order of the number of slabs contained in
it.\tcnote[Michal]{number of leafs in the matching tree composing the leaf
in the graph G. Each leaf corresponds to a single query path to the trapezoid}

Thus, using the depth which is an upper bound of the longest query path can help.
The risk involved in this replacement, is that one would initiate a rebuild
based on the large depth, while the longest query path is rather short.
(An example is the search structure for the scenario in
figure~\ref{fig:depth_vs_log_qlen}).
This may lead to more rebuilds than necessary, making the preprocessing
time violate the expected limit.



As shown in~\cite{CG-alg-app}, the expected longest query path length is
$O(\log{n})$.
We conjecture that the expected DAG depth is logarithmic as well.
If this conjecture is proved to be true, it would support the use of depth
instead of longest query path length, since the expected depth-query ratio
would be $O(1)$.
Our experiments imply that these values are indeed close.

\begin{table}[b]
  \begin{center}
    \begin{tabular}{| c | c | c | c |}
    \hline
    \# arrangement edges &~\depth &~\lqpl &~\depth/\lqpl \\
    \hline
    43 & 17 &	17&	1\\
    \hline
    51	& 17	& 16	& 1.06 \\
    \hline
    53	& 20 &	20 & 1 \\
    \hline
    1940& 35& 34 & 1.03 \\
    \hline
    2340 & 40 & 40 &1 \\
    \hline
    56142 &	61	& 57	& 1.07\\
    \hline
    56556 &	60 &	55 &	1.09 \\
    \hline
    59920	& 61	& 56	& 1.08 \\
    \hline
    215114	& 71	& 70	& 1.01 \\
    \hline
    231652	& 79	& 74	& 1.03 \\
    \hline
    237908 &	68 &	66 &	1.03 \\
    \hline
    \end{tabular}
  \end{center}
  \caption{\depth/\lqpl ratio in search structures for arrangements of non-intersecting random segments (random permutation, no rebuilds)}
  \label{tbl:rand_segs_ratio}
\end{table} 

\begin{table}[!htp]
  \begin{center}
    \begin{tabular}{| c | c | c | c | c |}
    \hline
    \# Voronoi sites & \# arrangement edges & Average~\depth/\lqpl \\
    \hline
    300 &	881 &	1.024 \\
    \hline
    1000 &	2977 &	1.022 \\
    \hline
    10000 &	29973 &	1.058 \\
    \hline
    100000	& 299968 &	1.078 \\
    \hline
    \end{tabular}
  \end{center}
  \caption{Average~\depth/\lqpl ratio in a search structure for a Voronoi diagram of random sites with random permutation of the Voronoi edges}
  \label{tbl:rand_sites_ratio}
\end{table} 

\begin{table}[!htp]
  \begin{center}
    \begin{tabular}{| c | c | c | c | c |}
    \hline
    \# Voronoi sites & \# arrangement edges & Average~\depth/\lqpl \\
    \hline
    300 &	598 &	1.006 \\
    \hline
    1000 &	1998 &	1 \\
    \hline
    10000 &	22812 &	1.07 \\
    \hline
    100000 &	287214 &	1.144 \\
    \hline
    \end{tabular}
  \end{center}
  \caption{Average~\depth/\lqpl ratio in a search structure for a Voronoi diagram of random sites on a circle and an origin with random permutation of the Voronoi edges}
  \label{tbl:circle_sites_ratio}
\end{table}

\subsecspacea\subsection{A Solution for Static \PL}\subsecspaceb

We describe here an optional construction algorithm with an expected
$O(n\log{n})$ preprocessing time that creates a data structure whose
length of longest query path is at most $O(\log{n})$.
Given a planar partition of $n$ non-intersecting curves, use the random
incremental insertion in order to build the search structure, as described
in~\cite{Mulmuley1990_fast_planar_part_alg},~\cite{Seidel1991_simp_fast_inc_rand_td}.
Similarly to the variation elaborated in Section~\ref{sec:impl_details},
defined in~\cite{CG-alg-app}, we will perform the following tests:
During every insertion the length of the path to the trapezoid into which
the left endpoint falls should be measured.
If it exceeds $c_1\log{n}$ for a suitable constant $c_1$, a rebuild will
be issued, since it implies that there is a valid search path whose length
is large enough.
After every insertion one must verify that the size of the data structure
does not exceed $c_2n$, for a different constant $c_2$.

These two conditions promise that the expected construction time would not
exceed $O(n\log{n})$.

[-- It is not enough to just verify this , an example will be the recursive
extended to the right --]

Now, it is possible to run the method that computes the length of the longest
query path while mimicking the Trapezoidal Search Tree, as described earlier.
We run it for $c_3n\log{n}$ steps, for a certain constant $c_3$.
If it returned with an answer than the length of the longest query path is valid.
Otherwise, the longest query path length violates the allowed bounds, and a
rebuild should be issued.
The expected running time of this method is $O(n\log{n})$, however, since we
stop it after $O(n\log{n})$ steps \tcnote[Michal]{maybe use time instead of steps?},
the overall running time of deciding if the longest query is too long is
still $O(n\log{n})$.

This implies that that total preprocessing time of this construction os a
\PL search structure is expected $O(n\log{n})$.

[-- this solution will not work for dynamic --]
} 
\secspace\section{Nearest Neighbor Search in Guaranteed $O(\log n)$ Time}\secspace
\label{sec:nearest neighbor}
\label{sec:nn}

As stated in the Introduction, we were challenged by the claim of
Birn et al.~\cite{SANDERS2010_simp_fast_nn} that guaranteed logarithmic
nearest-neighbor search can be achieved via efficient point location
on top of the Voronoi Diagram of the input points, but that this approach
{\em ``does not seem to be used in practice"}.
With this section we would like to emphasize that such an approach is available
and that it should be considered for use in practice.
Using the RIC planar point location, the main advantage would be
that query times are stable and independent of the actual scenario.

\subsecspacea\subsection{Nearest Neighbor Search via Voronoi Diagram}\subsecspaceb
\label{subsec:guaranteed_nn}

Given a set $P$ of $n$ points, which we wish to
preprocess for efficient point location queries,
we first create a Delaunay triangulation (DT) which takes $O(n\log n)$
expected time. The Voronoi diagram (VD) is then obtained by dualizing.
Using a sweep, the arrangement representing the VD,
which has at most $3n-6$ edges, can be constructed in $O(n\log{n})$ time.
However, taking advantage of the spatial coherence of the edges,
we use a more efficient method that directly inserts VD edges while
crawling over the DT.
The resulting arrangement is then further processed by our RIC implementation.
If Conjecture~\ref{con:depth} is true then this takes expected
$O(n\log{n})$ time.
Alternatively, it would have been possible to implement the
solution presented in Subsection~\ref{ssec:static_sub_sol}
(for which we can prove expected $O(n\log{n})$ preprocessing time).

\subsecspacea\subsection{Nearest Neighbor Search via Full Delaunay Hierarchy}\subsecspaceb

The full Delaunay hierarchy  (FDH) presented in~\cite{SANDERS2010_simp_fast_nn}
is based on the fact that one can find the nearest neighbor by performing a
greedy walk on the edges of the Delaunay triangulation (DT).
The difference is that the FDH keeps all edges that appear during the randomized
construction~\cite{AmentaCR2003_inc_construct_con_brio} of the DT
in a flattened $n$-level hierarchy structure,
where level $i$ contains the DT of the first $i$ points.
Thus, a walk that starts at the first point is accelerated due to long
edges that appeared at an early stage of the construction process
while the DT was still sparse.
The FDH is a very light, easy to implement, and fast 
data structure with expected $O(n\log{n})$ construction time
that achieves an expected $O(\log{n})$ query path length.
However, a query may take $O(n)$ time since the degree of nodes
can be linear.
\new For the experiments we used two  exact variants: a basic exact version (EFDH) and a (usually faster)  version (FFDH) that first performs a walk using inexact floating point  arithmetic and then continues with an exact walk.



\subsecspacea\subsection{Experimental Results}\subsecspaceb
\label{subsec:nn_experimental_results}

We compared our implementation for \NN search using the RIC \PL on the
Voronoi-diagram (\NAME) to the following exact methods:
EFDH, FFDH,
\CGAL's Delaunay hierarchy (CGAL\_DH)~\cite{Devillers2002_del_hierarchy},
and \CGAL's kd-tree (CGAL\_KD).%
\footnote{Due to similar performance we elided the kd-tree implementation
in ANN~\cite{MountA1997_ann}.}

\new All experiments have been executed on a Intel(R) Core(TM) i5 CPU M 450
with 2.40GHz, 512 kB cache and 4GB RAM memory,
running Ubuntu 10.10.
Programs were compiled using \texttt{g++} version  4.4.5
optimized with \texttt{-O3} and \texttt{-DNDEBUG}.
The left plot of Figure~\ref{fig:nn} displays the total query time
in a random scenario, in which both input points and query points are
randomly chosen within the unit square.
Clearly, all methods have logarithmic query time, however due to larger constants
\NAME is slower.
The other plot presents a combined scenario of
$(n-\lfloor\log{n}\rfloor)$ equally spaced input points on the unit circle
and $\lfloor\log{n}\rfloor$ random outliers. The queries are random points
in the same region.
In this experiment the CGAL\_KD and \NAME are significantly faster and maintain a
stable query time.
A similar scenario that was tested contains equally spaced input points on a circle
and a point in the center with random query points inside the circle.
The differences there are even more significant than in the previous scenario.
As for the preprocessing time in all tested scenarios,
obviously \NAME cannot compete with the fast construction time of the other methods.

\begin{figure}[t]
\vspace{-15pt}
\includegraphics[width=0.48\textwidth]{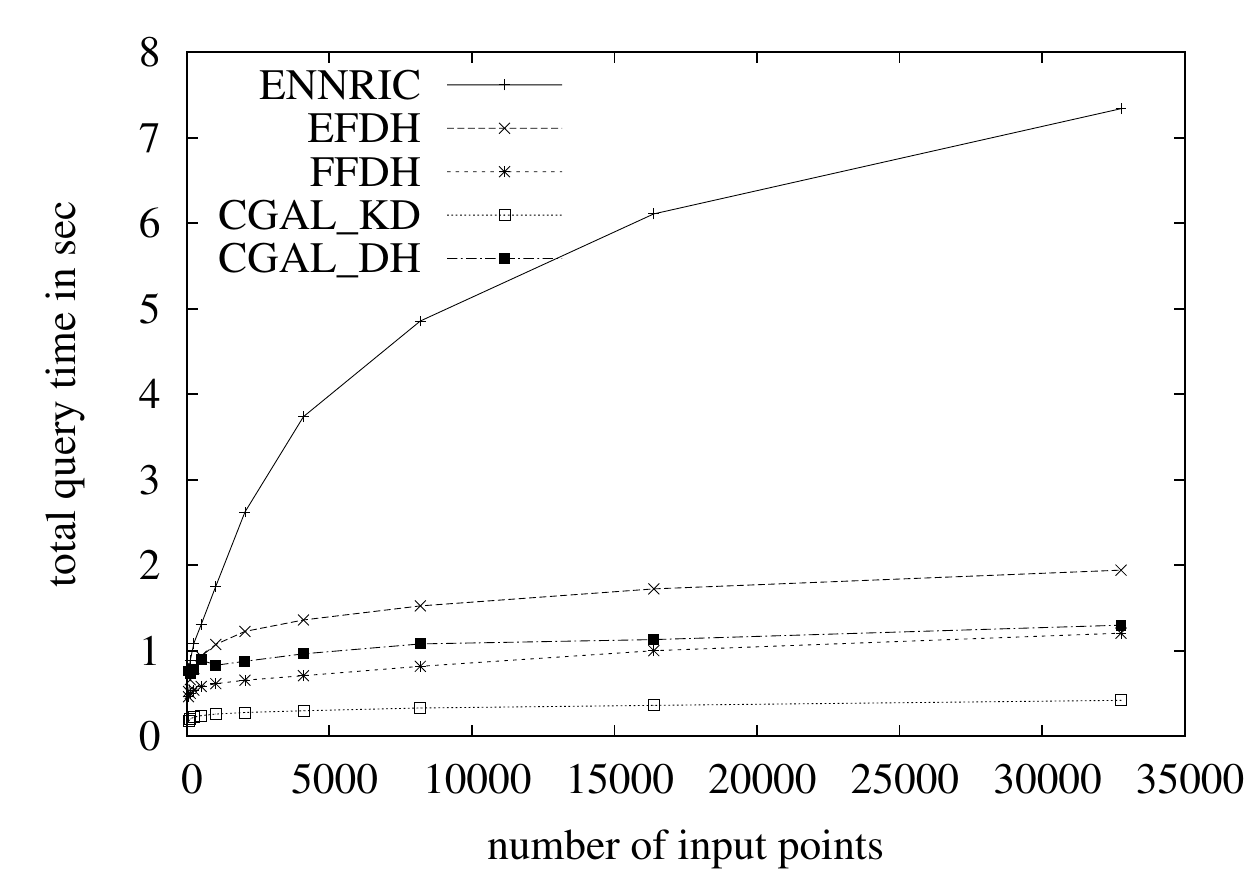}
\includegraphics[width=0.48\textwidth]{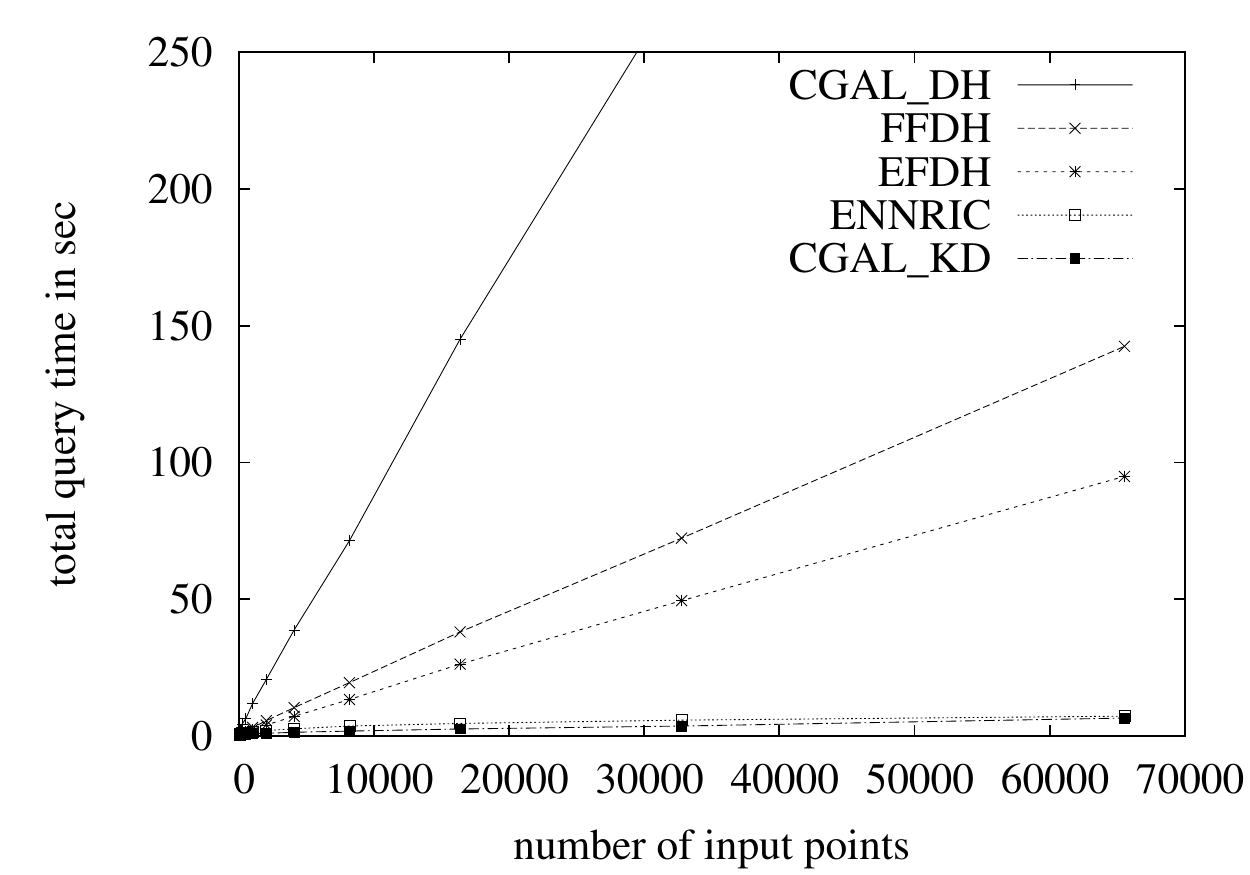}
\caption{
Performance of 500k \NN queries for different methods on two scenarios:
(left) random points; (right) circle with outliers. }
\label{fig:nn}
\vspace{-15pt}
\end{figure} 
\secspace\section{\protect\CGAL's New RIC Point Location}\secspace
\label{sec:impl_details}

With this article we announce our revamp of \CGAL's implementation
of planar point location via the randomized incremental construction
of the trapezoidal map,
which is going to be available in the upcoming \CGAL release 4.1.

Like the previous implementation by Oren Nechushtan~\cite{FlatoHHNE2000_planar_maps_cgal},
it is part of the ``2D Arrangements'' package~\cite{cgal:wfzh-a2-12} of \CGAL.
It allows both insertions and deletions of edges.
The implementation is exact and covers all degenerate cases.
Following the {\em generic-programming paradigm}~\cite{a-gps-99} it can be
easily applied to linear geometry but also to non-linear geometry such as
algebraic curves or B\'ezier curves.
The main new feature, and this is what triggered this major revision, is the support
for unbounded curves, as it was introduced for the ``2D Arrangements'' package
in~\cite{bfhmw-scmtd-07}, enabling point location on
two-dimensional parametric surfaces (e.g., spheres, tori, etc.) as well.

In addition we did a major overhaul of the code basis. In particular, we maintain
the depth~\depth of the DAG as described in
Section~\ref{sec:depth_vs_longest_query} such that~\depth is accessible
in constant time.
Thus we can now guarantee logarithmic query time after every operation.
Moreover, the data structure now operates directly on the entities of
the arrangement. In particular, it avoids copying of geometric data which
can significantly reduce the amount of additional memory that is
used by the search structure. This is important, since due to the
generic nature of the code it is not clear whether the
geometric types (user provided) are  referenced.

To the best of our knowledge,
this is the only available implementation of a point location method
with a guaranteed logarithmic query time
that can handle two-dimensional subdivisions to this generality.
Furthermore, it is the fastest available \PL method, in terms of query time,
for \CGAL arrangements.%
\footnote{A comparison to \CGAL Landmarks \PL~\cite{HaranH08}
is given in the Appendix~\ref{sec:landmarks}.}

\ignore{
 introduced a new and lighter design, aiming to reduce
the memory consumption while removing
redundant data.
As opposed to the previous implementation, we also avoid
unnecessary data copying by employing handled data members
of the underlying arrangement class.

It maintains two structures: The trapezoidal map and the search DAG,
and a rebuild mechanism is used whenever
the latter passes predefined thresholds~\cite{CG-alg-app}.
(\depth, the depth of the DAG, is used in the conditions,
instead of the longest possible query path~\lqpl.)

\ignore
{
We now describe our recent overhaul of the \CGAL implementation
of the trapezoidal map RIC for \PL.
The previous implementation~\cite{FlatoHHNE2000_planar_maps_cgal}
for \CGAL's ``2D Arrangements" package~\cite{cgal:wfzh-a2-12}
follows the \emph{Generic Programming} paradigm, and as such
is modular and can be easily applied on versatile
geometric entities, e.g., not only linear curves but
algebraic curves and B\'ezier curves as well.
It is also general, aiming to handle all possible degeneracies
and robustness issues, and dynamic,
allowing both insertions and deletions.
It maintains two structures: The trapezoidal map and the search DAG,
and a rebuild mechanism is used whenever
the latter passes predefined thresholds~\cite{CG-alg-app}.
(\depth, the depth of the DAG, is used in the conditions,
instead of the longest possible query path~\lqpl.)
}
\ignore
{
We now describe our recent overhaul of the \CGAL implementation
of the trapezoidal map RIC for \PL.
The previous implementation~\cite{FlatoHHNE2000_planar_maps_cgal}
for \CGAL's ``2D Arrangements" package~\cite{cgal:wfzh-a2-12}
maintains two structures: The trapezoidal map and the search DAG.
As it follows the \emph{Generic Programming} paradigm, it is modular
and can be easily applied on versatile
geometric entities, e.g., not only linear curves but
algebraic curves and B\'ezier curves as well.
It is also general, aiming to handle all possible degeneracies
and robustness issues.
And since both insertions and deletions of curves are allowed,
it is also dynamic.
A rebuild mechanism that is used whenever
the structure passes predefined thresholds~\cite{CG-alg-app}
is included as well.
(However, the depth of the structure~\depth is used in the conditions,
instead of the longest possible query path~\lqpl.)
}
\ignore{
We present here our revision of the \CGAL implementation
of the trapezoidal map RIC for \PL.
The previous implementation~\cite{FlatoHHNE2000_planar_maps_cgal}
for \CGAL's ``2D Arrangements" package~\cite{cgal:wfzh-a2-12}
maintains two structures: The trapezoidal map and the search DAG.
It is general and complex, aiming to handle all possible degeneracies
and robustness issues, and also modular, thus can be easily applied on
versatile geometrical entities, e.g., not only linear curves but
algebraic curves and B\'ezier curves as well.
It is dynamic, allowing both insertions and deletions of curves,
and includes a rebuild mechanism that
is used whenever the structure passes predefined thresholds~\cite{CG-alg-app}.
(However,~\depth is used in the conditions, instead of~\lqpl.)
}

The upcoming implementation incorporates the new concepts for
open side traits (as well as contracted, closed, and identified ones)
as introduced in~\cite{bfhmw-scmtd-07}, hence it is also applicable
for arrangements of unbounded curves or arrangements of
two-dimensional parametric surfaces (e.g., sphere, torus, etc.).
It includes a unified representation both for regular vertices
and fictitious\footnote{Vertices on an open boundary} ones.

\first{
In the previous implementation,
rebuilds were very frequent,
since the depth and size thresholds~\cite{CG-alg-app}
were often violated.
Both values were periodically computed on-demand using costly computations;
Especially the depth computation could take quadratic time.
A more fundamental issue is that the final structure
for a static scene
is not verified to have a valid depth and size,
and thus it essentially does not guarantee $O(\log{n})$
query time.
In the revised implementation, we maintain the
size and depth of the DAG during the construction,
and allow access to them, when required, in $O(1)$ time.
Maintaining the size is trivial and can be done in constant time,
but for the depth it is a bit more tricky.
Each leaf stores its depth and an updated depth can be
propagated to new leaves. In case of a
merge the depth is set to the maximum of the predecessors plus one.
The maximum depth (\depth) is stored in a separate variable and is updated
as soon as the depth of a new leaf exceeds this value.
It can be shown that in each operation the depth-propagation time
is proportional to the number of visited nodes.
Our revised implementation also includes an aggregated insertion
method for a range of curves
into an empty data structure, designed especially for
static scenes as described in~\cite{CG-alg-app}.
}
{
In the previous implementation, the rebuilds were very frequent,
making the construction time longer than necessary.
A rebuild is issued if a condition
on the structure's size or depth is violated~\cite{CG-alg-app}.
Both were computed on-demand using costly computations.
The size computation was problematic\michal{It was a bug, I guess we  wouldn't want to mention this exactly}, returning
a much larger value than the real one,
whereas the depth computation was correct,
however, could take quadratic time.
In the revised implementation, we reduced the construction time,
while fixing the existing inaccuracies.
Instead of the ``lazy" approach for computing the size and depth,
we maintain these values during the construction,
and allow access to them, when required, in $O(1)$ time.
Maintaining the size is trivial and can be done in constant time,
but for the depth it is a bit more tricky.
Each leaf stores its depth and an updated depth can be
propagated to new leaves. In case of a
merge the depth is set to the maximum of the predecessors plus one.
The maximum depth (\depth) is stored in a separate variable and is updated
as soon as the depth of a new leaf exceeds this value.
It can be shown that in each operation the depth-propagation time
is proportional to the number of visited nodes\michal{unclear...}.
This, of course, is not possible for~\lqpl.
Our revised implementation also includes an aggregated insertion
method for a range of curves
into an empty data structure,
that follows the reshuffling notion for
achieving the expected preprocessing bounds~\cite{CG-alg-app}.\michal{can be omitted}
}

\second
{
We used a thorough clean up, aiming to reduce
the memory consumption, by introducing a new
and lighter design avoiding
redundant data.
As opposed to the previous implementation, we avoid
unnecessary data copying by employing data members
of the underlying arrangement class.
}
{
We introduced a new and lighter design, aiming to reduce
the memory consumption while removing
redundant data.
As opposed to the previous implementation, we also avoid
unnecessary data copying by employing handled data members
of the underlying arrangement class.
}
\michal{I would like to say that we use handles instead of copies, and in case the traits class uses a non-handled x\_monotone\_curve for instance, the revised implementation does not copy it around, whereas the previous one does}
\michal{wasn't sure where should I insert the \tt{Arr\_trapezoid\_ric\_point\_location}}

{\bf Conclusions:}
The revised implementation can now guarantee $O(\log{n})$ query time
and $O(n)$ size for static scenes. It avoids redundant size computations
and maintains the depth during the construction.
It uses handled data instead of copies, and by that bounds the memory
used for each entity.
And finally, it supports unbounded curves,
as well as other represented types.\michal{bad sentence}

\ignore{

\subsecspacea\subsection{Delaunay Triangulation to an Arrangement of Voronoi Implementation}\subsecspaceb

\label{subsec:del_to_vor_impl}
Algorithm~\ref{alg:del_to_vor} performs a BFS traversal on the dual of the Delaunay graph, that is, on the Voronoi diagram, starting from a random Delaunay finite face.
The triangle is pushed into a queue maintained by the algorithm.
While there are still triangles in the queue, pop the first triangle and add the duals of the unvisited triangle edges into the arrangement, and push the unvisited neighboring triangles (finite triangulation faces) into the queue.
\begin{algorithm}
\caption{Delaunay to Arrangement of Voronoi ($t$, $arr$)}
\label{alg:del_to_vor}
\begin{algorithmic}[1]
    \STATE $fn \leftarrow  number\_of\_finite\_faces$
    \IF[If there are no finite triangulation faces]{$fn = 0$ }
    	\RETURN Handle\_degenerate\_parallel\_lines($t$)	
	\ELSE
        \STATE $f0 \leftarrow  finite\_face$ \COMMENT {start from a finite triangulation face}
        \STATE $v0 \leftarrow  create\_arrangement\_vertex(dual(f))$
        \STATE $push\_queue(f)$
        \WHILE{queue is not empty} 
            \STATE $f \leftarrow pop\_queue()$;         
            \FOR{$i = 0 \to 2$}
                \STATE $e_i \leftarrow triangle\_edge(f,i)$
                \STATE $create\_arrangement\_edge(e_i\rightarrow dual())$  \COMMENT{Creates each edge only once, adds necessary vertices}
            \ENDFOR
            \FOR{$i = 0 \to 2$}
                \STATE $f_i \leftarrow neighboring_triangle(f,i)$
                \IF{$f_i$ is not finite}
                    \STATE $push\_queue(f_i)$
                \ENDIF
            \ENDFOR
        \ENDWHILE
	\ENDIF

\end{algorithmic}
\end{algorithm}
}

} 
\secspace\section{Open Problem}\secspace
\label{sec:conclusions}
\label{sec:further_work}
Prove Conjecture~\ref{con:depth}, that is, prove that it is possible
to rely on the depth~\depth of the DAG and still expect only a constant number
of rebuilds. This solution would not require any changes to the current implementation.\\

\noindent
\textbf{Acknowledgement:} The authors thank Sariel Har-Peled for sharing
Observation~1, which is essential to the expected
$O(n \log n)$ time algorithm for producing a worst-case linear-size and
logarithmic-time point-location data structure.

\ignore{
\secspace\section{Open Problem}\secspace
\label{sec:conclusions}
\label{sec:further_work}


\new Obviously we desire an algorithm
that constructs a data structure of guaranteed linear size and with guaranteed
logarithmic query time for which on can prove an expected runtime of $O(n \log{n})$.
We see two options for this:

(i) Show that
the algorithm \emph{compute\_max\_search\_path\_length} presented in
Section~\ref{ssec:static_sub_sol} actually has expected $O(n\log{n})$ runtime.
The currently given bound is trivial, since it is based on the
argument that the algorithm follows expected $O(n\log{n})$ paths of expected
$O(\log{n})$ length each.
However, due to its recursive nature the algorithm does not treat every path
separately. Thus, it might be possible to achieve a better bound.

(ii) Prove Conjecture~\ref{con:depth}, that is, prove that it is possible
to rely on the depth~\depth of the DAG and still expect only a constant number
of rebuilds. This solution would not require any changes to the current implementation.
}

\ignore{
For a static subdivision of size $n$ we gave in
Section~\ref{ssec:static_sub_sol} an algorithm that constructs
a linear size data structure with guaranteed logarithmic query
time in expected $O(n\log^2{n})$ time.
It uses a different algorithm that computes~\lqpl in the same time.
This is the, somehow, immediate bound
reached by locating expected $O(n\log{n})$ queries
with an expected $O(\log{n})$ query path length each.
However, since our algorithm is recursive, it may be possible
to find a tighter bound for its expected running time.
For dynamic settings, however,
where one is allowed to insert or delete edges afterwards,
this algorithm is incompatible.
Therefore, we opted for a strategy the checks the depth~\depth of the DAG
on the fly, as~\depth can be made accessible in constant time.
This strategy is more elegant, as it allows additional insertions and
deletions to the data structure without further modification.
However, for this strategy we can not guarantee an expected $O(n\log n)$ time
in a static setting
since it relies on Conjecture~\ref{con:depth}, which remains to be proven.
}

\ignore{\em
For a static  subdivision of size $n$
we gave in Section~\ref{ssec:static_sub_sol} an algorithm that constructs
a linear size data structure with guaranteed logarithmic query time in
expected $O(n\log n)$ time. However, the algorithm is a bit involved as it
requires to check the length of the longest query path~\lqpl after the
actual search structure is build. In particular, it does not extend to
a dynamic setting where one is allowed to insert or delete edges afterwards.
Therefore, we opted for a strategy the checks the depth~\depth of the DAG
on the fly as~\depth can be made accessible in constant time.
This strategy is more elegant as it allows additional insertions and
deletions to the data structure without further modification.
However, for this strategy we can not guarantee an expected $O(n\log n)$ time
since it relies on Conjecture~\ref{con:depth}, which remains to be proven.
}

\michal{However, since we use a recursive algorithm for computing~\lqpl is it possible to find a tighter bound?}

\ignore{
\mike{Open question, is there a proof that allows to use~\depth?}

\mike{I moved the cloud argument from the introduction to here.}

A common example, nowadays, is \emph{cloud computing}.
The cloud, containing strong computing infrastructure,
enables memory-costly computations that can be requested remotely by
end-point users.
A maps application, e.g., may perform its preprocessing
on the cloud servers, allowing clients to
send queries and receive responses in a guaranteed time.

\mike{also the generic code}

Several solutions would not support non-linear subdivisions.
Instead, the non-linear curves should be approximated
by a set of linear curves, implying a tradeoff
between exactness and complexity.

\michal{ motivate the planar pl on surfaces,
maybe add citations to prove different application domains}

\mike{I assume that is to go into further work. But does not seem to be
important since NN is not our main issue.}
Use EPIC with modified traits that will construct the
underlying curves only on-demand. The result will be
an inexact point location that is much lighter since
it does not hold the complicated constructed Voronoi
vertices and  guaranteed to be logarithmic.
After getting the inexact PL result we perform an
exact search ``around" the result to get the exact
 NN. (I'm not sure this will work with the DAG)

\mike{What needs to be done, or improved, for \PPL via RIC. }
} 



\bibliography{bibliography}

\begin{thebibliography}{10}

\bibitem{SANDERS2010_simp_fast_nn}
Birn, M., Holtgrewe, M., Sanders, P., Singler, J.:
\newblock Simple and fast nearest neighbor search.
\newblock In: Workshop on Algorithm Engineering and Experiments. (2010)  43--54

\bibitem{Mulmuley1990_fast_planar_part_alg}
Mulmuley, K.:
\newblock A fast planar partition algorithm, i.
\newblock J. Symb. Comput. \textbf{10}(3/4) (1990)  253--280

\bibitem{Seidel1991_simp_fast_inc_rand_td}
Seidel, R.:
\newblock A simple and fast incremental randomized algorithm for computing
  trapezoidal decompositions and for triangulating polygons.
\newblock J. Comput. Geom. \textbf{1} (1991)  51--64

\bibitem{Kirkpatrick1983_opt_stch_subdivisions}
Kirkpatrick, D.G.:
\newblock Optimal search in planar subdivisions.
\newblock SIAM J. Comput. \textbf{12}(1) (1983)  28--35

\bibitem{Devillers2002_del_hierarchy}
Devillers, O.:
\newblock The {D}elaunay hierarchy.
\newblock Int. J. Found. Comput. Sci. \textbf{13}(2) (2002)  163--180

\bibitem{SeidelA2000_wc_query_comp}
Seidel, R., Adamy, U.:
\newblock On the exact worst case query complexity of planar point location.
\newblock J. Algorithms \textbf{37}(1) (2000)  189--217

\bibitem{DobkinL1976_multidim_srch_prblm}
Dobkin, D.P., Lipton, R.J.:
\newblock Multidimensional searching problems.
\newblock SIAM J. Comput. \textbf{5}(2) (1976)  181--186

\bibitem{Preparata1981_trpz_graph_pl}
Preparata, F.P.:
\newblock A new approach to planar point location.
\newblock SIAM J. Comput. \textbf{10}(3) (1981)  473--482

\bibitem{SarnakT1986_pl_perst_trees}
Sarnak, N., Tarjan, R.E.:
\newblock Planar point location using persistent search trees.
\newblock Commun. ACM \textbf{29}(7) (1986)  669--679

\bibitem{LeeP1976_sep_chains_pl}
Lee, D.T., Preparata, F.P.:
\newblock Location of a point in a planar subdivision and its applications.
\newblock In: ACM Symposium on Theory of Computing (STOC). STOC '76, New York,
  NY, USA, ACM (1976)  231--235

\bibitem{EdelsbrunnerGS1986_opt_pl_mon_subdiv}
Edelsbrunner, H., Guibas, L.J., Stolfi, J.:
\newblock Optimal point location in a monotone subdivision.
\newblock SIAM J. Comput. \textbf{15}(2) (1986)  317--340

\bibitem{Snoeyink2004_handbook_disc_comp_geom_pl}
Snoeyink, J.:
\newblock Point location.
\newblock In Goodman, J.E., O'Rourke, J., eds.: Handbook of Discrete and
  Computational Geometry.
\newblock CRC Press LLC, Boca Raton, FL (2004)  767--785

\bibitem{CG-alg-app}
de~Berg, M., van Kreveld, M., Overmars, M., Schwarzkopf, O.:
\newblock Computational Geometry: Algorithms and Applications. Third edn.
\newblock Springer-Verlag (2008)

\bibitem{bfhmw-scmtd-07}
Berberich, E., Fogel, E., Halperin, D., Melhorn, K., , Wein, R.:
\newblock Arrangements on parametric surfaces {I}: General framework and
  infrastructure.
\newblock Mathematics in Computer Science \textbf{4} (2010)  67--91

\bibitem{CG-intro-rand-alg}
Mulmuley, K.:
\newblock Computational geometry - an introduction through randomized
  algorithms.
\newblock Prentice Hall (1994)

\bibitem{GY-TSR-80}
Guibas, L.J., Yao, F.F.:
\newblock On translating a set of rectangles.
\newblock In: STOC. (1980)  154--160

\bibitem{as-cdaaa-13}
Alt, H., Scharf, L.:
\newblock Computing the depth of an arrangement of axis-aligned rectangles in
  parallel.
\newblock In: Proceedings of the 26th European Workshop on Computational
  Geometry (EuroCG), Dortmund, Germany (March 2010)  33--36

\bibitem{AmentaCR2003_inc_construct_con_brio}
Amenta, N., Choi, S., Rote, G.:
\newblock Incremental constructions con brio.
\newblock In: Symposium on Computational Geometry. (2003)  211--219

\bibitem{MountA1997_ann}
Mount, D.M., Arya, S.:
\newblock Ann: A library for approximate nearest neighbor searching
  \url{http://www.cs.umd.edu/~mount/ANN/}.

\bibitem{FlatoHHNE2000_planar_maps_cgal}
Flato, E., Halperin, D., Hanniel, I., Nechushtan, O., Ezra, E.:
\newblock The design and implementation of planar maps in {CGAL}.
\newblock ACM Journal of Experimental Algorithmics \textbf{5} (2000) ~13

\bibitem{cgal:wfzh-a2-12}
Wein, R., Berberich, E., Fogel, E., Halperin, D., Hemmer, M., Salzman, O.,
  Zukerman, B.:
\newblock {2D} arrangements.
\newblock In: {CGAL} User and Reference Manual.
\newblock {4.0} edn. {CGAL Editorial Board} (2012)

\bibitem{a-gps-99}
Austern, M.H.:
\newblock Generic Programming and the {STL}.
\newblock Addison-Wesley (1999)

\bibitem{HaranH08}
Haran, I., Halperin, D.:
\newblock An experimental study of point location in planar arrangements in
  {CGAL}.
\newblock ACM Journal of Experimental Algorithmics \textbf{13} (2008)

\bibitem{ahns-cofar-08}
Alon, N., Halperin, D., Nechushtan, O., Sharir, M.:
\newblock The complexity of the outer face in arrangements of random segments.
\newblock In: Symposium on Computational Geometry 2008. (2008)  69--78

\end{thebibliography}


\begin{thebibliography}{10}
\bibitem{ahns-cofar-08}
Alon, N., Halperin, D., Nechushtan, O., Sharir, M.:
\newblock The complexity of the outer face in arrangements of random segments.
\newblock In: Symposium on Computational Geometry 2008. (2008)  69--78
\end{thebibliography}
\bibliographystyle{splncs}


\pagebreak
\newpage
\appendix
\section{Detailed Results of \depth/\lqpl Ratio Experiments}
\secspace
\label{sec:dl_ratio_results_appndx}

This appendix contains all experiments concerning the 
ratio of the depth \depth of a DAG and the length \lqpl of 
the longest query path in the same DAG. 
In addition to those that are mentioned in 
Section~\ref{ssub:dept_path_ratio_exp}
we also tested the special scenarios that we constructed in 
Section~\ref{ssub:depth_path_ratio} in order to achieve lower 
bounds on the worst case ratio of \depth and \lqpl. 
The set of segments is as depicted in Section~\ref{ssub:depth_path_ratio}.
However, in the experiments here we choose random order of insertion. 

In all experiments the two values hardly differ, that is, 
the largest ratio that we were able to observe was around $1.3$.
In the special scenarios, this value was even lower and in many 
cases \depth and \lqpl actually did not differ at all. 
However, for very large random scenarios, see Figure~\ref{fig:DL_rand_segs}, 
\depth was always a bit larger than \lqpl, but on the other hand
the largest observed ratio even went down to less than~$1.2$. 

This indicates that \depth and \lqpl behave sufficiently similarly. 
One can expect that an algorithm that 
rebuilds the DAG  as soon as \lqpl becomes larger than $c\log{n}$ 
would actually rebuild more often than an algorithm that rebuilds 
as soon as \depth becomes larger than  $1.3 c\log{n}$,
for some constant $c>0$.
This led us to venture Conjecture~\ref{con:depth}.

\subsection{Experiments}
\label{subsec:ratio-results-experiments}

We tested the ratio in the following scenarios:
\begin{enumerate}
  \item 
    Random line segments:
    Each segment was created from two random points in $[-1,1]^2$. 
    The number of generated segments was $\lfloor1.5^k\rfloor$,
    for $6\leq{k}\leq{19}$. The reported results are the average of 20 builds of 
    the search structure for the same random scenario. 
    See Figure~\ref{fig:DL_rand_segs} (left).

  \item 
    Voronoi diagram of random points: 
    For each scenario we took $2^k$ random sites for $6\leq{k}\leq{15}$.
    For each $k$ we generated $10$ different point sets and created 
    the search structure $7$ times, that is, the reported results are the 
    average of $70$ builds.
    See Figure~\ref{fig:DL_voronoi} (right).

  \item
    Lower bound construction for $O(\sqrt{n})$ ratio:
    We created the special scenarios according to the description in 
    Section~\ref{ssub:depth_path_ratio}, 
    each containing $k^2$ segments for $k\in\{10\cdot2^i|i\in\{1,\dots,6\}\}$.
    See Figure~\ref{fig:DL_sqrt} (left)
   
    \item
    Lower bound construction for $O(n/\log{n})$ ratio: 
    We created the special scenarios according to the description in 
    Section~\ref{ssub:depth_path_ratio}, 
    each containing $2^k$ segments for $8\leq{k}\leq{17}$.
    See Figure~\ref{fig:DL_sqrt} (right)
\end{enumerate}

\begin{figure}[h]
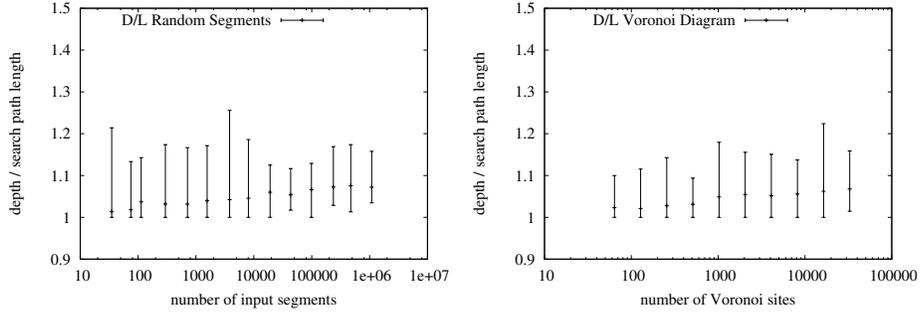

  \includegraphics[width=0.5\textwidth]{bench/DL_ratio/length_vs_depth_random_segments.pdf}
  \includegraphics[width=0.5\textwidth]{bench/DL_ratio/length_vs_depth_voronoi.pdf}
  \caption{
    \depth/\lqpl for arrangement of random segments (left) and Voronoi Diagram 
    of random sites (right). Plots show average value with error bars.
    }
  \label{fig:DL_voronoi} \label{fig:DL_rand_segs}
\end{figure}

\begin{figure}[h]
  \includegraphics[width=0.5\textwidth]{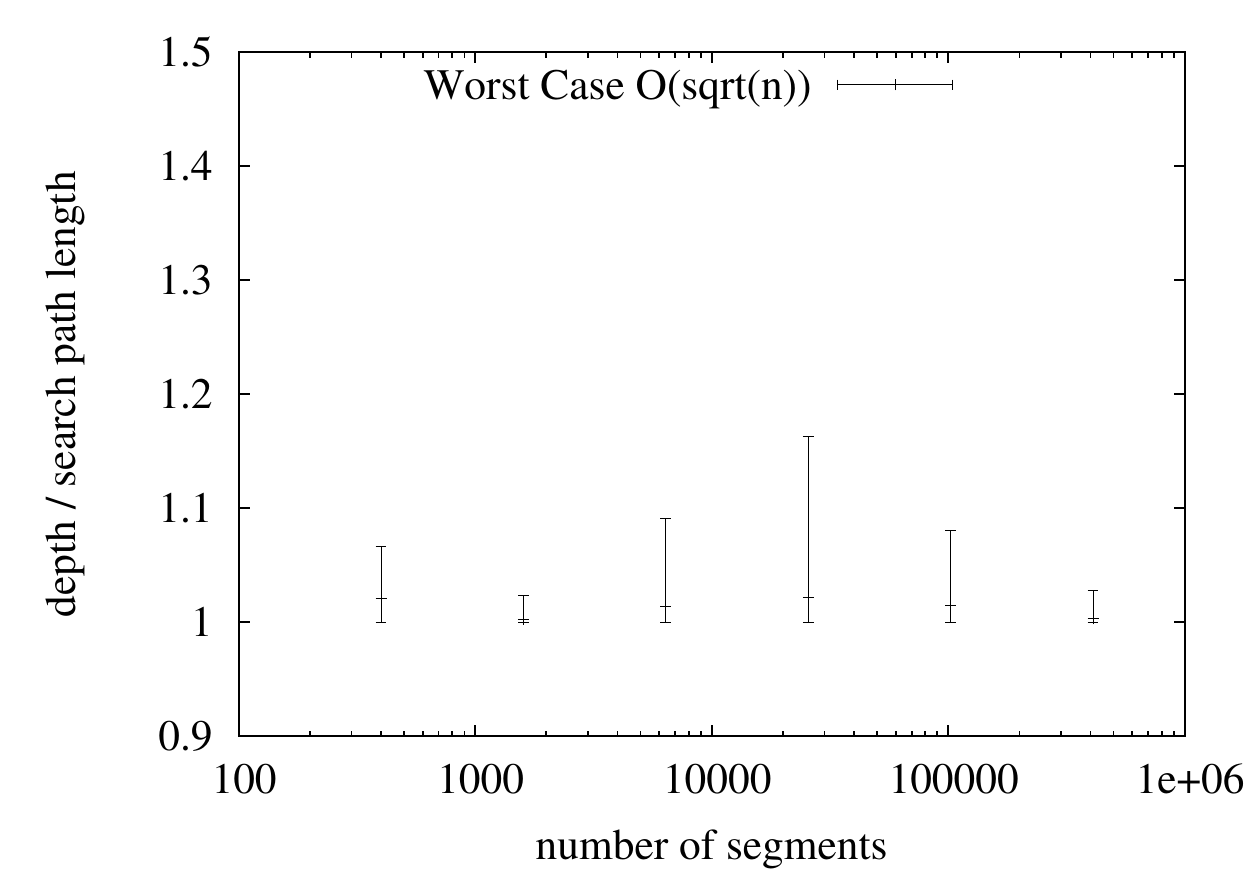}
  \includegraphics[width=0.5\textwidth]{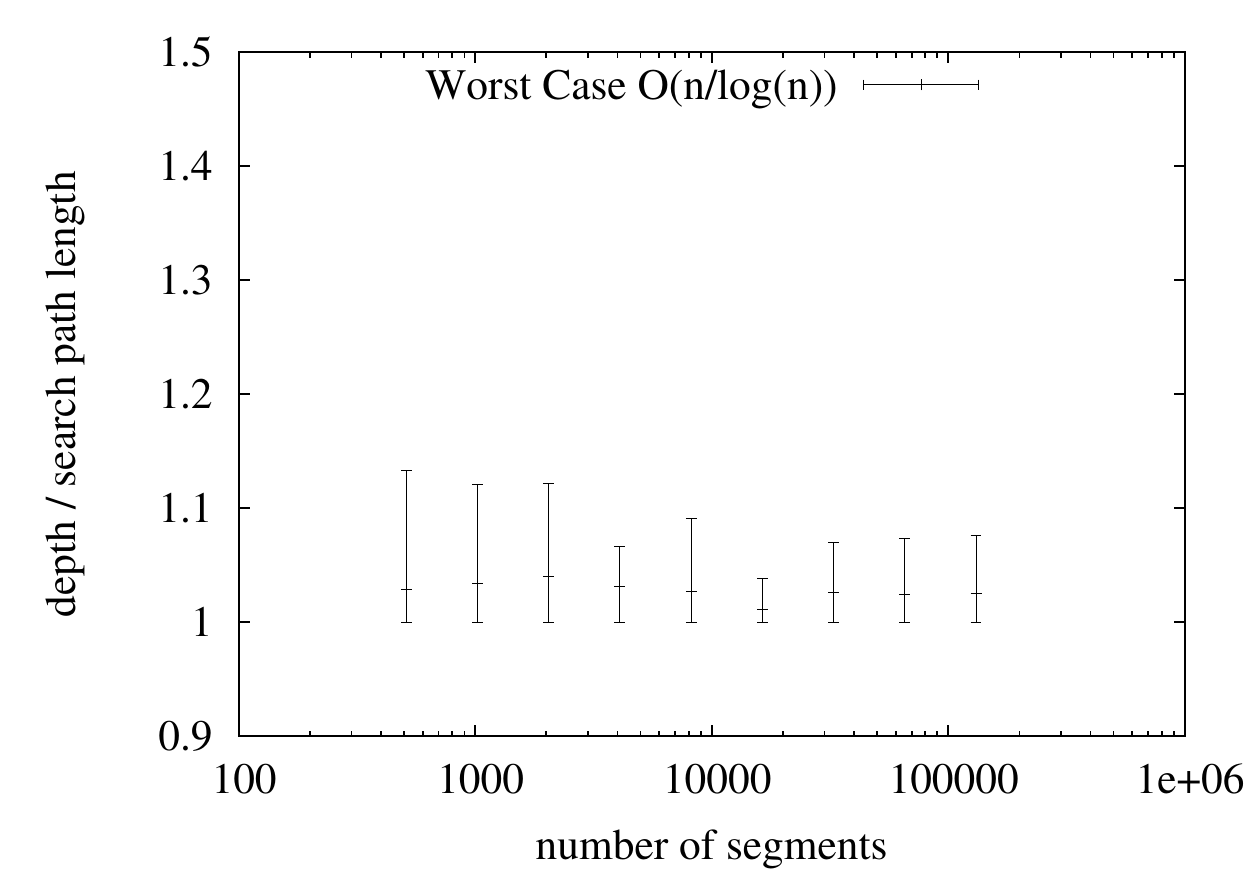}
  \caption{
    \depth/\lqpl for the example with worst case ratio $O(\sqrt{n})$ (left), 
    and $O(n/\log{n})$ (right). Plots show average value with error bars.
  }
  \label{fig:DL_sqrt} \label{fig:DL_n_logn}
\end{figure}

\clearpage
\section{Comparison to the \CGAL's Landmarks Point Location}
\secspace
\label{sec:landmarks}

We emphasize that the new 
implementation of the trapezoidal-map random incremental construction for
\PL (RIC) performs better than all other \PL methods available 
for \CGAL arrangements.

\begin{figure}[h]
  \centering
  \includegraphics[width=0.5\textwidth]{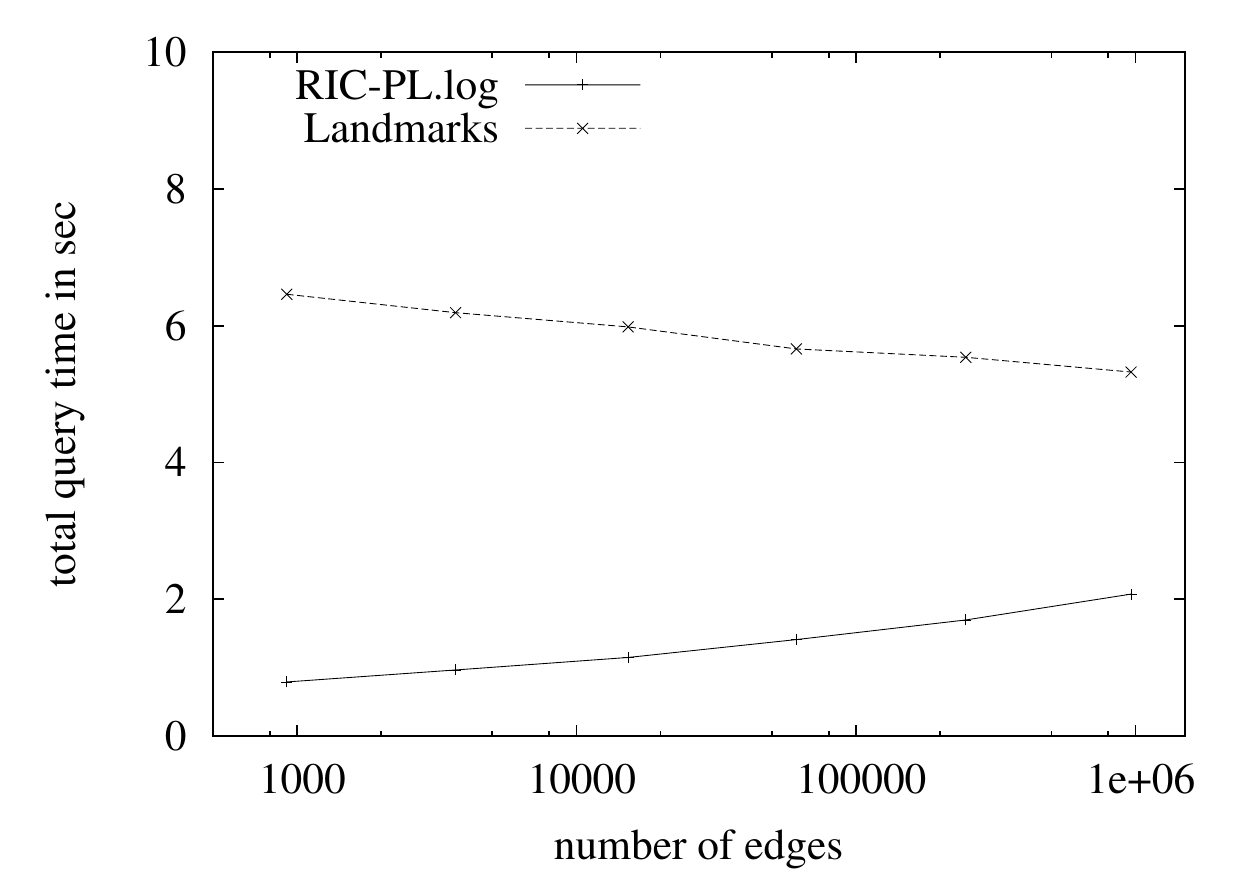}
  \caption{Comparing the total query time for 50k queries in random subdivision of a varying size using both the \CGAL Landmarks and the RIC \PL methods.}
  \label{fig:landmarks}
\end{figure}

Figure~\ref{fig:landmarks} displays the difference in the 
total query time in different arrangements of random segments
using the RIC vs. the Landmarks (LM) \PL.
The landmarks generator in this experiment created landmarks on a 
$\lceil\sqrt{V}\rceil \times \lceil\sqrt{V}\rceil$ grid 
($V$ is the number of vertices in the arrangement).
In~\cite{HaranH08} it is shown that 
for subdivisions of random segments the LM 
using the grid generator performs better than other 
\PL methods implemented in \CGAL,
other than the RIC.
As expected, the new RIC implementation outperforms the LM.
Obviously, the RIC query time is logarithmic. 
The slight improvement of the query time of the LM can be explained 
by the fact that, at some point,
while the number of input segments increases
the average complexity of a face decreases,
an effect that was for instance studied in~\cite{ahns-cofar-08}.


\clearpage
\secspace\section{Computing the Depth of ${\cal A}(T)$}\secspace
\label{sec:comp_arrangement_depth}
We would like to describe a linear space algorithm with $O(n\log n)$ runtime
for computing the depth of a collection of open trapezoids with the following properties:
their bases are $y$-axis parallel (vertical walls) and if
the top or bottom curves of two different trapezoids intersect
then the two curves overlap completely in their joined $x$-range.
The depth of such a collection is the maximum
number of trapezoids containing a common point,
that is, we are only interested in points located on faces of the arrangement of all trapezoids.
In Subsection~\ref{ssub:alg_arr_depth_rectangles}
we restate the algorithm of~\cite{as-cdaaa-13}
such that the general position assumption can be dropped.
The restated algorithm can handle rectangles
with independently open or closed boundaries,
and is more general than what we essentially need.
Subsection~\ref{ssec:reduction} defines a reduction from 
the collection of open trapezoids $T$ into a collection $R$ of
open axis-parallel rectangles such that the maximum depth in ${\cal A}(R)$
is the same as the maximum depth in ${\cal A}(T)$.
Finally, in Subsection~\ref{ssub:alg_arr_depth_general}
we describe a modification for the
restated algorithm such that it can compute the depth of 
the arrangement of all trapezoids 
created during the
construction of the DAG.

\subsecspacea\subsection{An Algorithm for Computing the Depth of a Collection
 of Axis-aligned Rectangles}\subsecspaceb
\label{ssub:alg_arr_depth_rectangles}

The algorithm of Alt \& Scharf~\cite{as-cdaaa-13}
is an $O(n\log n)$ algorithm that computes
the depth of a collection of axis-aligned rectangles
in general position, using $O(n)$ space.
We give here a minor modification
which does not assume general position,
i.e., rectangles may share boundaries.
Moreover, it can consider each of the four boundaries
of a rectangle as either open or closed.

Given a set of finite rectangles,
the set of all $x$-coordinates of the
vertical sides of the input rectangles is first sorted.
Let $x_1,x_2,...,x_m$, $m<2n$ be the sorted set of $x$-coordinates.
The ordered set of intervals $\cal I$, is defined as follows;
For $i\in{1,2,...,m-1}$,
the $2(i-1)$th and $2(i-1)+1$th intervals in the set $\cal I$
are $[x_{i},x_{i}]$ and $(x_{i},x_{i+1})$, respectively.
The last interval is $[x_m,x_m]$.
A balanced binary tree $T$ is then constructed, 
holding all intervals in $\cal I$ in its leaves,
according to their order in $\cal I$.
An internal node represents the union of
the intervals of its two children,
which is a continuous interval.
In addition, each internal node $v$ stores in a variable $v.x$
the $x$-value of the merge point between the intervals
of its two children. Since we extended the algorithm
to support both open or closed boundaries, internal nodes
also maintain, a flag indicating whether the merge point
is to the left or to the right of the $x$-value.

According to the algorithm in~\cite{as-cdaaa-13},
a sweep is performed using a horizontal line from
$y=\infty$ to $y=-\infty$.
The sweep-line events occur when a rectangle starts or ends,
i.e., when top or bottom boundary of a rectangle is reached.
Since the rectangles are not in general position,
several events may share the same $y$-coordinate.
In such a case, the order of event processing in 
each $y$-coordinate is as follows:
\begin{enumerate}
  \item Closing rectangle with open bottom boundary events
  \item Opening rectangle with closed top boundary events
  \item Closing rectangle with closed bottom boundary events
  \item Opening rectangle with open top boundary events
\end{enumerate}
The order of event processing within each of these
four groups in a specific $y$-coordinate is not important.

\ignore{
The algorithm in~\cite{as-cdaaa-13} requires
each internal node to maintain in a variable $x$
the $x$-value of the merge point between the intervals
of its two children. Since we extended the algorithm
to support both open or closed boundaries, internal nodes
need to maintain, in addition to the $x$-value of the
intervals' merge point, a flag indicating whether the merge point
is to the left or to the right of the $x$-value.
}

The basic idea of the algorithm is that
each sweep event updates the appropriate
leaves of the tree $T$
(update the relevant leaves, spanning the covered
intervals, with the current event).
Therefore, each leaf holds a counter $c$ for the
number of covering rectangles in the current position
of the horizontal sweep line.
In addition, each leaf maintains in
a variable $c_m$ the maximal number of
covering rectangles for this leaf seen so far.
Clearly, the maximal coverage of an interval
is the maximal $c_m$ of all leaves.
The problem with this na\"{i}ve approach
is that one such update can already
take $O(n)$ time.
Therefore, the key idea of~\cite{as-cdaaa-13}
is that when updating an event of a rectangle
whose $x$-range is $(a,b)$,
one should follow only two paths;
the path to $a$ and the path to $b$.
The nodes on the path should hold the
information of how to update the
interval spanned by their children.
In the end of the update the union of
intervals spanned by the updated nodes
(internal nodes and only 2 leaves) is $(a,b)$.

In order to hold the information in the internal nodes
each internal node holds the following variables:
\begin{itemize}
\item[$l$] 
The additive update of rectangles that were opened or closed
and cover the interval spanned by the left child of $v$
since the last traversal of that child
\item[$r$] 
The additive update of rectangles that were opened or closed
and cover the interval spanned by the right child of $v$
since the last traversal of that child
\item[$l_m$] 
A counter which is used to count the maximum of the additive update
for the left child since the last traversal of that child
\item[$r_m$] 
A counter which is used to count the maximum of the additive update
for the right child since the last traversal of that child
\end{itemize}

\noindent
A leaf, on the other hand, holds two variables:
\begin{itemize}
\item[$c$] 
The coverage of the associated interval during the sweep until the last traversal on the leaf
\item[$c_m$] 
The maximum coverage of the associated  
interval during the sweep until the last traversal on the leaf
\end{itemize}

\noindent
In realation to these values we define the following functions:
\[
 t(v) =  \left\{ \begin{array}{ll}
        u.l + t(u) & \mbox{if $v$ is the left child of $u$} \\
        u.r + t(u) & \mbox{if $v$ is the left child of $u$} \\
        0 & \mbox{if $v$ is the root} \\
        \end{array} \right. 
\]

\[
 t_m(v) =  \left\{ \begin{array}{ll}
        max(u.l_m, u.l + t_m(u)) & \mbox{if $v$ is the left child of $u$} \\
        max(u.r_m, u.r + t_m(u)) & \mbox{if $v$ is the left child of $u$} \\
        0 & \mbox{if $v$ is the root} \\
        \end{array} \right.
\]

\pagebreak
\noindent 
At any point of the sweep the following two invariants hold for every leaf $\ell$ 
and its associated interval $I$: 
\begin{itemize}
    \item  
    The current coverage of $I$ is: $\ell.c + t(\ell)$    
    \item 
    The maximal coverage of $I$ that was seen so far is: $\max(\ell.c_m, \ell.c + t_m(\ell)$
\end{itemize}


\noindent
Updating the structure with an event is done as follows:
Let $I$ be the $x$-interval spanned by the processed rectangle creating the event.
Depending on whether the rectangle starts or ends,
we set a variable $d=1$ or $d=-1$, respectively. 
We follow the two search paths to the leftmost leaf and the rightmost leaf that are covered by $I$.
In the beginning the two paths are joined until they split, for every node $w$ on this path 
(including the split node) we can ignore $d$ and simply 
update the tuple $(w.l,w.r,w.l_m,w.r_m)$ using $t(w)$ and $t_m(w)$
according to the invariants stated above. Note that this process needs to clear the 
corresponding values in the parent node as otherwise the invariants would be violated.%
\footnote{Please note that using $t(w)$ and $t_m(w)$ here
takes constant time since we only need to access the parent 
node as all previous nodes on the path towards the root are already processed.} 
After the split the paths are processed separately, we 
discuss here the left path, the behavior for the right path is symmetric. 
Let $v$ be a node on the left path. 
As long as $v$ is not a leaf we update $(v.l,v.r,v.l_m,v.r_m)$ as usual.
However, if the path continues to the left we also have to incorporate $d$ into 
$v.r$ and $v.r_m$ as the subtree to the right is covered by $I$.
If $v$ is a leaf we simply update $v.c$ and $v.c_m$ using $t(v), t_m(v)$ and $d$.
A more detailed description (including pseudo code) can be found in~\cite{as-cdaaa-13}. 
In total, this process takes $O(\log n)$ time.

\ignore{
Updating the structure with an event is done as follows;
We select $d=1$ or $d=-1$ for an open rectangle event
and a close rectangle event, respectively.
Let $I$ be the $x$-interval spanned by the processed rectangle creating the event.
We follow the two search paths to the leftmost leaf and the rightmost leaf that are covered by $I$.
At the very beginning the two paths are joined, but at some node $v$ they split.
During the traversal until $v$,
in every encountered internal node $w$,
we update the additive information
about its two child nodes ($l,r,l_m,r_m$) 
using $t(w), t_m(w)$ as defined above%
\footnote{Please note that using $t(w)$ and $t_m(w)$ here
takes constant time since we only need to access the parent 
node as all previous nodes on the path towards the root are already cleared.}.
The two search path continue to one of the
child nodes, and the relevant values are kept in temporary variables
in order to use by
the next node in the path and are then cleared from $w$.
In the split node $v$ the $v.l, v.l_m$ are kept in separate variables
in order to update the
left child of $v$ and $v.r, v.r_m$ are kept as well for updating the right child of $v$.
These four variables are then cleared from $v$.
The updates are done while preserving the invariants.
Now, when the two search paths split, each path is considered separately.
Wlog we describe the update along the left search path after the split node.
There are three cases that are handled:
\begin{itemize}
\item Whenever the path takes the left child of a parent node $u$,
we know that the processed rectangle covers all intervals
spanned by the right child of $u$.
Therefore, the $u.r, u.r_m$ are updated with $d$.
The values stored in $u.l, u.l_m$, however, are kept in temporary variables
before they are cleared from $u$,
and will be used by the left child of $u$.
\item Whenever the path takes the right child of a node $u$,
$u.l_m$ and $u.l$ are updated with the values that are kept for $u$
from the its parent. $u.r, u.r_m$ are then kept in temporary variables
before they are cleared from $u$,
and will be used by the right child of $u$.
\item When the search path reaches a leaf $\ell$,
the leaf is updated with $d$ since it is clearly covered
by the processed rectangle, by definition.
Now the counter $c$ stored in $\ell$ holds the current coverage of the
associated interval, since all vertices on the path from root
to $\ell$ have been cleared.
Similarly, the counter $c_m$ holds the current maximum coverage
of the associated interval.
More precisely, both $t(\ell) = 0$ and $t_m(\ell) = 0$, since all nodes
towards the root are cleared.
\end{itemize}
The right search path is processed in a similar manner.
In total the update takes $O(\log n)$ time.
}


Finally, in order to find the maximal number of rectangles covering
an interval a final propagation from root to leaves is needed, such that
all $l, r, l_m, r_m$ values of internal nodes are cleared. This is done using one
traversal on $T$.
Now, the maximal number of rectangles covering an interval
is the maximal $c_m$ of all leaves of $T$.

Clearly, the running time of the algorithm is $O(n\log n)$,
since constructing the tree and sorting the $y$-events takes
$O(n\log n)$ time. Updating each of the $2n$ $y$-events takes $O(\log n)$
time, and the final propagation of values to the leaves takes $O(n)$ time.
The algorithm uses $O(n)$ space.


We remark that the above algorithm is not optimal in
memory usage. A more efficient variant which stores less variables
in the nodes of the tree can be easily implemented.

\subsecspacea\subsection{A Depth Preserving Reduction}\subsecspaceb
\label{ssec:reduction}
%
Let $T$ be a collection of open trapezoids with $y$-axis parallel bases
with the following property:
if the top or bottom curves of two different trapezoids intersect
then the two curves overlap completely in their joined $x$-range.
Let ${\cal A}(T)$ denote the arrangement of the trapezoids in $T$.
We describe a reduction from $T$ to $R$, where $R$ is a collection of
axis-parallel rectangles, such that the maximum depth
in ${\cal A}(R)$ equals to the maximum depth in ${\cal A}(T)$.

\noindent
The following observation is by~\cite{GY-TSR-80}:\\

\noindent
{\bf Observation 2.}
{\em
Let $S$ be a set of interior disjoint $x$-monotone curves.
There exists a partial order on $S$, such that if the
curves are moved one at a time to the direction of
$y=-\infty$ according to this order,
then each curve can be moved without (interior) intersecting
any of the remaining curves.
This order can be extended to a total order.
}

\begin{definition}
\label{def:ord}
Let $C$ be a set of interior disjoint $x$-monotone curves.
For two such curves $cv_i, cv_j \in C$,
let the open interval $(a,b)$ be the $x$-range of $cv_i$
and the open interval $(c,d)$ be the $x$-range of $cv_j$.
We define the total order $\prec$ as follows:\\
If $x\text{-range}(cv_i) \bigcap x\text{-range}(cv_j)=\emptyset$ then:\\
\mbox{\ \ } $cv_i \prec cv_j \Leftrightarrow b \leq c$\\
If $x\text{-range}(cv_i) \bigcap x\text{-range}(cv_j) \neq\emptyset$ then:\\
\mbox{\ \ } $cv_i \prec cv_j \Leftrightarrow
 cv_i(x) < cv_j(x) \text{ for some } x \in x\text{-range}(cv_i) \bigcap x\text{-range}(cv_j)$.
\end{definition}

\begin{definition}
Let $Ord$ denote a function $Ord: C\rightarrow\{1,...,n\}$
returning the order of a given $x$-monotone curve $cv \in C$
when sorting $C$ according to~$\prec$.
\end{definition}

\begin{definition}
\label{def:reduction}
We define a reduction from $T$ to $R$ as follows;
Every trapezoid $t \in T$ is reduced to a rectangle $r \in R$,
such that:
\begin{itemize}
  \item $t$ and $r$ have the same $x$-range,\\
  i.e. $(left(t) = left(r))$ and $(right(t) = right(r))$,
  where $left$ and $right$ denote the left $x$-value
  and the right $x$-value of $t$ (or $r$), respectively.
  \item The top and bottom edges of $r$ (accessible by $top$
  and $bottom$ methods) lie on $y=Ord(top(t))$ and
  $y=Ord(bottom(t))$, respectively.
\end{itemize}
\end{definition}

As shown in~\cite{CG-alg-app}, one can partition the plane into
vertical slabs by passing a vertical line through every endpoint
of the subdivision, and then partition each slab into regions by
intersecting it with all possible curves in the subdivision.
This defines a decomposition of the plane into at most $2(n+1)^2$
regions.

\begin{lemma}
\label{lemma:regions_bijection}
Let $Regions(arr)$ denote the collection of regions
of an arrangement $arr$, as defined above.
For any region $a_t \in Regions({\cal A}(T))$
let $a_r \in Regions({\cal A}(R))$ be the matching
rectangular region to $a_t$.
The collection $Regions({\cal A}(R))$
of all such rectangular regions spans the plane.
\end{lemma}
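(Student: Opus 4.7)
The plan is to establish a bijection between $Regions({\cal A}(T))$ and $Regions({\cal A}(R))$ that preserves the slab decomposition, and then argue that since the original regions cover the plane, so must the matching rectangular ones. The argument naturally splits into a horizontal part (slab preservation), a vertical part (order preservation within each slab), and a gluing step.

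First I would observe that by Definition~\ref{def:reduction} the reduction preserves $x$-ranges: the vertical walls of rectangles in $R$ lie on exactly the same $x$-coordinates as the vertical walls of trapezoids in $T$. Consequently the vertical slabs induced by the endpoints in the definition of the decomposition are identical in ${\cal A}(T)$ and ${\cal A}(R)$, and these slabs already tile the plane horizontally. It therefore suffices to show that, slab by slab, the matching rectangular regions cover each slab entirely.

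Fix a slab $S$ and let $C_S$ be the set of curves (tops or bottoms of trapezoids in $T$) whose $x$-range contains $S$. The curves in $C_S$ are pairwise non-crossing inside $S$: by the hypothesis of the lemma, whenever two such curves meet they overlap completely on their joint $x$-range, so inside $S$ they are stacked vertically without intersection. Here is the key step: the restriction of $Ord$ to $C_S$ gives exactly this vertical order. Indeed, any two curves $cv_i, cv_j \in C_S$ share $S$ in their $x$-ranges, so the second clause of Definition~\ref{def:ord} applies; by Observation~2 the total order $\prec$ is consistent with being ``lower than'' wherever two curves coexist in $x$, hence $cv_i(x) < cv_j(x)$ for $x \in S$ if and only if $Ord(cv_i) < Ord(cv_j)$. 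I expect this step to be the main obstacle, as one must carefully verify that the total order extending the partial order from Observation~2 really does coincide locally with vertical order on every slab, and that the $x$-range-disjoint clause of Definition~\ref{def:ord} does not interfere.

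Given the order preservation, the conclusion follows easily. In ${\cal A}(T)$ the slab $S$ is cut into horizontal strips by the curves in $C_S$, one strip between each pair of consecutive curves in the vertical order, plus one unbounded strip below the lowest and one above the highest curve; these strips partition $S$. Under the reduction each $cv \in C_S$ becomes a horizontal edge at height $Ord(cv)$, and by the order-preservation the rectangular regions in ${\cal A}(R)$ restricted to $S$ appear in the same vertical sequence. This yields the desired bijection between regions of ${\cal A}(T)$ inside $S$ and regions of ${\cal A}(R)$ inside $S$, and the union of the rectangular strips is all of $S$. Taking the union over all slabs proves that the matching rectangular regions cover the whole plane. \qed
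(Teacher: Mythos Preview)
Your argument is correct and follows essentially the same route as the paper: the slabs are preserved because $x$-ranges are preserved, and within each slab the regions remain adjacent because a shared boundary curve maps to a shared $y$-level under $Ord$. The paper's proof is in fact a one-liner (``Trivial. The slabs remain the same and within each slab the rectangular regions remain adjacent''), so you have supplied considerably more detail than the authors do.

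One remark: the step you flag as ``the main obstacle'' is not an obstacle at all. For two curves $cv_i, cv_j \in C_S$ the $x$-ranges overlap (they both contain $S$), so only the second clause of Definition~\ref{def:ord} is relevant, and it says precisely that $cv_i \prec cv_j$ iff $cv_i$ lies below $cv_j$ on the overlap; hence $Ord$ restricted to $C_S$ \emph{is} the vertical order inside $S$, immediately. The disjoint-$x$-range clause never enters for curves in $C_S$, so there is nothing to check there. Indeed, for the lemma as stated you do not even need full order preservation: it suffices that two vertically adjacent regions in $S$ share a boundary curve $cv$, hence their images share the line $y = Ord(cv)$ and remain adjacent.
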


\begin{proof}
Trivial. The slabs remain the same and
within each slab the rectangular regions remain adjacent.
\qed
\end{proof}

\ignore{
\begin{proof}
Definition~\ref{def:reduction} implies that
$x\text{-range}(a_t) = x\text{-range}(a_r)$.
Therefore, the collection of rectangular regions
$Regions({\cal A}(R))$ spans the plane horizontally.
We now need to verify that it also spans the plane vertically.
We'll show that for two adjacent regions
$a_{t_{i}}, a_{t_{j}} \in Regions({\cal A}(T))$
for which $bottom(a_{t_{i}}) = top(a_{t_{j}})$,
then  $bottom(a_{r_{i}}) = top(a_{r_{j}})$ is true,
where $a_{r_{i}}, a_{r_{j}} \in Regions({\cal A}(R))$
are the matching rectangular regions
to regions $a_{t_{i}}, a_{t_{j}}$, respectively.
Since $bottom(a_{t_{i}}) = top(a_{t_{j}})$,
then $Ord(bottom(a_{t_{i}})) = Ord(top(a_{t_{j}}))$.
Therefore, according to Definition~\ref{def:reduction},
$bottom(a_{r_{i}})$ and $top(a_{r_{j}})$ lie
on the same horizontal line.
Since the $x$-range remains unchanged and the regions
are rectangular,
we get that $bottom(a_{r_{i}}) = top(a_{r_{j}})$.
In other words, since the collection $Regions({\cal A}(t))$
spans the plane
we conclude that the collection $Regions({\cal A}(R))$
spans the plane as well.
\qed
\end{proof}
}

\begin{lemma}
\label{lemma:reduction_from_trpz_to_rect}
Let $a_t \in Regions({\cal A}(T))$ be a region,
whose matching region is $a_r \in Regions({\cal A}(R))$.
The number of rectangles in $R$ that cover $a_r$
is at least the number of trapezoids in $T$ that cover $a_t$.
In other words,
for every $t \in T$ that covers $a_t$
its matching rectangle $r \in R$ covers $a_r$.
\end{lemma}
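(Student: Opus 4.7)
The plan is to prove the equivalent (and stronger) statement given in the second sentence of the lemma: that for any trapezoid $t \in T$ covering the region $a_t$, the matching rectangle $r$ covers the matching region $a_r$. Because the reduction $t \mapsto r$ is a bijection from $T$ to $R$, this immediately yields the claimed inequality between the covering multiplicities of $a_r$ and $a_t$. The argument then splits naturally into a horizontal part (controlling $x$-ranges) and a vertical part (controlling $y$-ranges via the order $\prec$).

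For the horizontal part, Definition~\ref{def:reduction} gives that $t$ and $r$ share their $x$-range, and likewise $a_t$ and $a_r$. Since $a_t \subseteq t$, the $x$-range of $a_t$ is contained in the $x$-range of $t$, hence the $x$-range of $a_r$ is contained in that of $r$. For the vertical part, let $c^t_{\mathrm{top}}=top(t)$, $c^t_{\mathrm{bot}}=bottom(t)$, $c^a_{\mathrm{top}}=top(a_t)$, $c^a_{\mathrm{bot}}=bottom(a_t)$. Because $t$ covers $a_t$, throughout the $x$-range of $a_t$ the curve $c^a_{\mathrm{top}}$ lies weakly below $c^t_{\mathrm{top}}$, and $c^a_{\mathrm{bot}}$ lies weakly above $c^t_{\mathrm{bot}}$. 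Since all curves in the subdivision are pairwise interior disjoint and $x$-monotone, two distinct curves that share an $x$-range can meet only at endpoints, so any weak vertical comparison at a single point of the shared range extends consistently to the whole shared range. By Definition~\ref{def:ord}, this is exactly what is needed to conclude $c^a_{\mathrm{top}} \preceq c^t_{\mathrm{top}}$ and $c^t_{\mathrm{bot}} \preceq c^a_{\mathrm{bot}}$, hence $Ord(c^a_{\mathrm{top}}) \le Ord(c^t_{\mathrm{top}})$ and $Ord(c^t_{\mathrm{bot}}) \le Ord(c^a_{\mathrm{bot}})$.

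Applying Definition~\ref{def:reduction} once more, the top and bottom edges of $a_r$ sit at heights $Ord(c^a_{\mathrm{top}})$ and $Ord(c^a_{\mathrm{bot}})$, while the edges of $r$ sit at heights $Ord(c^t_{\mathrm{top}})$ and $Ord(c^t_{\mathrm{bot}})$. Combined with the containment of $x$-ranges established above, this gives $a_r \subseteq r$; since both are open, this is exactly the statement that $r$ covers $a_r$, completing the proof.

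The main technical subtlety I expect to handle is the case where $c^a_{\mathrm{top}}$ coincides with $c^t_{\mathrm{top}}$ (or the analogous case at the bottom), i.e.\ when $a_t$ touches the top boundary of $t$, as well as the case of curves that share only an endpoint. Because both trapezoids and rectangles are treated as open sets and the comparison $\preceq$ handles equality via identity of curves, weak inequalities translate cleanly from the curve setting to the $y$-coordinate setting, and no strict-versus-non-strict issue arises. Everything else is a direct application of the order $\prec$ guaranteed by Observation~2 and the definition of the reduction.
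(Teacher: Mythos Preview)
Your proposal is correct and follows essentially the same approach as the paper: both argue that the $x$-range containment is preserved by the reduction, and that $top(t)$ lying above $top(a_t)$ in the shared $x$-range forces $Ord(top(t))\geq Ord(top(a_t))$ (and symmetrically for the bottom), so the reduced rectangle $r$ covers $a_r$. Your version is slightly more explicit about the bijection yielding the counting inequality and about the boundary/coincidence cases, but the argument is the same.
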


\begin{proof}
Let $\{t_1,t_2,...,t_m\}\subseteq T$
be the set of trapezoids, ordered by creation time,
such that for every $i\in\{1,...,m\}$, $t_i$ covers $a_t$.
Let $\{r_1,r_2,...,r_m\}\subseteq R$ be the set of matching
rectangles, such that $r_i$ matches $t_i$ for $i \in \{1...m\}$.
For any $t_i$, since $t_i$ covers $a_t$
we get that $x\text{-range}(a_t)\subseteq x\text{-range}(t_i)$.
By Definition~\ref{def:reduction} the
$x$-ranges remain the same after the reduction,
and therefore $x\text{-range}(a_r) \subseteq x\text{-range}(r_i)$.
Since $t_i$ covers $a_t$ then we also get that
in the shared $x$-range $top(t_i)$ is above or on $top(a_t)$
and $bottom(t_i)$ is below or on $bottom(a_t)$.
According to Definition~\ref{def:reduction},
it immediately follows that
$Ord(top(t_i)) \geq Ord(top(a_t))$.
In other words, $top(r_i)$ is above or on $top(a_r)$.
Similarly, $bottom(r_i)$ is below or on $bottom(a_r)$.
We conclude that $r_i$ covers $a_r$.
\qed
\end{proof}

\begin{lemma}
\label{lemma:reduction_from_rect_to_trpz}
Let $a_r\in Regions({\cal A}(R))$ be a rectangular region,
whose matching region is $a_t \in Regions({\cal A}(T))$.
The number of trapezoids in $T$ that cover $a_t$
is at least the number of rectangles in $R$ that cover $a_r$.
In other words,
for every $r \in R$ that covers $a_r$
its matching trapezoid $t \in T$ covers $a_t$.
\end{lemma}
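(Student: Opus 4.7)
The plan is to mirror the proof of Lemma~\ref{lemma:reduction_from_trpz_to_rect}, exploiting the fact that the reduction of Definition~\ref{def:reduction} preserves $x$-ranges verbatim and re-encodes the $y$-dimension through the ranking function $Ord$, which by construction agrees with the geometric ``below'' relation on any shared $x$-range.

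Let $\{r_1,\ldots,r_m\}\subseteq R$ be the rectangles covering $a_r$ and $\{t_1,\ldots,t_m\}\subseteq T$ their matching trapezoids. First I would use Definition~\ref{def:reduction} to observe that $x$-coordinates are preserved, so the hypothesis $x\text{-range}(a_r)\subseteq x\text{-range}(r_i)$ transfers immediately to $x\text{-range}(a_t)\subseteq x\text{-range}(t_i)$, which yields the first of the three conditions required for $t_i$ to cover $a_t$.

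Next I would translate the two vertical conditions. Because $r_i$ covers $a_r$, we have $Ord(top(t_i))\ge Ord(top(a_t))$ and $Ord(bottom(t_i))\le Ord(bottom(a_t))$. The key step is to show that these $Ord$-inequalities imply $top(t_i)(x)\ge top(a_t)(x)$ and $bottom(t_i)(x)\le bottom(a_t)(x)$ for every $x$ in the shared $x$-range. Definition~\ref{def:ord} only stipulates such a vertical comparison at ``some'' $x$, so I would need to promote it to ``every'' $x$. This is where the standing assumption from the beginning of Subsection~\ref{ssec:reduction} enters: any two top or bottom curves of trapezoids in $T$ either have disjoint interiors over their shared $x$-range or coincide completely there. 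Hence no crossing is possible, so by Observation~2 the vertical order of any two such curves is unambiguous throughout the shared $x$-range and coincides with $\prec$; together with the agreement between $\prec$ and $Ord$ this gives exactly the required pointwise inequalities.

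The main obstacle I anticipate is cleanly handling the equality case, in which $top(t_i)$ and $top(a_t)$ (or the bottom counterparts) denote the same underlying curve. In that event $Ord$ assigns them the same value, the weak $Ord$-inequality is satisfied with equality, and the underlying curves coincide pointwise on the shared $x$-range, so the vertical containment is trivial. Combining the $x$-range containment with the two vertical containments then gives that $t_i$ covers $a_t$, which establishes the lemma.
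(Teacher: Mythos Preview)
Your proposal is correct and follows exactly the approach the paper takes: the paper's proof simply states that the argument is symmetric to that of Lemma~\ref{lemma:reduction_from_trpz_to_rect}. Your write-up is in fact more careful than the paper's, since you make explicit the step of promoting the ``some $x$'' comparison in Definition~\ref{def:ord} to a pointwise ``every $x$'' inequality via the non-crossing assumption on top and bottom curves---a step the paper leaves implicit.
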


\begin{proof}
Let $\{r_1,r_2,...,r_m\}\subseteq R$
be the set of rectangles,
such that for every $i\in\{1,...,m\}$, $r_i$ covers $a_r$.
Let $\{t_1,t_2,...,t_m\}\subseteq T$ be the set of matching
trapezoids, such that $t_i$ matches $r_i$ for $i \in \{1...m\}$.
Proving that for any $i \in \{1...m\}$, $t_i$ covers $a_t$,
is done symmetrically to the proof of Lemma~\ref{lemma:reduction_from_trpz_to_rect}.
\qed
\end{proof}

\ignore{
\begin{proof}
Let $\{r_1,r_2,...,r_m\}\subseteq R$
be the set of rectangles,
such that for every $i\in\{1,...,m\}$, $r_i$ covers $a_r$.
Let $\{t_1,t_2,...,t_m\}\subseteq T$ be the set of matching
trapezoids, such that $t_i$ matches $r_i$ for $i \in \{1...m\}$.
for any $r_i$, since $r_i$ covers $a_r$
we get that $x\text{-range}(a_r)\subseteq x\text{-range}(r_i)$.
By Definition~\ref{def:reduction} we know that both
$x\text{-range}(r_i) = x\text{-range}(t_i)$ and
$x\text{-range}(a_r) = x\text{-range}(a_t)$ are true.
Therefore, $x\text{-range}(a_t) \subseteq x\text{-range}(t_i)$
is true.
Since $r_i$ covers $a_r$ then we also get that
in the shared $x$-range $top(r_i)$ is above or on $top(r_t)$
and $bottom(r_i)$ is below or on $bottom(a_r)$.
We will show that in the shared $x$-range
of $t_i$ and $a_t$, $top(t_i)$ is above or on $top(a_t)$ as well.
According to Definition~\ref{def:reduction},
$Ord(top(t_i)) \geq Ord(top(a_t))$.
In other words, $top(t_i)$ is above or on $top(a_t)$.
Similarly, $bottom(t_i)$ is below or on $bottom(a_t)$.
We conclude that $t_i$ covers $a_t$.
\end{proof}
}

Combining Lemma~\ref{lemma:reduction_from_trpz_to_rect} and
Lemma~\ref{lemma:reduction_from_rect_to_trpz}
we conclude that the number of trapezoids in $T$
that cover a region $a_t$ equals to the number
of rectangles in $R$ that cover $a_r$, which is
the matching region to $a_t$.
The covering rectangles are the reduced trapezoids
in the set of trapezoids covering $a_t$.
Since both $Regions({\cal A}(T))$
and $Regions({\cal A}(R))$ span the plane
(Lemma~\ref{lemma:regions_bijection}),
we get the following theorem.

\begin{theorem}
\label{theorem:reduction_correctness}
Let $T$ be a collection of open trapezoids with the following properties:
their bases are $y$-axis parallel (vertical walls) and if
the top or bottom curves of two different trapezoids intersect
then the two curves overlap completely in their joined $x$-range.
Let ${\cal A}(T)$ denote the arrangement of the trapezoids in~$T$.
$T$ can be reduced to a collection of open axis-parallel rectangles~$R$,
such that the maximum depth in ${\cal A}(R)$ equals
to the maximum depth in ${\cal A}(T)$.
\end{theorem}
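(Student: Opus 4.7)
The plan is to assemble the theorem directly from the pieces already developed in Subsection~\ref{ssec:reduction}: the total order~$\prec$ from Definition~\ref{def:ord} (which exists thanks to Observation~2 applied to the interior-disjoint top and bottom curves of the trapezoids in~$T$), the reduction from Definition~\ref{def:reduction}, and Lemmas~\ref{lemma:regions_bijection}, \ref{lemma:reduction_from_trpz_to_rect}, and~\ref{lemma:reduction_from_rect_to_trpz}. The overall strategy is to argue that the maximum depth of either arrangement is realized at some face of the slab/curve decomposition $Regions(\cdot)$, and then to use the region-by-region matching to transfer the maximum from one arrangement to the other.

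First I would fix the input: since the top and bottom curves of the trapezoids in~$T$ are interior pairwise disjoint (this is exactly the separation hypothesis in the theorem, together with the agreement on overlapping $x$-ranges), Observation~2 produces a total order~$\prec$ on the set $C$ of these curves, and $Ord$ is therefore well-defined. I would then invoke Definition~\ref{def:reduction} to build $R$ by preserving $x$-ranges and by placing the top and bottom boundaries at heights $Ord(top(t))$ and $Ord(bottom(t))$. All rectangles are taken to be open so that the boundary-open/boundary-closed distinctions in Subsection~\ref{ssub:alg_arr_depth_rectangles} do not affect the covering relation used here.

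Next I would invoke Lemma~\ref{lemma:regions_bijection} to identify every region $a_t \in Regions({\cal A}(T))$ with a unique region $a_r \in Regions({\cal A}(R))$, and note that both collections tile the plane. Any point attaining the maximum depth in ${\cal A}(T)$ lies in some region $a_t$, and symmetrically on the $R$ side. Applying Lemma~\ref{lemma:reduction_from_trpz_to_rect} to the pair $(a_t,a_r)$ shows that the depth at~$a_r$ in ${\cal A}(R)$ is at least the depth at $a_t$ in ${\cal A}(T)$, hence the maximum depth of ${\cal A}(R)$ is at least that of ${\cal A}(T)$. Applying Lemma~\ref{lemma:reduction_from_rect_to_trpz} to a region $a_r$ realizing the maximum depth of ${\cal A}(R)$ gives the reverse inequality, so the two maxima coincide.

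The subtle point I would be most careful about is the well-definedness of the order~$\prec$ in Definition~\ref{def:ord} on the actual set of top/bottom curves of~$T$: two distinct trapezoids may share a curve (in which case $Ord$ gives them the same $y$-coordinate, and the reduction is consistent), and two distinct curves may coexist in overlapping $x$-ranges without crossing, which is precisely the setting covered by Observation~2. Once this is in place, the rest of the argument is an immediate chaining of the two covering lemmas with the tiling lemma, so I do not foresee any further obstacle. \qed
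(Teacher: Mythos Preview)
Your proposal is correct and follows essentially the same route as the paper: the paper's argument (given in the paragraph immediately preceding the theorem) also combines Lemmas~\ref{lemma:reduction_from_trpz_to_rect} and~\ref{lemma:reduction_from_rect_to_trpz} to conclude that each region $a_t$ and its matching region $a_r$ are covered by the same number of trapezoids/rectangles, and then invokes Lemma~\ref{lemma:regions_bijection} (both region families tile the plane) to pass to the maximum. Your write-up is a bit more explicit---spelling out the two inequalities separately and commenting on the well-definedness of~$\prec$ and $Ord$ when curves are shared---but the underlying argument is identical.
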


\subsecspacea\subsection{Modification of Alt \& Scharf}\subsecspaceb
\label{ssub:alg_arr_depth_general}
Based on the correctness of the reduction described in Subsection~\ref{ssec:reduction}
we can extended the basic algorithm presented in Subsection~\ref{ssub:alg_arr_depth_rectangles}
to support not only collections of axis-aligned rectangles but
also collections of open trapezoids with $y$-axis parallel bases 
(vertical walls) and
non-intersecting top and bottom boundaries (if they intersect
then they overlap completely in their joined $x$-range).
The only part of the basic algorithm that should change
is the top-to-bottom sweep.
Therefore, the simple predicate in \cite{as-cdaaa-13} that is used 
for sorting the $y$-events should be replaced with a new predicate that
compares according to the reverse order of $\prec$, as given in Definition~\ref{def:ord}.

Please note that for simplicity
we assumed that no two distinct endpoints
in the original subdivision have
the same $x$-value. 
However, if this is not the case,
a lexicographical compare can be used on the endpoints
of the curves in order to define the order 
of the induced vertical walls.

\ignore{

\newcommand{\vrcm}{v . r . c_m}
\newcommand{\vlcm}{v . l . c_m}
\newcommand{\vrc}{v . r . c}
\newcommand{\vlc}{v . l . c}
\newcommand{\vc}{v . c }
\newcommand{\vcm}{v . c_m}
\newcommand{\vx}{v .x}
\newcommand{\vl}{v .l}
\newcommand{\vr}{v .r}

\begin{algorithm}[H]
\eIf{$a \leq \vx $}
{
  $\vrcm = \max\{\vrcm,\vrc+\vcm,\vrc+\vc+d\}$;\\
  $\vrc  = \vrc+\vc+d$;\\
  $\vlcm = \max\{\vlcm,\vlc+\vcm\}$;\\
  $\vlc  = \vlc+vc$;\\
  $\vc = vcm = 0$;\\
  $LS(a,\vl,d)$;\\
}{}
\caption{SearchLeft(v,a,d)}
\end{algorithm}

} 

\end{document}